\documentclass[a4paper,11pt]{article}
\usepackage[margin=25mm]{geometry}
\usepackage{amsmath,amssymb,amsthm,booktabs,graphicx,xcolor}
\usepackage[unicode,colorlinks=true,linkcolor=blue!45!black,urlcolor=blue!45!black,citecolor=blue!45!black]{hyperref}
\usepackage{enumitem}
\usepackage{float}
\usepackage{xurl}
\setlist{nosep}

\newtheorem{theorem}{Theorem}
\newtheorem{proposition}[theorem]{Proposition}
\newtheorem{lemma}[theorem]{Lemma}
\newtheorem{corollary}[theorem]{Corollary}
\newtheorem{remark}[theorem]{Remark}
\newcommand{\E}{\mathbb E}
\newcommand{\Var}{\operatorname{Var}}
\newcommand{\Cov}{\operatorname{Cov}}
\newcommand{\Pois}{\operatorname{Pois}}
\newcommand{\given}{\,\middle|\,}
\title{\large Decoding Failure Biases IBLT Count Measurements: A Gaussian-Selection Law at the Peeling Threshold}
\author{Min Wu}
\date{October 2026}
\begin{document}
\maketitle
\begin{abstract}
In a self-sizing set-reconciliation protocol, a small first-round Invertible Bloom Lookup Table (IBLT)
sketch attempts peeling decoding; after failure, the protocol estimates the difference count $d$ from
the count energy and sizes the second round from a lower quantile of the measurement. This quantile must
be computed conditionally on failure: the measurement and peeling success are determined by the same
random mapping, so conditioning can shift the measurement law.

In the ideal peeling model, inside the critical window where failure has nondegenerate probability, the
standardized energy and a statistic of the unsigned mapping that governs the peeling outcome (the
precritical score) have a joint Gaussian limit with correlation $\eta_k\rho^2$, $\rho$ being the sign
imbalance and $\eta_k$ a closed-form constant. The measurement conditional on failure is thus a
bivariate normal truncated to a half-plane, which we call the Gaussian-selection law. Since $\eta_k$
changes sign between $k=4$ and $k=5$, under one-sided imbalance the unconditional threshold undercovers
exactly when $k\ge5$ and is conservative for $k=3,4$.

The key tool is a quenched CLT for a signed quadratic form on a random overlap graph; under
asymptotically balanced signs, conditioning on any unsigned mapping event that is permutation-invariant
and has probability bounded away from zero leaves the standard normal limit unchanged. After failure the
protocol observes the total sign; inverting the selection law yields an upper confidence bound for $d$,
asymptotically conservative in the critical window and, when $d/\sqrt M\to\infty$, along the whole load
axis. The experiments use analytic constants, none fitted; the unconditional threshold carries a
selection error that does not vanish with scale, while the corrected selection-law threshold matches
per-configuration empirical quantiles.
\end{abstract}

\section{Introduction}
\subsection{Background: reading the measurement after failure}
Set reconciliation requires the two parties to recover the symmetric difference of their sets $A$ and $B$.
Database consistency verification is a canonical scenario: each side scans the whole table, compresses every row into a short fingerprint, and comparing the two tables then reduces to computing the symmetric difference of the two fingerprint sets. The number of differing elements $d=|A\triangle B|$ is unknown in advance,
and can vary by several orders of magnitude at different points in time for the same table. The protocol aims to finish in as few rounds as possible while not transmitting a sketch much larger than necessary.

The Invertible Bloom Lookup Table (IBLT) is a common structure for this task. It has $M$ cells, and each cell
stores a count, a key sum, and a checksum. Each element hashes to $k$ distinct cells and adds itself to those cells. Each side builds a
sketch and then the two sketches are subtracted cell by cell; the elements shared by both sides cancel out, leaving only the differing elements: an element only in $A$ carries sign $+1$, and an element only in $B$
carries sign $-1$. Decoding uses peeling: repeatedly find a cell containing a single differing element, read that element out, and delete it from the $k$ cells
to which it belongs. Viewing cells as vertices and the $k$ cells selected by each element as a hyperedge, peeling deletes edges incident to degree-one vertices on this random $k$-uniform
hypergraph. When the load $d/M$ is below the critical value $\alpha_c$ (approximately $0.818$ for $k=3$\cite{molloy,dw}), peeling deletes every edge with high probability; once the load passes $\alpha_c$,
peeling stalls at a nonempty residual 2-core and decoding fails. The failure probability rises from near zero to near one over an interval of width $M^{-1/2}$
around $\alpha_c$.

The capacity $M$ of the IBLT must be fixed before $d$ is known. Choose it too small and decoding fails, requiring another round; choose it too large and bandwidth and memory are wasted. Reference~\cite{prior}
proposes a self-sizing protocol that proceeds in two rounds. The first round sends a fixed small sketch; when there are few differences peeling succeeds and the round ends,
which is the case in most runs in actual deployments. Peeling failure only means the differing elements cannot be listed one by one; the count array held by the receiver is not lost, and it still records $d$.
Among the $d$ differing elements the numbers carrying $+1$ and $-1$ are written $d_+$ and $d_-$; this composition is called the sign composition, and it is summarized by the
sign imbalance $\rho=(d_+-d_-)/d$: $\rho=\pm1$ means all differences come from one side, and $\rho=0$ means the two sides each contribute half. Write $C_v$ for the count of cell $v$, the sum of the signs of the differing elements in that cell. The count energy
\begin{equation*}
T=\sum_{v}\bigl(C_v-\overline C\bigr)^2,\qquad \overline C=\frac1M\sum_vC_v,
\end{equation*}
can be computed before peeling. Each differing element contributes roughly $k$ to the energy, so $\widehat d=T/[k(1-k/M)]$ is an estimate of $d$;
for any sign composition it is exactly unbiased, with variance $2d(d-1)/(M-1)$. When the first round fails, the protocol reads
$\widehat d$, builds the second-round sketch with a fresh hash, and sets its capacity from $\widehat d$.

The estimate $\widehat d$ has random error, and sizing the capacity directly from $\widehat d$ understates $d$ with probability about one half. The protocol therefore introduces a lower
quantile $q_u<1$, takes $\widehat d/q_u$ as an upper bound on $d$, and sizes the second-round sketch from it:
\begin{equation*}
\Pr\bigl[\widehat d\ge q_u d\bigm| F\bigr]\ge1-u,
\end{equation*}
where $F$ is the event that the first-round peeling fails, and deployments take $u=0.01$. A smaller $q_u$ covers more conservatively but wastes space, and a larger one gives insufficient coverage probability,
so $q_u$ must be computed accurately.

Reference~\cite{prior} gives two ways to obtain $q_u$. The first is the chi-square approximation under deep overload:
when $d$ far exceeds the first-round capacity, failure occurs almost surely, conditioning on failure hardly changes the distribution, and the unconditional $\widehat d/d$ with $M$ fixed and
$d\to\infty$ tends to a standardized chi-square $(M-1)$ distribution. The second estimates an empirical
quantile by simulation for each configuration. The former applies only under deep overload; the latter works at any load but must be sampled separately for each configuration and yields no cross-configuration pattern.
What neither answers is the \emph{critical window}\cite{dm}, where failure and success have comparable probability.

\subsection{The problem: selection of the mapping by failure}
The lower quantile $q_u$ of the previous section comes from the unconditional distribution of $\widehat d$, but the protocol uses $\widehat d$ only after the first round has failed. If conditioning on failure changes the distribution of $\widehat d$, then a $q_u$ calibrated from the unconditional distribution no longer gives the correct coverage probability, and the second-round capacity systematically deviates from its target.

Intuitively, the count array is fixed before peeling begins, so the outcome of peeling seems unable to affect $\widehat d$. This intuition misses one point: $\widehat d$ and the success or failure of peeling are both determined by the same random mapping. The protocol reads $\widehat d$ only on mappings that leave a nonempty 2-core, which is equivalent to selecting a subset of all mappings and then measuring. Selection changes the distribution of the mapping, and may change the distribution of $\widehat d$.

Figure~\ref{fig:demo} shows this, and also reveals that the strength of the selection effect depends on the sign composition. Two sign configurations are compared on the same set of mappings (so peeling succeeds or fails identically): under one-sided signs ($\rho=1$) the $\widehat d$ of failing samples shifts right as a whole ($\mathbb E[X\mid F]=+0.609$), while under balanced signs ($\rho=0$) it is nearly unchanged ($\mathbb E[X\mid F]=+0.008$).
\begin{figure}[tbp]
\centering\includegraphics[width=\linewidth]{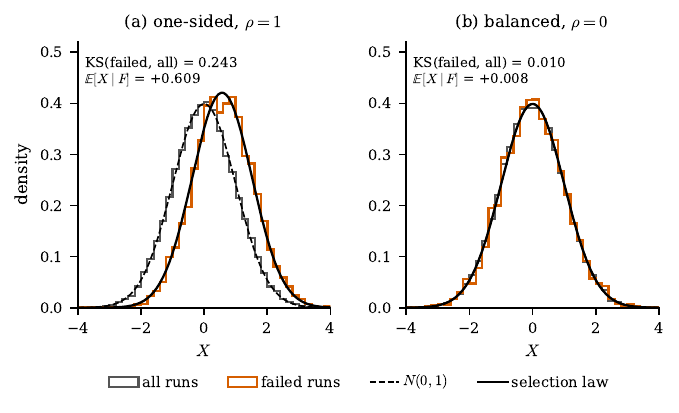}
\caption{Selection of the mapping by failure ($k=3$, $M=16384$, failure rate $0.125$, 40,000 Plain hypergraphs). (a) Distribution of $X$ over all samples and over failing samples under one-sided signs, with the solid line the selection law; (b) the same set of hypergraphs with balanced signs.}
\label{fig:demo}
\end{figure}

This dependence has a clear mechanism. The success or failure of peeling depends only on the topology of the hypergraph, not on the signs carried by the elements, but the signs determine how the count energy responds to changes in the topology. When two elements share a cell, equal signs raise the energy and opposite signs lower it. Failure selects the degree profile of the hypergraph and thereby shifts the amount of overlap between elements; for $k=3$, the case of Figure~\ref{fig:demo}, the failing hypergraphs carry more overlap. When same-sign pairs dominate ($\rho^2$ near one), this change in overlap moves the energy consistently in one direction, and the selection effect passes fully to $\widehat d$; when the signs are nearly balanced ($\rho^2$ near zero), the positive and negative contributions cancel, and the selection effect vanishes.

The selection effect is most pronounced inside the critical window. Under deep overload failure is nearly certain, so conditioning on failure hardly selects the mapping; inside the critical window the failure probability rises from zero to one, and the selection is strongest. The magnitude of the bias is controlled by $\rho^2$ and its direction by $k$, so a capacity set from the unconditional $q_u$ may be too large or too small. In database verification the differences produced by replication delay are almost all of one sign, with $\rho^2$ near one, exactly the case of maximal bias.

This paper answers two questions inside the critical window. After the first-round peeling fails, the protocol holds the estimate $\widehat d$ while the true $d$ is unknown. The distribution question asks what the limit distribution of $\widehat d$ conditional on failure is, and how it depends on $k$ and the sign composition. The inference question asks how to construct an upper bound on $d$ from the observables available after failure, so that the second-round capacity covers the true difference count with high probability.

\subsection{Main results}

\paragraph{Gaussian-selection law.}
The main result of this paper (Theorem~\ref{thm:main}) gives the limit distribution of the measurement conditional on failure inside the critical window.
We construct a precritical score from the degree histogram shortly before peeling reaches the bottleneck; it predicts how many degree-one vertices remain when peeling
reaches the bottleneck, and its sign asymptotically determines the success or failure of peeling. Inside the critical window,
the standardized measurement and the precritical score are asymptotically joint Gaussian, with correlation coefficient proportional to $\rho^2$. The measurement conditional on failure is therefore the marginal
of a bivariate normal truncated on a half-plane, which we call the Gaussian-selection law.

The selection effect decays as $\rho^2$: when the signs are asymptotically balanced the conditional distribution returns to the standard normal. The direction of the bias is determined by $k$: for $k=3,4$
the conditional mean is raised, and keeping the unconditional threshold after failure stays on the conservative side; for $k\ge5$
the bias reverses direction, and the same calibration systematically undercovers. The sign change is not obvious. Under one-sided signs more overlap always raises the count energy, so the direction of the bias is the direction in which failure moves the overlap. Peeling failure depends on the full degree profile, and the correlation between the overlap and the failure score changes sign between $k=4$ and $k=5$: failing hypergraphs carry more overlap than average for $k=3,4$ and less for $k\ge5$. This sign change means that for different values of $k$ the calibration after failure
must be corrected in opposite directions, and cannot be handled uniformly by a single empirical rule.

\paragraph{Separation of the signs.}
Theorem~\ref{thm:sign} reduces the general sign composition to the all-same-sign case. The measurement decomposes into two asymptotically independent parts: one depends only on the mapping and has magnitude proportional to $\rho^2$; the other comes only from the random assignment of signs on a fixed mapping. Failure selects only the mapping, so it affects only the first part, and the selection effect scales with $\rho^2$.

\paragraph{Upper confidence bound.}
The selection law gives the conditional distribution, but its parameters depend on the unknown $d$ and cannot be used directly to set the capacity. After failure the protocol can observe $b=d_+-d_-$; testing each candidate difference count against the selection law and inverting yields an upper confidence bound on $d$ (Theorem~\ref{thm:upper}). Compared with the two methods described above, this bound is valid inside the critical window, all its parameters are given by analytic constants, it needs no per-configuration simulation, and it does not rely on the deep-overload asymptotic approximation. After adding the mixture law of the two failure mechanisms, the bound covers the whole load axis when $d/\sqrt M\to\infty$ (Proposition~\ref{prop:axis}), and the protocol can calibrate the second-round capacity with a single formula.

\paragraph{Organization.}
Section~\ref{sec:model} fixes the model and the measurement and separates the role of the sign composition from that of the mapping.
Section~\ref{sec:window} establishes the selection law inside the critical window and analyzes how the direction of the bias depends on $k$.
Section~\ref{sec:upper} constructs the upper confidence bound from the selection law, gives the
finite-size implementation rule, and extends it to the whole load axis under $d/\sqrt M\to\infty$. Section~\ref{sec:numerics} checks the theoretical predictions one by one and concludes with a post-failure coverage
test with unknown $d$. Section~\ref{sec:related} discusses related work. Appendix~\ref{app:notation} collects the notation used throughout; Appendices~\ref{sec:main-proof}--\ref{app:loadaxis} give the complete proofs; Appendix~\ref{app:exp} contains supplementary experiments.

\section{Model, measurement, and separation of signs}\label{sec:model}
This section fixes the model and the measurement, and then separates the role of the sign composition from that of the mapping.
After the separation, the part of the measurement that varies with the mapping is scaled by $\rho^2$, and the rest is
asymptotically independent of the mapping.
Section~\ref{sec:window} then only needs to answer one question: how the failure event selects the mapping in the
all-same-sign case.
The notation used throughout is collected in Appendix~\ref{app:notation}. Except when different scales must be compared
explicitly, the dependence of every quantity on $M$ is omitted; the finite-size sign
imbalance is written $\rho$, and its limit is written $\rho_\infty$.

\subsection{The Plain mapping and ideal peeling}
The IBLT of the introduction can be abstracted into a random hypergraph model: only the counts are kept, the key sums and
checksums are ignored, and decoding uses only the information of which cells
contain exactly one element.

Let the set of cells be $[M]=\{1,\ldots,M\}$, and let $d$ differing elements remain after the two sketches are subtracted. Element $i$ independently and
uniformly chooses a $k$-subset $S_i\subseteq[M]$ as its hyperedge in the hypergraph. The endpoints of a single edge are
distinct, but different elements may choose exactly the same subset. We call this model the
Plain model. Ideal peeling deletes at each step one edge incident to a degree-one vertex, until all edges are deleted or no
degree-one vertex remains. When the final residual hypergraph is nonempty, we say the failure event $F$ occurs. This event is
equivalent to the existence of a nonempty residual
2-core, hence independent of which degree-one vertex is chosen at each step.

Each element carries a fixed sign $s_i\in\{-1,+1\}$, indicating which set contains the element exclusively; the signs are independent of the random mapping. Write
$d_+=\#\{i:s_i=+1\}$ and $d_-=\#\{i:s_i=-1\}$, with $d_++d_-=d$. The ordered pair $(d_+,d_-)$
is called the sign composition, and it determines how many of the $d$ differing elements come from each side. Sign imbalance
\begin{equation*}
\rho=\frac{d_+-d_-}{d}=\frac1d\sum_{i=1}^d s_i
\end{equation*}
summarizes the asymmetry of the sign composition in a single scalar: $\rho=\pm1$ means all differences come from one side, and $\rho=0$
means the two sides each contribute half. The count of cell $v$ after the two sketches are subtracted is
\begin{equation*}
C_v=\sum_{i:v\in S_i}s_i.
\end{equation*}
The ideal model identifies a cell containing a single element directly from its unsigned degree. A real IBLT uses the key sums and checksums to decide whether a cell contains exactly one element,
and we assume this decision is free of error.

\subsection{Count energy and overlap}
The count energy measures the total amount by which the cell counts deviate from their mean:
\begin{equation}
T=\sum_{v=1}^M\left(C_v-\frac{kd\rho}{M}\right)^2.\label{eq:T}
\end{equation}
The centering term $kd\rho/M$ is exactly the mean of all counts. Each differing element adds its own sign to each of the $k$
cells it selects; if the cells selected by different elements do not overlap, each element contributes exactly $k$ to the sum of squared counts $\sum_vC_v^2$, and the centering subtracts $(kd\rho)^2/M$, a term fixed by the sign composition. When two elements share a cell, the contributions of same-sign elements add up on the shared cell
and the energy is raised; the contributions of opposite-sign elements cancel and the energy is lowered.

Write $H_{ij}=|S_i\cap S_j|$ for the number of cells shared by elements $i$ and $j$. This defines the standardized
measurement
\begin{equation}
\widehat d=\frac{T}{k(1-k/M)},\qquad
X=\frac{\widehat d-d}{s},\qquad
s^2=\frac{2d(d-1)}{M-1}.\label{eq:measurement}
\end{equation}
For $M>k$ and $d\ge2$, regardless of the sign composition, $\E X=0$ and $\Var X=1$; this normalization follows exactly from
the hypergeometric moments of $H_{ij}$, and the derivation is given in Appendix~\ref{app:normalization}. The fluctuation of $\widehat d$
comes entirely from the overlap between elements, and the overlap is determined by the mapping, which is precisely why the measurement and the success or failure of peeling are correlated.

\subsection{Separation of signs and mapping}\label{sec:sign}
Peeling looks only at the unsigned mapping, while the measurement looks at both the mapping and the signs. How much of the
selection of the mapping by failure passes to the measurement depends on whether the overlap falls between same-sign or
opposite-sign pairs. This subsection first answers this question exactly at the level of the mean, and then gives the
decomposition at the level of the distribution.

\paragraph{The mean.}
To understand how failure shifts the mean, it suffices to consider two things: how much the overlap of the mappings selected by
failure differs from the unconditional case, and whether this overlap falls between same-sign or opposite-sign pairs. Define
\begin{equation*}
c_F=\frac{\E[H_{12}\mid F]-k^2/M}{k(1-k/M)},
\end{equation*}
where $k^2/M$ is the expected number of overlapping cells of two elements without conditioning. The quantity $c_F$ measures the excess overlap after selection by the failure event relative to the unconditional case; it can be negative, and Section~\ref{sec:flip} shows that its sign inside the critical window depends on $k$.
Element pairs split into two classes by sign: the overlap of a same-sign pair ($s_is_j=+1$) raises the energy, and the overlap of an opposite-sign pair ($s_is_j=-1$) lowers it. Using the condition that the numbers of positive and negative signs among the $d$
signs are fixed, the number of same-sign pairs minus the number of opposite-sign pairs equals
$\binom{d_+}2+\binom{d_-}2-d_+d_-=(d^2\rho^2-d)/2$. By exchangeability
one obtains exactly
\begin{equation}
\frac{\E[\widehat d\mid F]}d-1=c_F(d\rho^2-1).\label{eq:projection}
\end{equation}
The derivation uses only two properties of $F$: it depends only on the unsigned mapping, and it is invariant under element-label permutation. Hence \eqref{eq:projection} holds verbatim with $F$ replaced by any event of positive probability that has these two properties.

Equation~\eqref{eq:projection} decomposes the shift of the estimate by failure into two factors. The first factor $c_F$ depends only on the mapping and describes the effect of selection on overlap;
the second factor $d\rho^2-1$ depends only on the sign composition and is the difference between the same-sign and opposite-sign pairs. Under one-sided signs almost all element pairs are same-sign,
so more overlap directly raises the estimate; under balanced signs the same-sign and opposite-sign pairs are roughly equal in number, and the changes in overlap cancel in the mean.

\paragraph{The distribution.}
The separation at the level of the mean also holds at the level of the distribution: the measurement $X$ decomposes into a component that varies with the mapping and a component that depends only on the assignment of signs, the latter being asymptotically independent of the mapping.

To state this decomposition, split the generation of the model into two steps. The first step generates the unsigned Plain hypergraph $\mathcal G$ with element labels:
recording only which cells each element selects and assigning no signs. The second step, independent of $\mathcal G$, assigns the given numbers of positive and negative signs to the elements uniformly.
The element mapping is exchangeable, so for any statistic invariant under element-label permutation (including $F$ and
the precritical score of Section~\ref{sec:window}), its joint law with the measurement under this two-step generation
equals its joint law under the original model, where the sign of each label is fixed in advance.

Define two quantities in this framework. Given the mapping $\mathcal G$, change all $d$ signs to the same sign (all $+1$), and write $X_{\rm one}$ for the measurement computed from it; it is a function of $\mathcal G$ and is determined once the mapping is fixed. Now let
\begin{equation*}
\gamma=\frac{d\rho^2-1}{d-1}=\frac{2\sum_{i<j}s_is_j}{d(d-1)},
\end{equation*}
that is, the number of same-sign pairs minus the number of opposite-sign pairs, divided by the total number of element pairs. The quantity $\rho$ measures the average sign of a single element, and $\gamma$ measures the net same-sign proportion of element pairs: when all signs are the same every pair is same-sign and $\gamma=1$; when the two sides each contribute half, the same-sign and opposite-sign pairs are comparable and $\gamma$ is near zero. For $d$ sufficiently large $\gamma\approx\rho^2$, and in general $\gamma\to\rho_\infty^2$.

\begin{theorem}[Graph-conditional sign CLT]\label{thm:sign}
Fix $k\ge3$. Suppose the Plain model satisfies $d/M\to\alpha\in(0,\infty)$, the fixed numbers of positive and negative signs satisfy
$\rho\to\rho_\infty\in[-1,1]$, and the signs are generated by the two steps above. Then
\begin{equation*}
X=\gamma X_{\rm one}+V,\qquad
\E[e^{iuV}\mid\mathcal G]\xrightarrow{p}\exp[-u^2(1-\rho_\infty^4)/2]\quad(u\in\mathbb R).
\end{equation*}
Under this representation, $V$ is asymptotically independent of any graph-measurable vector that has a joint limit. Returning to the original model,
this joint law transfers to graph statistics that are invariant under element-label permutation.
\end{theorem}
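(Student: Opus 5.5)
We would start from an exact algebraic identity. Expanding $\sum_v C_v^2=dk+2\sum_{i<j}s_is_jH_{ij}$ and using $\sum_{i<j}s_is_j=(d^2\rho^2-d)/2$, the constant terms in \eqref{eq:T} combine to $dk(1-k/M)$. Writing $\tilde H_{ij}=H_{ij}-k^2/M$ and $c=2/[k(1-k/M)s]$, this gives exactly
\begin{equation*}
X=c\sum_{i<j}s_is_j\tilde H_{ij}.
\end{equation*}
Splitting $s_is_j=\gamma+(s_is_j-\gamma)$ then yields $X=\gamma X_{\rm one}+V$. Here $X_{\rm one}=c\sum_{i<j}\tilde H_{ij}$ is the all-$+1$ measurement, and
\begin{equation*}
V=c\sum_{i<j}(s_is_j-\gamma)\tilde H_{ij}.
\end{equation*}
Under the two-step generation, $V$ is, given $\mathcal G$, a quadratic form in a uniformly random arrangement of $d_+$ plus signs and $d_-$ minus signs. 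By the definition of $\gamma$, it has conditional mean exactly zero. This proves the representation, and what remains is a quenched CLT for $V$.

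Next we would decompose $V$ into a linear and a quadratic part. Write $s_i=\rho+\varepsilon_i$ with $\sum_i\varepsilon_i=0$. Then $s_is_j-\gamma=\rho(\varepsilon_i+\varepsilon_j)+\varepsilon_i\varepsilon_j+(\rho^2-\gamma)$. The last term is $O(1/d)$ and contributes $o_p(1)$ after multiplication by $c\sum\tilde H_{ij}$, because $\Var X_{\rm one}=1$. This leaves
\begin{equation*}
V\approx c\rho\sum_i\varepsilon_iD_i+c\sum_{i<j}\varepsilon_i\varepsilon_j\tilde H_{ij},\qquad D_i=\sum_{j\ne i}\tilde H_{ij}.
\end{equation*}
The first term is a linear statistic in the permuted signs, and the second is a degenerate quadratic form. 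We would compute the conditional variance given $\mathcal G$ using sampling-without-replacement moments. The two pieces are conditionally uncorrelated up to $o(1)$, and the variance reduces to two graph functionals:
\begin{equation*}
\rho^2(1-\rho^2)\,c^2\sum_i(D_i-\bar D)^2\quad\text{and}\quad(1-\rho^2)^2\,c^2\sum_{i<j}\tilde H_{ij}^2 .
\end{equation*}
By second-moment computations on the Plain model (pairs, triples and quadruples of hyperedges, with hypergeometric overlaps), these functionals satisfy $c^2\sum_{i<j}\tilde H_{ij}^2\xrightarrow{p}1$ and $c^2\sum_i(D_i-\bar D)^2\xrightarrow{p}2$. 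Their concentration holds because both are sums of weakly dependent local terms. The conditional variance therefore tends in probability to $(1-\rho_\infty^2)^2+2\rho_\infty^2(1-\rho_\infty^2)=1-\rho_\infty^4$.

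The main obstacle is the quenched Gaussian limit itself, which must hold in probability over $\mathcal G$. We would first couple the permuted signs to i.i.d.\ signs with mean $\rho$. The discrepancy is a rank-one correction that is absorbed by the constraint $\sum_i\varepsilon_i=0$. For the linear part we would apply Hoeffding's combinatorial CLT, which requires $\max_i|D_i-\bar D|^2=o_p\bigl(\sum_i(D_i-\bar D)^2\bigr)$; this follows from Poisson tails of the local degrees. For the quadratic part we would use de Jong's theorem for degenerate quadratic forms, or a martingale CLT in the element order. Its conditions are $\max_i\sum_j\tilde H_{ij}^2=o\bigl(\sum_{i,j}\tilde H_{ij}^2\bigr)$ and $\operatorname{tr}(\tilde H^4)=o\bigl((\operatorname{tr}\tilde H^2)^2\bigr)$, together with a fourth-moment condition. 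Showing the trace condition in probability is the hardest step. We would attack it by bounding $\E\operatorname{tr}(\tilde H^4)$ through cycle counts on the overlap graph: four-cycles of pairwise-overlapping hyperedges number $O(d^4/M^4\cdot M)=O(M)$, whereas $(\operatorname{tr}\tilde H^2)^2\asymp M^2$. The linear and quadratic parts would then be combined through a joint martingale CLT, which gives $\E[e^{iuV}\mid\mathcal G]\xrightarrow{p}e^{-u^2(1-\rho_\infty^4)/2}$.

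Asymptotic independence then follows directly. For any graph-measurable $Y$ with a joint limit,
\begin{equation*}
\E[e^{itY+iuV}]=\E\bigl[e^{itY}\,\E[e^{iuV}\mid\mathcal G]\bigr],
\end{equation*}
and bounded convergence gives the limit $\E[e^{itY}]\,e^{-u^2(1-\rho_\infty^4)/2}$ along subsequences. Taking $Y$ to be $X_{\rm one}$ jointly with other graph statistics, such as $\mathbf 1_F$ or the precritical score, gives the joint law. Finally, the transfer to the original fixed-label model uses the exchangeability remark preceding the theorem. Any statistic invariant under element-label permutation has the same joint law with $(X,X_{\rm one})$ under both generations, so the limit carries over.
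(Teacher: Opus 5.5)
Your proposal is correct. Its skeleton matches the paper's: the exact split $X=\gamma X_{\rm one}+V$ with $\E[V\mid\mathcal G]=0$, the linear-plus-quadratic decomposition, the two graph functionals giving $2\rho^2(1-\rho^2)+(1-\rho^2)^2=1-\rho^4$, reduction to i.i.d.\ signs, the bounded conditional characteristic function, and the exchangeability transfer. The route differs in the centring and in the CLT engine.

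\textbf{Centring and the coupling.} The paper keeps the sparse weights $m_{ij}=H_{ij}$ and subtracts $\rho^2\Omega$. Its batch-flip coupling (i.i.d.\ signs, then flip a uniform $q$-subset of the surplus class, $q=|S-d\rho|/2$) produces a count correction $-\rho\bar\ell(S-d\rho)$ of order $\sqrt M$ that must be tracked. With your centred $\tilde H$ the same correction is $-\rho\bar D(S-d\rho)=O_p(1)$, because $\bar D=O_p(M^{-1/2})$. This is the real content of your ``rank-one correction absorbed by $\sum_i\varepsilon_i=0$''. You should still make the coupling explicit; the batch flip works verbatim. You also still need to bound the flipped batch's fluctuation and its internal edges by sampling-without-replacement variances.

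\textbf{CLT engine.} The paper exploits sparsity: a dependency graph of maximum degree $O(\log M)$, the cumulant bound $|\kappa_r|\le C_r d(1+\Delta)^{2r-1}=o(M^{r/2})$, and the method of moments. Your $\tilde H$ is dense, so that argument does not apply directly. The element-order martingale CLT, however, only needs bracket concentration and a Lindeberg bound, and these reduce to $\operatorname{tr}\tilde H^2$, row sums, and $\operatorname{tr}\tilde H^4$. The step you call hardest is in fact immediate:
$\operatorname{tr}(\tilde H^4)\le\|\tilde H\|_{\rm op}^2\operatorname{tr}(\tilde H^2)$, and
$\|\tilde H\|_{\rm op}\le\max_i\sum_j|\tilde H_{ij}|\le k\max_vD_v+dk^2/M=O_p(\log M)$.
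This is the same $\log M$ the paper uses for its dependency degree, so no cycle counting is needed.

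\textbf{Smaller points.} Hoeffding's combinatorial CLT becomes redundant once you run the joint martingale CLT. Its non-degeneracy condition also fails as $|\rho_\infty|\to1$; treat that endpoint as the paper does, through the conditional variance tending to $0$ (equivalently $\E V^2=1-\gamma^2\to0$). Finally, to pass from graph conditions holding in probability to convergence in probability of $\E[e^{iuV}\mid\mathcal G]$, you need the usual sub-subsequence argument with almost sure convergence of the graph functionals.

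\textbf{What each route buys.} Yours gets off-the-shelf martingale machinery, tolerance of the dense centring, and a nearly free count correction. The paper's stays sparse and combinatorial and controls all cumulants on a deterministic graph; that is the form it later reuses uniformly over good graphs in Lemma~\ref{F:lem:unifsign}.
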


Intuitively, $\gamma X_{\rm one}$ is the component of the measurement that moves together with the mapping: the overlap of same-sign pairs raises the energy,
the overlap of opposite-sign pairs lowers it, and the difference $\gamma$ between the two classes of pairs sets the size of this component. The term $V$ is the remaining component, coming from the way signs are assigned on a fixed mapping:
even with the mapping fixed, reassigning the given positive and negative signs to the elements makes the energy fluctuate. The theorem states that $V$ is asymptotically normal conditionally on the mapping and asymptotically independent of the mapping. Since
$\gamma\to\rho_\infty^2$, the limiting correlation of the measurement with any permutation-invariant graph statistic that has a joint limit
with $X_{\rm one}$ and a uniformly integrable square is $\rho_\infty^2$ times its value in the all-same-sign case.

Figure~\ref{fig:decomp} illustrates this decomposition by simulation. Take $k=3$, $M=16384$, $d=13314$, draw 3000 Plain hypergraphs independently,
and on each graph compute $X_{\rm one}$ and $X$ for six levels of sign imbalance. After binning by $X_{\rm one}$, the mean of $X$ falls on the line
$\gamma X_{\rm one}$, whose slope drops from $1$ to $0$ with $\rho^2$. Take 5 further graphs with fixed mappings and reassign signs 3000 times on each.
The resulting vertical bars are the distribution of $V$ given the mapping: within the same bin the five bars have equal length, their centers all lie on the line, and their width agrees with the theoretical interval from $\sqrt{1-\rho^4}$,
showing that the conditional law of $V$ does not change with the mapping. The variances of the two components always sum to $1$; as $\rho$ drops from $1$ to $0$,
the fluctuation of $X$ moves gradually from the mapping-dependent component to the sign-assignment component; at $\rho=0.4$ the value $\gamma$ is already only $0.16$.
The long axis of the point cloud always looks to be along the 45-degree direction, because the variances of both $X$ and $X_{\rm one}$ equal $1$; the slope $\gamma$
given by the theorem is the slope of the conditional mean, captured by the binned means in the figure.
\begin{figure}[!htbp]
\centering\includegraphics[width=\linewidth]{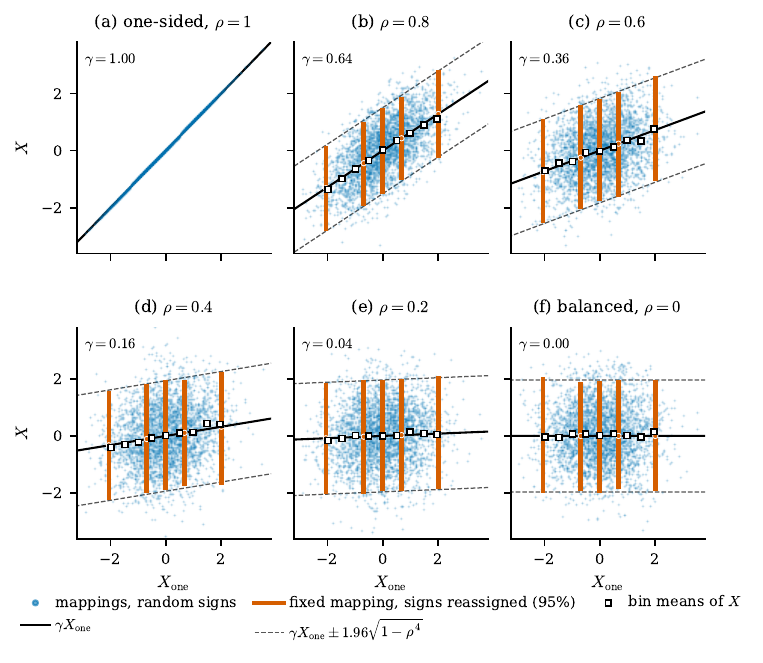}
\caption{Decomposition of the graph-conditional sign CLT ($k=3$, $M=16384$, 3000 Plain hypergraphs). Blue points: one random sign assignment per graph; squares: binned means;
orange vertical bars: the 95\% intervals of 5 fixed graphs with signs reassigned on each; dashed lines: $\gamma X_{\rm one}\pm1.96\sqrt{1-\rho^4}$.}
\label{fig:decomp}
\end{figure}

The technical difficulty of the theorem is that fixing the sign composition makes the signs exchangeable but not independent, while the overlap graph
is random, sparse, and irregular, so the existing quadratic form CLTs do not apply directly (Section~\ref{sec:related}).
The proof is in Appendix~\ref{sec:signproof}.

\begin{corollary}[Invariance under balanced selection]\label{cor:balanced}
Under the conditions of Theorem~\ref{thm:sign} let $\rho\to0$. Let the event $\mathcal S$ depend only on the unsigned mapping
and be invariant under element-label permutation. If $\liminf_M\Pr(\mathcal S)>0$, then
\begin{equation*}
\sup_x\left|\Pr(X\le x\mid \mathcal S)-\Phi(x)\right|\longrightarrow0.
\end{equation*}
\end{corollary}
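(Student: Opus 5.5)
Our plan is to apply Theorem~\ref{thm:sign} with $\rho_\infty=0$ and to use the fact that the graph-conditional characteristic function of $V$ has a deterministic limit. With $\rho\to0$ we get $\gamma\to0$, since $\gamma=(d\rho^2-1)/(d-1)$ and $d\to\infty$. We write $X=\gamma X_{\rm one}+V$ in the two-step generation. Because $\mathcal S$ is permutation-invariant and depends only on the unsigned mapping, the joint law of $(X,\mathbf 1_{\mathcal S})$ in the two-step model coincides with that in the original model. It therefore suffices to prove the claim in the two-step model, where $\mathcal S$ is $\mathcal G$-measurable.

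The first step handles the term $\gamma X_{\rm one}$. Its law does not depend on the sign composition, and by the normalization in Appendix~\ref{app:normalization} we have $\E X_{\rm one}=0$ and $\Var X_{\rm one}=1$. Chebyshev's inequality then gives $\Pr(|\gamma X_{\rm one}|>\epsilon)\le\gamma^2/\epsilon^2\to0$. Conditioning on $\mathcal S$ multiplies this bound by at most $1/\Pr(\mathcal S)$, which stays bounded because $\liminf_M\Pr(\mathcal S)>0$. Hence $\gamma X_{\rm one}\to0$ in probability conditionally on $\mathcal S$.

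The second step shows that $V$ given $\mathcal S$ tends to $N(0,1)$. For fixed $u$, set $\phi_M(u)=\E[e^{iuV}\mid\mathcal G]$. Since $\mathcal S$ is $\mathcal G$-measurable,
\begin{equation*}
\E[e^{iuV}\mathbf 1_{\mathcal S}]=\E[\phi_M(u)\mathbf 1_{\mathcal S}].
\end{equation*}
Theorem~\ref{thm:sign} gives $\phi_M(u)\to e^{-u^2/2}$ in probability, and $|\phi_M|\le1$, so bounded convergence yields
\begin{equation*}
\bigl|\E[e^{iuV}\mathbf 1_{\mathcal S}]-e^{-u^2/2}\Pr(\mathcal S)\bigr|\le\E|\phi_M(u)-e^{-u^2/2}|\to0.
\end{equation*}
Dividing by $\Pr(\mathcal S)$, which is bounded below, gives $\E[e^{iuV}\mid\mathcal S]\to e^{-u^2/2}$. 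Lévy's continuity theorem then shows that the law of $V$ given $\mathcal S$ converges to the standard normal.

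We then combine the two steps with Slutsky's lemma under the conditional measures $\Pr(\cdot\mid\mathcal S)$. These form a sequence of probability measures, and Slutsky applies along any sequence, so $X$ given $\mathcal S$ converges in law to $N(0,1)$. Since $\Phi$ is continuous, Pólya's theorem upgrades pointwise convergence of the distribution functions to the uniform (Kolmogorov) bound.

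The main point of care is the conditioning. The quantity $\liminf\Pr(\mathcal S)>0$ only bounds the probability from below and does not guarantee that it converges, so the argument must stay along the full sequence rather than pass to a limiting conditional law. The bounds above are uniform in this respect, so no subsequence argument is needed.
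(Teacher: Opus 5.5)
Your proposal is correct and follows the same route as the paper's proof. The paper likewise notes that $\gamma X_{\rm one}\to0$ on $\mathcal S$, divides the $L^1$ error of the quenched characteristic function \eqref{eq:quenched} by $\Pr(\mathcal S)$, and uses permutation invariance to justify randomizing the signs. You fill in the steps the paper leaves implicit: Chebyshev via the exact unit variance, bounded convergence, L\'evy continuity, Slutsky under the conditional measures, and P\'olya's theorem for the uniform bound.
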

Here $\Phi$ and $\phi$ are the standard normal distribution function and density. Under balanced signs $\gamma\to0$,
the mapping-dependent component $\gamma X_{\rm one}$ of the measurement vanishes, and the remaining $V$ is asymptotically independent of the graph,
so any selection that looks only at the unsigned graph cannot affect it; Figure~\ref{fig:decomp}(f) shows exactly this case. The requirement of a positive lower bound on the probability comes from the precision of asymptotic independence: the dependence of $V$ on $\mathcal G$ vanishes at rate $o(1)$,
and conditioning on an event of vanishing probability can amplify this residual. The sparse-end counterexample of Proposition~\ref{prop:axis} is exactly this situation.
The corollary holds for any positive load limit; it needs no critical-peeling analysis and imposes no rate on $\rho\to0$.
The proof is in Appendix~\ref{sec:signproof}.

Theorem~\ref{thm:sign} hands the role of the sign composition entirely to the factor $\gamma$. The next section then only needs to study the relation between the all-same-sign
$X_{\rm one}$ and the success or failure of peeling, from which the conclusion for general signs follows.

\section{The failure-conditioned law in the critical window}\label{sec:window}
This section studies the regime in which the load is close to the peeling threshold and success and failure have comparable
probability. Section~\ref{sec:sign} has already handed the role of the sign composition to the factor $\gamma$; what remains is to
answer how the failure event selects the mapping in the all-same-sign case. Section~\ref{sec:score} introduces a precritical score
that predicts the success or failure of peeling. Section~\ref{sec:thm} states the main theorem: measurement and score are
asymptotically joint Gaussian, failure is asymptotically equivalent to the score being below zero, and the failure-conditioned law is
therefore the restriction of a bivariate normal to a half-plane. Section~\ref{sec:mean} reads off the conditional mean, variance, and
quantile from this truncated distribution, and Section~\ref{sec:flip} explains why the sign of the correlation coefficient changes
between $k=4$ and $k=5$.

\subsection{The peeling bottleneck and the precritical score}\label{sec:score}
Fix an integer $k\ge3$ and let $x_c$ be the unique positive root of $e^{x}-1=(k-1)x$.
The critical load is\cite{molloy,dn,jl07}
\begin{equation*}
\alpha_c=\frac{x_c}{k(1-e^{-x_c})^{k-1}}.
\end{equation*}
Closed forms for the remaining critical constants are collected in Appendix~\ref{app:constants}. Consider a deterministic sequence
\begin{equation}
\frac dM=\alpha_c+\frac r{\sqrt M}+o(M^{-1/2}),\qquad
r\in\mathbb R,\qquad\rho\longrightarrow\rho_\infty\in[-1,1].\label{eq:window}
\end{equation}
Here $r$ is fixed, and the limiting failure probability lies in $(0,1)$.

When the load is near the critical value, the peeling process passes through a bottleneck. Each step of peeling removes an edge
incident to a degree-one vertex and changes the degrees of the neighboring vertices, possibly creating new degree-one vertices or
destroying existing ones. When the load is slightly below $\alpha_c$, the density of degree-one vertices first drops close to zero and
then recovers; when the load is slightly above $\alpha_c$, the fluid degree-one density dips below zero at its minimum, so the actual number of degree-one vertices reaches zero first and peeling halts. This minimum is
the bottleneck, and whether it can be crossed determines success or failure.

The precritical score $R$ reads the degree histogram before the bottleneck is reached and extrapolates how many degree-one vertices
remain at the bottleneck. Concretely, $R$ is a linear functional of the degree-one density of the corresponding profile at the
bottleneck, and its value depends only on the degree changes along the segment of the peeling path before the bottleneck is reached. A
negative remainder means that peeling will halt at the bottleneck. The window coordinate $r$ shifts the center of this remainder: for
$r>0$ (load above the threshold) the center is negative and the failure probability exceeds $1/2$; for $r<0$ the center is positive and the failure probability
is below $1/2$. The randomness of the mapping gives the remainder Gaussian fluctuations. The full construction of $R$ is in
Appendix~\ref{sec:mp-traj}; the main text only needs that it depends only on the unsigned mapping, is invariant under element-label
permutation, and its sign asymptotically determines the success or failure of peeling.

The correlation between measurement and score comes from the fluctuations of the cell degrees in the initial mapping. A mapping with
more overlap on the one hand changes the count energy (Section~\ref{sec:sign}), and on the other makes some cells have higher degree
and others lower degree, changing the degree profile and hence the remainder of peeling at the bottleneck. Both statistics are
functions of the initial degrees, and their correlation comes from this common source.

\subsection{Main theorem}\label{sec:thm}
\begin{theorem}[critical window joint limit]\label{thm:main}
Under the Plain model and the scaling~\eqref{eq:window}, there exists a precritical score $R$ that does not depend on the signs and is
invariant under element-label permutation. Let $\beta_k<0$, $\sigma_k>0$, and $a_c$ be the constants from
Appendix~\ref{app:constants}, and set
\begin{equation}
\eta_k=\frac{e^{-x_c}a_c(x_c-2)}{\sqrt{2\sigma_k^2}}.\label{eq:eta}
\end{equation}
\begin{enumerate}
\item[(i)] All-same-sign case:
\begin{align}
\left(X_{\rm one},\frac{R-\beta_k r}{\sigma_k}\right)
&\Rightarrow N\!\left(0,\begin{pmatrix}1&\eta_k\\\eta_k&1\end{pmatrix}\right),\label{eq:joint}\\
\Pr\bigl(F\triangle\{R\le0\}\bigr)&\longrightarrow0.\label{eq:transfer}
\end{align}
\item[(ii)] General sign composition: $(X,(R-\beta_kr)/\sigma_k)$ is asymptotically standard bivariate normal with correlation
$c=\eta_k\rho_\infty^2$.
\item[(iii)] Let $z_r=-\beta_k r/\sigma_k$. Then
\begin{gather*}
\Pr(F)\longrightarrow\Phi(z_r),\\
\sup_{x\in\mathbb R}\left|\Pr(X\le x\mid
F)-\frac{\Phi_2(x,z_r;c)}{\Phi(z_r)}\right|\longrightarrow0,
\end{gather*}
where $\Phi_2(\cdot,\cdot;c)$ is the standard bivariate normal distribution function with correlation $c$.
\end{enumerate}
\end{theorem}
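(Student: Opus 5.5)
The plan is to prove (i) first, then get (ii) from Theorem~\ref{thm:sign}, and (iii) from (i)--(ii) plus a half-plane truncation argument.

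For (i), I would Poissonize the Plain model: with the number of elements $\Pois(dM^{-1}\cdot M)$, the cell degrees are approximately i.i.d.\ Poisson$(k\alpha)$ up to the hypergraph constraint. The all-same-sign energy then satisfies $T_{\rm one}=\sum_v(D_v-\bar D)^2$, where $D_v$ is the degree of cell $v$. So $X_{\rm one}$ is, to first order, a centered linear functional of the empirical degree histogram $(N_j/M)_j$, namely $\sum_j j^2 N_j$ minus its mean, normalised by $s\asymp\sqrt{2\alpha^2 M}$. Depoissonization is handled by the standard conditioning-on-$d$ correction, and this correction is what produces the exact $2d(d-1)/(M-1)$ variance rather than the Poisson one.

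The score $R$ is built, as in Appendix~\ref{sec:mp-traj}, by running peeling to a precritical time $t_0$ before the bottleneck. On $[0,t_0]$ the degree profile follows the Wormald/Darling--Norris fluid limit with Gaussian fluctuations. Linearising the fluid map from the initial histogram to the degree-one density at the bottleneck gives
\[R=\langle \ell,\sqrt M(\text{initial histogram}-\text{mean})\rangle+\text{martingale fluctuation on }[0,t_0]+\beta_k r+o_p(1).\]
The martingale part is asymptotically independent of the initial histogram, because its increments come from fresh choices in peeling.

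A joint CLT for (histogram, martingale) then gives joint normality of $(X_{\rm one},R)$. The covariance is the covariance of the two linear functionals of the initial Poisson degree vector. Since $\Cov(D^2,\cdot)$ is computed against the Poisson$(x_c)$ law at criticality, this covariance should reduce to $e^{-x_c}a_c(x_c-2)$; the factor $(x_c-2)$ arises from $\E[D^2\mathbf 1\{D=j\}]$-type terms paired with $\ell$. Dividing by $\sqrt{2}\,\sigma_k$ gives $\eta_k$.

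For \eqref{eq:transfer}, the argument is as follows. After $t_0$, the remaining trajectory fluctuation of the degree-one count at the bottleneck is $o(\sqrt M)$ relative to its $\sqrt M$-scale drift. Peeling therefore fails iff the extrapolated remainder is negative, except on an event where $|R|$ is small, and that event has vanishing probability because $R$ has a continuous Gaussian limit. This step needs the critical-window analysis of Dembo--Montanari, with $t_0\to t_c$ chosen slowly enough that the fluid approximation still holds.

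For (ii), I apply Theorem~\ref{thm:sign}: $X=\gamma X_{\rm one}+V$. Here $R$ is graph-measurable and permutation invariant, and $(X_{\rm one},R)$ has a joint limit by (i). So $V$ is asymptotically an independent $N(0,1-\rho_\infty^4)$. Combining this with $\gamma\to\rho_\infty^2$, the limit of $X$ has variance $\rho_\infty^4+1-\rho_\infty^4=1$ and correlation with the standardized score equal to $\rho_\infty^2\eta_k$.

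For (iii), write $Z=(R-\beta_kr)/\sigma_k$, so $\{R\le0\}=\{Z\le z_r\}$. Then $\Pr(F)\to\Pr(Z\le z_r)=\Phi(z_r)$ by \eqref{eq:transfer} and (ii). For the conditional law, \eqref{eq:transfer} lets me swap $F$ for $\{Z\le z_r\}$ at cost $o(1)$. The limit bivariate normal assigns zero mass to the boundary lines, so the portmanteau theorem gives pointwise convergence of $\Pr(X\le x,Z\le z_r)$ to $\Phi_2(x,z_r;c)$. Dividing by $\Phi(z_r)>0$ and using continuity and monotonicity upgrades this to uniform convergence, by Pólya's argument.

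The main obstacle is (i): rigorously controlling the peeling trajectory through the critical window so that \eqref{eq:transfer} holds, and computing the exact covariance constant. I would lean on the Dembo--Montanari finite-size scaling machinery for that step.
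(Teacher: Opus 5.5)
Your (ii) and (iii) match the paper's argument, and your outline of (i) follows the same architecture as the paper: a fixed-total CLT for the initial degrees, a score read off the histogram before the bottleneck and split into an initial linear part plus a dynamic martingale, and local freezing at the bottleneck. But two parts of (i) have genuine gaps.

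First, \eqref{eq:transfer}. The bottleneck analysis gives only half of ``fails iff the extrapolated remainder is negative''. A clearly positive score shows that the degree-one count survives the window around $t_*$; it does not show that peeling then finishes. Two further facts are needed.
(a) The fluid degree-one density must stay bounded away from zero after the window until only $\delta M$ edges remain. The paper gets this from the fact that $-\log(1-q)/q^{k-1}$ attains its minimum $\mu_c$ only at $q_c$, so the critical fluid has a single interior zero, combined with uniform concentration before stopping.
(b) A separate combinatorial bound must exclude residual cores with at most $\delta_0M$ edges, which the fluid limit cannot see at all. The paper's first-moment bound is $\sum_{1\le e\le\delta_0M}[C_{k,A}(e/M)^{k/2-1}]^e=o(1)$, and this is where $k\ge3$ enters; in the Plain model a pair of identical edges is such a core.
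Citing Dembo--Montanari does not close this. Their result is for the configuration model and gives the failure probability, not the event-level coupling between $F$ and your particular score that \eqref{eq:transfer} asserts.

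Second, the model and the constants. ``Fresh choices in peeling'' holds in the configuration model, where the pairing is uniform given the degrees. It fails in the Plain model, where the distinct-endpoint constraint ties the pairing to the degrees. The paper therefore runs the whole dynamic analysis in the configuration model and transfers by conditioning on $N^{\rm bad}_M=0$. The nontrivial point is that $(X_{\rm one},R)$ is asymptotically independent of $N^{\rm bad}_M$: planting $l$ bad edges leaves the limit unchanged, so $\E[f(Y_M)(N^{\rm bad}_M)_l]\to\Lambda_k^l\E f(Y)$, and inclusion--exclusion finishes. Your ``i.i.d.\ Poisson up to the hypergraph constraint'' needs exactly this argument, and the proposal has none. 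Independence of the dynamic part also requires its bracket to converge in probability to the deterministic $\tau_k^2$, as in \eqref{eq:stable}; fresh randomness alone is not enough.

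Finally, $\eta_k$ is asserted rather than derived, and the law you name is the wrong one. Both functionals depend on the initial degrees, so the covariance must be taken under $\Pois(\mu_c)$ with $\mu_c=x_c/a_c$; $\Pois(x_c)$ only describes heavy degrees at the bottleneck. The score's influence function is $h_c(j)=j(1-a_c)^{j-1}$, obtained by pulling the degree-one count back through the thinning. The fixed-total covariance $\Gamma$ of \eqref{eq:bridge} must be used for the cross term, not just the variance. This gives $\Gamma(e,h_c)=\mu_c^2\E\Delta^2h_c(D)=\mu_cx_ce^{-x_c}(x_c-2)$, whereas the raw Poisson covariance differs from it by $\mu_ce^{-x_c}(1-x_c)\neq0$. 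Since Poissonizing the element count moves the window coordinate by $O(1)$, the correction must be applied jointly to the score. The normalization $\sigma_k^2=\sigma_0^2+\tau_k^2$ is also never computed.
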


The core of the theorem is (i), whose proof sketch is in Appendix~\ref{sec:proof-sketch} and full proof in
Appendix~\ref{sec:main-proof}. The correlation between measurement and score comes entirely from the initial mapping; the
fluctuations that the peeling process itself contributes to the score are asymptotically independent of measurement and only dilute the
correlation, so $|\eta_k|<1$. (ii) follows directly from (i) and Theorem~\ref{thm:sign}: $X=\gamma X_{\rm one}+V$, with $V$
asymptotically independent of $(X_{\rm one},R)$ and $\gamma\to\rho_\infty^2$, so the correlation between $X$ and the score is
$\eta_k\rho_\infty^2$. (iii) requires both \eqref{eq:joint} and \eqref{eq:transfer}. The former fixes the joint fluctuations of
measurement and score; the latter guarantees that the event that the score is below zero indeed describes the true peeling failure.
Together they give $\Pr(X\le x,F)\to\Phi_2(x,z_r;c)$, and dividing by the nonzero limiting failure probability yields the conditional
distribution. Since the limiting distribution is continuous, pointwise convergence can be lifted to uniform convergence of the
distribution functions.

Geometrically, the failure-conditioned law is obtained by restricting the limiting bivariate normal to the half-plane
$\{(x,y):y\le z_r\}$ and then looking at the first coordinate, where $y$ is the standardized score. For $c<0$, samples with low score
tend to have high measurement, and the selection shifts measurement to the right overall; for $c>0$ measurement shifts left overall;
for $c=0$ the selection has no effect. The smaller $z_r$ is, the farther the retained half-plane is from the center, the lower the
failure probability, and the more pronounced the shift. Figure~\ref{fig:joint} illustrates this truncation: (a) shows the limiting
density cut by the half-plane, and (b) gives a finite-size instance using the same batch of samples as Figure~\ref{fig:demo} in the
introduction. measurement and score are negatively correlated, the failing samples concentrate on the side where the score is below
zero, and their measurement is therefore high; this is the origin of the rightward shift of the failing samples in
Figure~\ref{fig:demo}(a).
\begin{figure}[!htbp]
\centering\includegraphics[width=\linewidth]{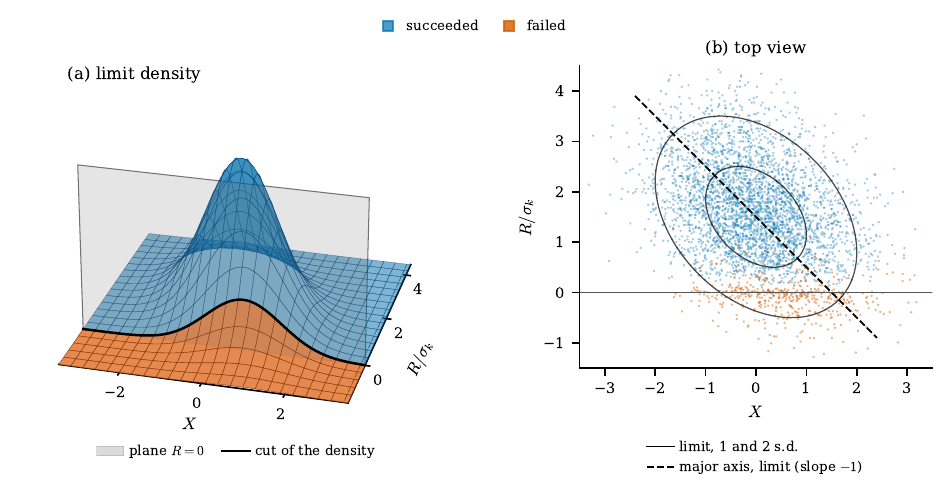}
\caption{Joint distribution of measurement and precritical score in the all-same-sign case ($k=3$, $M=16384$,
failure rate $0.125$). (a) Density of the limiting joint Gaussian; the gray plane is $R=0$, the orange part in front of the plane
corresponds to failure, and the black line is the cut;
(b) Top view in the same coordinates; each point is one Plain hypergraph, the ellipses are the one- and two-standard-deviation
contours of the limiting distribution, the dashed line is its major axis (slope $-1$), and the major axis of the sample covariance has
slope $-1.03$.}
\label{fig:joint}
\end{figure}

\subsection{Conditional mean, variance, and quantile}\label{sec:mean}
The protocol uses three numerical features of the truncated distribution. The conditional mean and the conditional variance
describe the direction and the dispersion of the shift of the estimate, and the lower quantile directly determines the threshold of
the second-round capacity. This subsection derives their expressions one by one from the limiting distribution of
Theorem~\ref{thm:main}.

\paragraph{The first two moments.}
Theorem~\ref{thm:main} gives uniform convergence of the distribution functions. Lifting this to convergence of moments requires
uniform integrability. Since $X\Rightarrow N(0,1)$ and $\E X^2=1$ holds exactly for every $M$, $X^2$ is uniformly integrable;
together with the positive limiting failure probability $\Phi(z_r)$, this gives the uniform integrability of $X^2$ under failure (Appendix~\ref{app:normalization}), and the first two conditional
moments converge to the corresponding values of the limiting distribution.

The limiting distribution is the distribution of the first coordinate of a bivariate normal truncated to the half-plane
$\{y\le z_r\}$; its moments can be expressed through the truncated moments of a standard normal. Let $A(z)=\phi(z)/\Phi(z)$ denote
the inverse Mills ratio. The mean of a standard normal variable truncated to $\{y\le z\}$ is $-A(z)$, and its variance is
$1-A(z)[z+A(z)]$. In the limiting bivariate normal, the first coordinate decomposes into $c$ times the standardized score plus a
normal term of variance $1-c^2$ that is independent of the score. The truncation acts only on the score, leaving the independent
term unaffected. Hence
\begin{equation*}
\E[X\mid F]\longrightarrow-cA(z_r),\qquad
\Var(X\mid F)\longrightarrow1-c^2A(z_r)\bigl[z_r+A(z_r)\bigr].
\end{equation*}
The shift of the mean equals the correlation coefficient $c$ times the average downward shift $A(z_r)$ of the score after
truncation: the part of the measurement proportional to the score moves together with the score, rightward when $c<0$ and leftward
when $c>0$. The conditional variance is less than one, because truncation compresses the fluctuations along the score direction
while the part independent of the score is preserved unchanged. Under balanced signs $c=0$, both return to their unconditional
values, consistent with Corollary~\ref{cor:balanced}. The smaller $z_r$ is (that is, the lower the load), the lower the failure
probability, the larger $A(z_r)$, and the more pronounced the shift.

\paragraph{Quantile.}
Let $q_{\rm sel}(u;z,c)$ denote the quantile at level $u\in(0,1)$ of the distribution function $\Phi_2(x,z;c)/\Phi(z)$. This
distribution has density
\begin{equation*}
f_{\rm sel}(x;z,c)=\frac{\phi(x)}{\Phi(z)}
\Phi\!\left(\frac{z-cx}{\sqrt{1-c^2}}\right)
\end{equation*}
which is positive everywhere, and the distribution function is strictly increasing, so the conditional quantile at every fixed level
converges to $q_{\rm sel}$.

Which side the quantile moves toward is likewise determined by the sign of $c$. Since
$\partial_c\Phi_2(x,z;c)=\phi_2(x,z;c)>0$, at every fixed $x$ the conditional distribution function increases strictly in $c$; at
$c=0$ it is the standard normal distribution function. Lifting the distribution function overall means that the quantile at every
level shifts leftward. Thus when $c>0$ all limiting quantiles lie below the corresponding standard normal values, and when $c<0$
they all lie above.

Thus far the failure-conditioned law inside the critical window is determined by only two numbers, $z_r$ and
$c=\eta_k\rho_\infty^2$. The sign of $\eta_k$ determines the direction in which the mean and the quantiles shift, while the variance decreases for either sign, by an amount that depends on $c^2$.

\subsection{Sign change of $\eta_k$ and protocol risk}\label{sec:flip}
By \eqref{eq:eta}, all factors of $\eta_k$ other than $x_c-2$ are positive, so the sign of $\eta_k$
equals the sign of $x_c-2$. The function $e^x-1-(k-1)x$ is negative on $(0,x_c)$ and positive on
$(x_c,\infty)$, so it suffices to look at its sign at $x=2$. $e^2-1\approx6.39$, while $(k-1)\cdot2$
equals $6$ for $k=4$ and $8$ for $k=5$. The former gives a positive value, hence $x_c<2$; the latter
gives a negative value, hence $x_c>2$. Since $x_c$ increases with $k$, we have $\eta_k<0$ for
$k=3,4$, while every fixed $k\ge5$ has $\eta_k>0$.
Table~\ref{tab:constants} lists the numerical values.

\begin{table}[!htbp]
\centering
\caption{Analytic constants and the direction of the conditional mean under one-sided failure.}
\begin{tabular}{rrrl}
\toprule
$k$ & $\sigma_k^2$ & $\eta_k$ & direction of the mean ($\rho\ne0$)\\
\midrule
3 & 0.047334402 & $-0.352023542$ & positive\\
4 & 0.022604525 & $-0.041540768$ & positive\\
5 & 0.014632421 & $+0.126662297$ & negative\\
6 & 0.010718241 & $+0.219499658$ & negative\\
\bottomrule
\end{tabular}
\label{tab:constants}
\end{table}

This sign can be translated back to the original scale. Under a one-sided sign, $\rho=1$, and
\eqref{eq:projection} gives $\E[\widehat d\mid F]-d=d(d-1)c_F$, hence $\E[X\mid F]=d(d-1)c_F/s$.
Using $s^2=2d(d-1)/(M-1)$ and $d/M\to\alpha_c$, we obtain
\begin{equation*}
M^{3/2}c_F\longrightarrow
-\frac{\sqrt2\eta_k}{\alpha_c}A(z_r).
\end{equation*}
The sign change is a property of $c_F$ itself inside the critical window, independent of the sign
composition; the sign composition only determines the proportion in which this shift is transmitted
to the estimate. Returning to the interpretation of $c_F$ in Section~\ref{sec:sign}: for $k=3,4$ the
mapping overlap selected by failure is larger than average and the estimate is biased upward; for
$k\ge5$ the overlap is smaller and the estimate is biased downward. For $k=3,4,5,6$, $x_c$ is about
$1.26$, $1.90$, $2.34$, $2.66$. The value of $x_c$ for $k=4$ is close to $2$, so $|\eta_4|$ is only
$0.04$; the selection effect is weakest at $k=4$, and markedly stronger at $k=3$ and for $k\ge5$.
Figure~\ref{fig:selection} gives the simulation comparison for $k=3$--$6$. The mean has another sign
change: for fixed $k\ge5$, as the load decreases gradually the conditional mean changes sign once
more, see Appendix~\ref{app:wm}.

The sign change directly determines on which side the protocol risk falls. Under the rule that takes
$\widehat d/q_u$ as the capacity of the second round, $\widehat d<q_u d$ is equivalent to the
capacity set from the estimate being smaller than the true number of differences, and the lower-tail
probability under failure is the probability of this event. Under a one-sided sign, if $c>0$
($k\ge5$) the true quantile after failure is lower than the quantile of the unconditional normal;
using the latter as the lower-tail threshold makes the lower-tail probability higher than the nominal
value and the capacity too small; if $c<0$ ($k=3,4$) the bias falls on the conservative side. Whether
the second round ultimately succeeds also depends on the capacity conversion factor and the failure
probability of the second-round decoding itself; the coverage experiment of Section~\ref{sec:e2e}
checks the non-coverage probability of the upper bound after failure, i.e.\ the probability that $U<d$.

The analysis above gives the full conditional distribution when $z_r$ and $c=\eta_k\rho_\infty^2$ are
known. In practice there is a further difficulty: both $z_r$ and $c$ are computed from the true $d$,
and $d$ is precisely the quantity the protocol is meant to estimate. The centering term of the energy
uses only the total sign $b=k^{-1}\sum_vC_v=d\rho$, which the protocol can compute directly.
Section~\ref{sec:upper} uses the observable $(\widehat d,b)$ to perform candidate testing and
inversion, removing the dependence on the true $d$.

\section{Upper confidence bound after failure}\label{sec:upper}
The selection law of Section~\ref{sec:window} gives the limiting distribution of the measurement under
failure, but its parameters ($z_r$ and $c$) both depend on the unknown $d$. After failure the protocol can
observe only three quantities: the failure event $F$, the estimate $\widehat d$, and the total sign
$b = d_+ - d_-$. The strategy of this section is to substitute each candidate value of $d$ into the
conditional law, perform a hypothesis test, and invert it, obtaining an upper confidence bound for $d$.
This construction requires only a null distribution for each candidate, so it can accommodate conditional
laws from different regimes: the selection law inside the critical window, and a mixture of the two
failure mechanisms in the low-load regime.

\subsection{Candidate tests and construction of the bound}\label{sec:upper-construct}
The Plain mapping is exchangeable in the element labels, so given $(d, b)$ the joint law of
$(\widehat d, F)$ depends only on these two numbers. This means that for each candidate value $n$, the
statement that the true difference count is $n$ and the total sign is $b$ constitutes a complete simple
hypothesis. A candidate $n$ also determines the window coordinate and the correlation coefficient: the
former is given by the distance of $n/M$ from $\alpha_c$, and the latter equals $\eta_k b^2/n^2$, i.e.\ the
value of $\eta_k\rho^2$ at $\rho = b/n$. Thus $b$ together with a candidate $n$ suffices to determine the
entire conditional law, and the candidates can be tested one by one.

The candidate set is
$\mathcal N(b) = \{n : n \ge \max(|b|, 2),\ n \equiv b \pmod 2\}$. For
$n \in \mathcal N(b)$ let
\begin{gather*}
s(n)^2 = \frac{2n(n-1)}{M-1},\qquad
z(n) = \frac{ka_ce^{-x_c}}{\sigma_k}\Bigl(\frac nM -
\alpha_c\Bigr)\sqrt M,\\
t(n,b) = n + s(n)\,q_{\rm sel}\bigl(u;\,z(n),\,\eta_k b^2/n^2\bigr),
\end{gather*}
where $q_{\rm sel}$ is the selection law quantile defined in Section~\ref{sec:mean}.
$t(n,b)$ is the acceptance threshold for candidate $n$: if the true difference count is indeed $n$, then
under failure the probability that $\widehat d$ lies below $t(n,b)$ is asymptotically $u$. If the observed
$\widehat d$ lies below the threshold, the candidate is rejected.

The acceptance set is $\mathcal C = \{n \in \mathcal N(b) : \widehat d \ge t(n,b)\}$, and the upper
confidence bound is
\begin{equation*}
U = \sup\,\mathcal C;
\end{equation*}
if $\mathcal C$ is empty, set $U = +\infty$ and the protocol uses full
transmission or doubling. Two properties of this construction are worth noting. First, the coverage
probability depends only on whether the true candidate $d$ is wrongly rejected: if $d \in \mathcal C$ then
$U \ge d$, and if $\mathcal C$ is empty then $U = +\infty$, so in both cases the bound covers the true
value. Hence neither monotonicity of the threshold in $n$ nor multiple-testing correction is needed; the
thresholds of the other candidates affect only how wide the bound is, not the coverage probability.
Second, for $M$ sufficiently large the acceptance set lies in a finite interval (Appendix~\ref{app:upper}),
so $U$ can be computed with finitely many comparisons.

\subsection{Coverage inside the critical window}\label{sec:upper-cover}
\begin{theorem}[Asymptotically conservative coverage]\label{thm:upper}
Fix $u \in (0,1)$.
\begin{enumerate}[label=(\roman*)]
\item Under the conditions of Theorem~\ref{thm:main}, $\Pr(\widehat d < t(d,b) \mid F) \to u$,
so $\liminf_M \Pr(d \le U \mid F) \ge 1-u$.
\item For every $R < \infty$, the convergence in (i) is uniform in $|r| \le R$ and over all admissible $b$.
\item If $d/M$ is bounded and $r \to +\infty$, then
$\liminf_M \Pr(d \le U \mid F) \ge 1-u$.
\end{enumerate}
\end{theorem}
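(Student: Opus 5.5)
Part (i) follows from Theorem~\ref{thm:main}(iii). For the true candidate $n=d$ we have $\eta_k b^2/d^2=\eta_k\rho^2\to c=\eta_k\rho_\infty^2$. The window scaling \eqref{eq:window} gives $z(d)=\frac{ka_ce^{-x_c}}{\sigma_k}(r+o(1))$, and the constants should satisfy $-\beta_k=ka_ce^{-x_c}$ (this identity must be checked from Appendix~\ref{app:constants}), so that $z(d)\to z_r$. Since $s(d)=s$, the event $\widehat d<t(d,b)$ is exactly $X<q_{\rm sel}(u;z(d),\eta_k\rho^2)$.

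The selection-law quantile $q_{\rm sel}(u;z,c)$ is jointly continuous in $(z,c)\in\mathbb R\times(-1,1)$, because $f_{\rm sel}$ is positive and continuous in all arguments and $|c|\le|\eta_k|<1$. Hence $q_{\rm sel}(u;z(d),\eta_k\rho^2)\to q_{\rm sel}(u;z_r,c)$. Uniform convergence of $\Pr(X\le x\mid F)$ to the continuous limit $\Phi_2(x,z_r;c)/\Phi(z_r)$ then gives $\Pr(X<q_M\mid F)\to u$. Coverage follows from the observation already made: $d\in\mathcal C$ implies $U\ge d$, and $\mathcal C=\emptyset$ gives $U=\infty$. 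So $\Pr(d\le U\mid F)\ge 1-\Pr(\widehat d<t(d,b)\mid F)$, whose limit is $1-u$.

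For (ii) I would argue by subsequences. Suppose uniformity failed. Then there would be a sequence $(r_M,b_M)$ with $|r_M|\le R$ and a fixed $\epsilon>0$ such that the error stays above $\epsilon$. Pass to a subsequence along which $r_M\to r^\ast$ and $b_M/d\to\rho^\ast$. Theorem~\ref{thm:main} is stated for fixed $r$, but its proof should go through for a convergent $r_M$, since \eqref{eq:window} only needs $d/M=\alpha_c+r^\ast/\sqrt M+o(M^{-1/2})$. This is in fact exactly the form of \eqref{eq:window}, so the theorem applies verbatim along the subsequence. The limit along the subsequence is then $u$, which is a contradiction.

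Part (iii) is the main obstacle, because as $r\to+\infty$ the load leaves the window and Theorem~\ref{thm:main} no longer applies directly. My plan is the following.
\begin{itemize}
\item Show that $\Pr(F)\to1$: along subsequences with $d/M\to\alpha>\alpha_c$ this is the classical 2-core threshold, and with $r\to\infty$ inside the window it follows from monotonicity in $r$ together with $\Phi(z_r)\to1$.
\item Since $\Pr(F)\to1$, conditioning is asymptotically void, and $X\Rightarrow N(0,1)$ unconditionally. This holds because $\E X=0$ and $\Var X=1$ exactly, and the limiting law is normal for bounded load, by Theorem~\ref{thm:sign} together with the all-same-sign CLT (which I take as available from the proof of Theorem~\ref{thm:main}). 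In particular $\Pr(X<x\mid F)\to\Phi(x)$ uniformly in $x$.
\item Because $z(d)\to+\infty$, the selection quantile satisfies $q_{\rm sel}(u;z(d),c)\to\Phi^{-1}(u)$ uniformly in $|c|\le|\eta_k|$. Indeed $\Phi_2(x,z;c)/\Phi(z)\to\Phi(x)$ uniformly in $x$ and $c$ as $z\to\infty$.
\end{itemize}
Combining these gives $\Pr(\widehat d<t(d,b)\mid F)\to u$, and coverage follows as before. Any subsequence along which $d/M$ converges can be handled this way, which gives the liminf statement. The delicate point is the normal limit of the unconditional $X$ for loads bounded away from the window. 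I would try to obtain it from the quenched CLT of Theorem~\ref{thm:sign} applied with $\gamma X_{\rm one}$. If the paper's all-same-sign CLT is only stated near $\alpha_c$, I would extend it by a martingale CLT on the overlap quadratic form, which holds for any bounded positive load.
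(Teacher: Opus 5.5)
Your proposal is correct and follows essentially the paper's own route: (i) by convergence of $q_{\rm sel}(u;z(d),\eta_k\rho^2)$ combined with the uniform conditional CDF limit of Theorem~\ref{thm:main}(iii); (ii) by a compactness and subsequence argument; and (iii) by edge-addition monotonicity to get $\Pr(F)\to1$, an unconditional CLT at bounded load (the paper's Lemma~\ref{lem:loadaxis}, proved via the occupancy CLT, the no-bad-edge transfer and Theorem~\ref{thm:sign}), and $q_{\rm sel}\to\Phi^{-1}(u)$ as $z\to\infty$. The only differences are cosmetic: you use the elementary bound $\sup_x|\Pr(X\le x\mid F)-\Pr(X\le x)|\le\Pr(F^c)$ where the paper invokes the KS identity from \cite{prior}, and your fallback martingale CLT is the edge-revealing one the paper itself uses in Appendix~\ref{app:lowload}.
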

Under a completely one-sided sign, $d = |b|$ is the smallest candidate and the coverage is always one, so
the theorem is stated uniformly as ``$\ge 1-u$''. The proof is in Appendix~\ref{app:upper}.

Item (iii) handles the side where the load is above the critical window. There failure becomes increasingly
certain, and the selection effect weakens accordingly: the center of the score moves to a deeply negative
value, truncation barely changes the marginal distribution, and the conditional law tends to the
unconditional normal. At the same time, as $z(n) \to +\infty$ the selection law quantile tends to the
normal quantile, and the two coincide automatically. When the load is below the critical window, the source
of failure changes, and a new conditional law is required.

\subsection{Two failure mechanisms outside the window}\label{sec:loadaxis}
Below the critical value, the fluid trajectory of peeling leaves a positive gap at the bottleneck, so that forming a
macroscopic core requires a large fluctuation, whose probability decays rapidly as the load decreases. At the same time
there is another failure mechanism almost independent of the load position: two elements choose exactly the same
$k$-subset. The two edges block each other, each endpoint has degree at least two, and peeling can never remove them,
so they constitute a core consisting of exactly two edges. Its probability is about $\lambda = \binom{d}{2}/\binom{M}{k}$,
which is small in the sparse low-load region, but the probability of a macroscopic failure decays even faster and is
eventually overtaken by it. At a fixed subcritical load, failure is almost always caused by two repeated edges
(Proposition~\ref{prop:lowload} of Appendix~\ref{app:lowload}).

The effect of a repeated edge on the measurement can be computed directly. If the two repeated edges have the same
sign, their counts on the $k$ shared cells add up, the energy exceeds that of two non-overlapping edges by about $2k$,
and the estimate is shifted by about $+2$; if they have opposite signs, the counts cancel and the estimate is shifted by
about $-2$. Conditioned on the presence of a repeated edge, the probability that the two edges have the same sign is
$(1+\gamma)/2$, where $\gamma$ is the sign factor of Section~\ref{sec:sign}. On average the estimate is shifted by
$2\gamma$. This shift is constant on the original scale, while the standard deviation $s$ grows with the system size, so
the standardized measurement at the repeated-edge layer tends again to $N(0,1)$.

Between the critical window and a fixed subcritical load lies the moderate deviation region ($r \to -\infty$ and
$|r| \le C\sqrt{\log M}$), where the two mechanisms coexist. Proposition~\ref{prop:mixture}
(Appendix~\ref{app:mixture}) proves that failures caused by a macroscopic core still obey the selection law, and the law
of the measurement conditioned on failure is a weighted mixture of the macroscopic layer and the repeated-edge layer,
with the two weights proportional to $\Phi(z_r)$ and $\lambda$, respectively. As the load moves down from the critical
window, $\Phi(z_r)$ decays like a normal tail while $\lambda$ stays almost unchanged, so the weight passes smoothly from
the macroscopic layer to the repeated-edge layer.

Replacing the null distribution in the candidate test by this mixture law gives an upper bound covering the entire load
axis. For a candidate $n$ write $z_n = z(n)$, $c_n = \eta_k b^2/n^2$,
$\lambda_n = \binom{n}{2}/\binom{M}{k}$, and $w_n = \Phi(z_n)/(\Phi(z_n) + \lambda_n)$, and set
\begin{equation*}
H_n(x) = w_n\frac{\Phi_2(x,z_n;c_n)}{\Phi(z_n)} + (1-w_n)\Phi(x),\qquad
T_n = n + s(n)\,H_n^{-1}(u).
\end{equation*}
Here $H_n$ is the mixture CDF of the macroscopic layer (weight $w_n$, selection law) and the repeated-edge layer
(weight $1-w_n$, standard normal). Replacing $t(n,b)$ of Section~\ref{sec:upper-construct} by $T_n$ and leaving the
rest of the construction unchanged gives an upper bound, still denoted by $U$.

\begin{proposition}[Load-axis coverage]\label{prop:axis}
Fix $k \ge 3$ and $u \in (0,1)$. If $d/\sqrt{M} \to \infty$, then, whatever the limit of $d/M$ is, whichever side of
the critical point the load lies on, and whether or not the sign composition converges, we have
\begin{equation*}
\sup_x\bigl|\Pr(X \le x \mid F) - H_d(x)\bigr| \to 0,\qquad
\liminf_M \Pr(d \le U \mid F) \ge 1-u.
\end{equation*}
The condition $d/\sqrt{M} \to \infty$ cannot be dropped. Take balanced signs and an even $d \to \infty$ with
$d = o(\sqrt{M})$; then for $u \in (0,1/2)$ we have $\Pr(U < d \mid F) \to 1/2$. For fixed $d = 4$ with two edges of
each sign, this probability tends to $2/3$.
\end{proposition}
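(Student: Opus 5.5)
The proof establishes the uniform mixture approximation $\sup_x|\Pr(X\le x\mid F)-H_d(x)|\to0$ first. Coverage then follows by the candidate-test argument of Section~\ref{sec:upper-construct}: $\Pr(U<d\mid F)\le\Pr(\widehat d<T_d\mid F)=\Pr(X<H_d^{-1}(u)\mid F)$. The right-hand side tends to $u$, since every limit point of $H_d$ has a continuous, strictly increasing CDF, and uniform convergence of the distribution functions then carries over to the quantile. Because $d/M$ and $\rho$ need not converge, the argument runs along subsequences. Given any subsequence, pass to a further one along which $d/M\to\alpha\in[0,\infty]$, $\rho\to\rho_\infty$, and $r=(d/M-\alpha_c)\sqrt M$ converges in $[-\infty,\infty]$. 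It then suffices to prove the uniform approximation on each such sub-subsequence.

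\emph{Case analysis by regime.}
(a) If $r\to r_0\in\mathbb R$, then $w_d\to1$, because $\lambda_d\asymp M^{2-k}\to0$ while $\Phi(z_d)$ stays bounded away from $0$. The claim is then Theorem~\ref{thm:main}(iii). Here we use that $z(d)$ matches $z_r$ up to $o(1)$, since $-\beta_k/\sigma_k=ka_ce^{-x_c}/\sigma_k$ by Appendix~\ref{app:constants}, and that $\Phi_2$ is continuous in $c$ uniformly in $x$.
(b) If $r\to+\infty$ (this includes $\alpha>\alpha_c$ and $\alpha=\infty$), then $\Pr(F)\to1$ and $w_d\to1$, while $\Phi_2(x,z_d;c)/\Phi(z_d)\to\Phi(x)$ uniformly. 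So we need $X\Rightarrow N(0,1)$ unconditionally. This holds for $\alpha<\infty$ by Theorem~\ref{thm:sign} together with the all-same-sign CLT. When $d/M\to\infty$, we instead use the deep-overload chi-square limit standardized, or a direct fourth-moment bound, to again get $N(0,1)$.
(c) For the moderate-deviation range $r\to-\infty$ with $|r|\le C\sqrt{\log M}$, invoke Proposition~\ref{prop:mixture}. This gives the mixture of the selection law, with weight $\propto\Phi(z_r)$, and the repeated-edge layer, with weight $\propto\lambda$.
(d) For loads further below the window (subcritical fixed $\alpha<\alpha_c$, or $\alpha=0$), invoke Proposition~\ref{prop:lowload}: failure is asymptotically caused by a repeated pair, so $w_d\to0$. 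It remains to show $\Pr(X\le x\mid F)\to\Phi(x)$. Conditioned on a duplicated pair $\{i,j\}$, the measurement equals the measurement of the remaining $d-2$ edges plus a shift of size $O(1)$ on the $\widehat d$ scale. That shift is $o(s)$ exactly because $s\asymp d/\sqrt M\to\infty$. The remaining $d-2$ edges are nearly independent of the event that $\{i,j\}$ coincide, so their standardized energy is asymptotically $N(0,1)$. When $d=o(M)$, $X$ is a degenerate $U$-statistic with sparse overlaps, so a martingale CLT on the overlap graph applies; in fact the proof of Theorem~\ref{thm:sign} extends to $\alpha=0$ as long as $d^2/M\to\infty$.

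\emph{Main obstacle.} I expect the hardest part to be case (d) with $\alpha=0$, together with the boundary between (c) and (d). Along the way one must show that the weights $w_d$ computed from $\Phi(z_d)$ and $\lambda_d$ correctly track the true failure decomposition uniformly. My first attempt would be a union bound: whenever either weight is negligible, the mixture form error is $o(1)$ in sup-norm regardless of which layer's law is used, so only overlapping regimes need the precise Proposition~\ref{prop:mixture}.

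\emph{Counterexamples.} For balanced signs with $d=o(\sqrt M)$, failure is dominated by a repeated pair. With probability near $1$ there are no other overlaps, so $\widehat d-d$ equals exactly $\pm2$, and each sign occurs with probability $(1\pm\gamma)/2\to1/2$. Meanwhile $s\to0$, so the threshold $T_d=d+s\,H_d^{-1}(u)$ lies in $(d-2,d)$. Consequently $d$ is rejected precisely when the pair has opposite signs. We then check that no larger candidate is accepted: $T_n\approx n$ for $n>d$, and $\widehat d=d-2<n$. Hence $\Pr(U<d\mid F)\to1/2$. For $d=4$ with two signs of each kind, the number of opposite-sign pairs is $4$ out of $6$, which gives $2/3$.
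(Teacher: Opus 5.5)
Your architecture (pass to subsequences, identify the failure-conditioned law regime by regime, then read coverage off the uniform CDF approximation) is the paper's. The case split, however, leaves one regime uncovered: $d/M\to\alpha_c$, $r\to-\infty$ with $|r|/\sqrt{\log M}\to\infty$. Proposition~\ref{prop:mixture} does not apply there, because it needs $|r|\le C\sqrt{\log M}$. Proposition~\ref{prop:lowload} does not apply either, because it needs $d/M\to\alpha<\alpha_c$. Your extraction of limits for $d/M$ and $r$ does not separate this case from (c). Your ``union bound'' remark only shows $H_d\approx\Phi$ there: $\Phi(z_d)$ is superpolynomially small, so $w_d\to0$. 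What must actually be proved is that the \emph{true} probability of a macroscopic core is $o(\lambda)$, and nothing you cite controls it in this range. The missing idea is monotonicity. Under the edge-addition coupling the edge set of the 2-core, and hence the event $\mathcal A$, is monotone in $d$. Once $d_M\le d'_M=\lfloor\alpha_cM-C\sqrt{M\log M}\rfloor$, Proposition~\ref{prop:mda} at the edge of its range gives $\Pr_{d_M}(\mathcal A)\le\Pr_{d'_M}(\mathcal A)=M^{-g_k^2C^2/2+o(1)}$. Since $\lambda_M\asymp M^{2-k}$, this is $o(\lambda_M)$ as soon as $g_k^2C^2/2>k-2$. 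Combined with the $O(M^{-b_k})$ count of cores with $3$ to $\varepsilon_0M$ edges, this gives $\Pr(F\setminus D)=o(\lambda_M)$. The planted repeated-edge CLT, which remains valid as $d/M\to\alpha_c$, then gives $X\mid F\Rightarrow N(0,1)$.

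Your $\alpha=0$ branch and your counterexample also assert ``failure is a repeated pair'' outside the hypotheses of Proposition~\ref{prop:lowload}. For the positive claim this is repairable. The first-moment bound $O(M^{-b_k})$ is monotone in $d$, and it is $o(d^2M^{-k})$ when $d^2/M\to\infty$ because $b_k\ge k-1$; you should say so. In the counterexample this bound is not enough: for $k=3$ the ratio is $\asymp M/d^2\to\infty$. You must keep the $d$-dependence, $\Pr(F\setminus D)\le C_k(d^3M^{-b_k-3}+d^4M^{-2k})$, which is $o(\lambda)$ for every $d=o(M)$. The counterexample also needs $\mathcal C\neq\emptyset$, otherwise $U=+\infty$ would cover. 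This holds because the candidate $n=d-2$ is accepted, since $T_{d-2}<d-2<\widehat d$. Replace ``$T_n\approx n$'' by the exact facts that $H_n\equiv\Phi$ when $b=0$ and that $T_n=n+s(n)\Phi^{-1}(u)$ is strictly increasing for large $M$.

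Two further justifications should be replaced. First, limit points of $H_d$ can be defective in the moderate-deviation range, because the selection component is centred near $-\eta_k\rho^2t_M$. You do not need them: continuity of $H_d$ for each $M$ gives $H_d(H_d^{-1}(u))=u$ and $|\Pr(X<q\mid F)-H_d(q)|\le\sup_x|\Pr(X\le x\mid F)-H_d(x)|$. Second, for $d/M\to\infty$ the fixed-$M$ chi-square limit does not apply. Use instead the edge-revealing martingale, whose differences satisfy $\sum_j\E\Delta_j^4/s^4\le C_k(M/d^2+1/d)$. This gives the unconditional CLT for every sequence with $d/\sqrt M\to\infty$, uniformly in the signs. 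It also serves as the planted CLT in your $\alpha=0$ branch, without extending Theorem~\ref{thm:sign}.
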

The counterexample comes from the $\pm 2$ shift of repeated edges. The order of magnitude of $s$ is $d/\sqrt{M}$, so
when $d = o(\sqrt{M})$ the shift $2/s$ does not vanish on the standardized scale; opposite-sign repeated edges push the
estimate too low, and the true value is then rejected. The proof and the construction of the counterexample are given in
Appendix~\ref{app:axis}.

\subsection{Implementation rules at finite size}\label{sec:rules}
Theorem~\ref{thm:upper} and Proposition~\ref{prop:axis} are asymptotic statements. At finite
size two systematic biases require correction, and both admit analytic expressions without
fitted parameters.

The first is the threshold center shift. The critical threshold shift given by
Dembo--Montanari\cite{dm}, rescaled to the scale of this paper, replaces $z(n)$ with
\begin{equation*}
z_{\rm sh} = z(n) + C_k M^{-1/6}.
\end{equation*}
The constant $C_k$ is given in closed form by Airy integrals and the critical constants
(Appendix~\ref{app:ck}); for $k = 3, 4, 5, 6$, $C_k \approx 1.930$, $2.015$, $2.016$, $2.005$.
The empirical basis is given in Section~\ref{sec:threshold}.

The second is the chi-square shape correction. At finite size the unconditional measurement
is closer to the standardized chi-square $(M-1)$ distribution (Section~\ref{sec:chisq}), and
the quantile requires the additional term
\begin{equation*}
\kappa(u) = \frac{\chi^2_{M-1,u} - (M-1)}{\sqrt{2(M-1)}} - \Phi^{-1}(u).
\end{equation*}
Both corrections vanish as $M \to \infty$ and do not affect the asymptotic coverage.

Both corrections act on the acceptance threshold of Section~\ref{sec:upper-construct}:
replacing $z(n)$ by $z_{\rm sh}$ and adding $\kappa(u)$ to the quantile gives the corrected
selection threshold
\begin{equation}
n + s(n)\bigl[q_{\rm sel}\bigl(u;\,z_{\rm sh},\,\eta_k b^2/n^2\bigr) + \kappa(u)\bigr].
\label{eq:corrected}
\end{equation}
It describes macro-core failure. On the low-load side below the critical window, failure comes
mainly from repeated edges, and using \eqref{eq:corrected} alone gives undercoverage
(Section~\ref{sec:num-loadaxis}). The two implementation rules below handle this in different
ways; the comparison baseline is the unconditional threshold $n + s(n)\Phi^{-1}(u)$, which
applies no selection correction at all.

The envelope rule takes, for each candidate, the lower of \eqref{eq:corrected} and the
unconditional threshold. The acceptance set is the union of the acceptance sets of the two
thresholds, so a candidate is rejected only when both thresholds reject it, and the probability
that the true value is wrongly rejected is at most the probability under either threshold alone.
For $k = 3, 4$ the quantile of the selection law lies above the Gaussian quantile, so the
envelope rule takes the unconditional threshold. For $k \ge 5$ the selection correction lowers
the threshold; when the signs are far enough from balance it outweighs $\kappa(u)$ and the
envelope rule applies it, while near balance the unconditional threshold is lower.

The mixture rule replaces the selection law in \eqref{eq:corrected} with the mixture law $H_n$
of Section~\ref{sec:loadaxis}, again taking the coordinate $z_{\rm sh}$ and adding $\kappa(u)$;
the macro-layer weight is $\Phi(z_{\rm sh})/(\Phi(z_{\rm sh}) + 1 - e^{-\lambda_n})$, where
$1 - e^{-\lambda_n}$ is the Poisson approximation to the repeated-edge probability. It describes
the two failure mechanisms explicitly. Its asymptotic counterpart covers the entire load axis when
$d/\sqrt M\to\infty$ (Proposition~\ref{prop:axis}), and the finite-size rule is checked numerically. The envelope rule is
simpler in form, while the mixture rule gives a narrower upper bound on the low-load side at
$k=3$. Coverage tests of the two rules are given in Section~\ref{sec:e2e}.

\section{Numerical Evaluation}\label{sec:numerics}
The numerical evaluation has two parts. Sections~\ref{sec:constants}--\ref{sec:num-loadaxis}
verify the theory term by term under the known true load and sign composition, while
Section~\ref{sec:e2e} constructs an upper bound from only the quantities observable after
failure and checks its coverage. Table~\ref{tab:exp-overview} lists the question addressed in
each section and the corresponding theoretical result.

\begin{table}[!htbp]
\centering
\caption{Overview of the experiments.}
\label{tab:exp-overview}
\small
\begin{tabular}{lp{.5\linewidth}ll}
\toprule
Section & Question tested & Corresponding theory & Result\\
\midrule
\ref{sec:constants} & the conditional variance, the direction of the conditional mean, and the $\rho^2$ dependence of the selection effect & Theorem~\ref{thm:main} & Figure~\ref{fig:selection}\\
\ref{sec:threshold} & the shift of the threshold center at finite size & the shift constant $C_k$ & Figure~\ref{fig:threshold}\\
\ref{sec:chisq} & the shape of the unconditional measurement under balanced signs & the chi-square correction & Figure~\ref{fig:chisq}\\
\ref{sec:calib} & the accuracy of the lower-tail probability after combining the two corrections & Eq.~\eqref{eq:corrected} & Figure~\ref{fig:calib}\\
\ref{sec:num-loadaxis} & the separation of the macro layer and the repeated-edge layer at low load & Proposition~\ref{prop:lowload}, \ref{prop:mixture} & Figure~\ref{fig:twomech}\\
\ref{sec:e2e} & the coverage and width of the upper bound $U$ when $d$ is unknown & the envelope rule and the mixture rule & Table~\ref{tab:e2e}\\
\bottomrule
\end{tabular}
\end{table}

\subsection{Experimental Design and Comparison Targets}\label{sec:exp-design}
\emph{Simulation setup.} The experiments use an independent Plain mapping and ideal peeling.
The analytic constants are computed by a separate script and are not fitted to the simulation
data. The experiments below the critical load cover $k\in\{3,5\}$, $M\in\{256,1024,4096,16384,65536\}$, and
four target coordinates $\{-2,-2.5,-3,-3.5\}$, for 40 configurations in total. Because $d$
must be an integer, the actual window location is determined by the chosen integer $d$.

\emph{Three coordinates.} The first is the theorem coordinate:
\begin{equation*}
z_0=\frac{(d/M-\alpha_c)\sqrt M}{w_k},\qquad w_k=\sigma_k/|\beta_k|.
\end{equation*}
The second is the shifted coordinate $z_{\rm sh}=z_0+C_kM^{-1/6}$, with the numerical values
of the shift constant $C_k$ in Section~\ref{sec:rules} and the closed form in
Appendix~\ref{app:ck}. The result of DM concerns the configuration model;
Section~\ref{sec:threshold} verifies this shift on the Plain model. A sequence obtained by
fixing $z_{\rm sh}$ corresponds to a single limiting $r$, and therefore satisfies the
conditions of the theorem. The third is the empirical coordinate $z_{\rm emp}=\Phi^{-1}(\widehat p_F)$,
recovered from the measured failure rate and used only for diagnosis.

\emph{Classification of failures.} A run fails when the 2-core is nonempty after peeling.
According to the number of edges in the remaining core, failures are divided into macro-core
failures and small-core failures; a small core containing a pair of identical edges is
assigned to the repeated-edge layer. A macro core is identified by the core edge count
exceeding $0.1d$ (or $0.1M$ for direct sampling).

\emph{Data split.} For the calibration experiments, samples from each configuration are split by trial
index into two disjoint halves: the empirical threshold and the empirical failure rate come
from the training half, and the actual lower-tail probability is computed on the failing
samples of the test half. The calibration experiments provide 624 measurements across sign compositions
and independent runs, covering 115 distinct $(k,M,d)$ configurations.

\emph{Thresholds compared.} Sections~\ref{sec:constants}--\ref{sec:num-loadaxis} use the true
load and sign composition and compare four thresholds: the unconditional normal threshold,
the unconditional chi-square threshold, the threshold given by the selection law, and the
corrected selection threshold of Eq.~\eqref{eq:corrected} (with $n=d$ and
$b^2/n^2=\rho^2$). Both corrections come from analytic formulas and contain no fitted
coefficients; the basis for the chi-square term is given in Section~\ref{sec:chisq}.
Section~\ref{sec:e2e} compares the envelope rule and the mixture rule of
Section~\ref{sec:rules}, with the unconditional threshold as the baseline and $d$ given by the
inversion of Section~\ref{sec:upper}.

\subsection{Verification of the analytic constants}\label{sec:constants}
Each constant in Theorem~\ref{thm:main} can be verified separately. This section gives the
conditional variance and the direction of the mean; the window width, dynamic variance, and
selection strength are verified in Appendix~\ref{app:exp-constants}, and cross-validation
across independent realizations appears in Appendix~\ref{app:exp-indep}.

\emph{Conditional variance.} At $k=6$, $M=16384$, and failure probability $0.10$, the
measured value is $0.9604$, and the selection law gives $0.9600$. At $k=3$, on the
low-failure-probability grid points $M=65536$ and $262144$, the measured values are $0.8989$
and $0.8987$, and the theoretical value is $0.8971$.

\emph{Sign change.} At every failure-probability level and every $M$, the one-sided
conditional mean is negative for $k=5,6$ and positive for $k=3$. At $M=262144$, the largest
deviation between measurement and prediction is $1.4$ standard errors.

Figure~\ref{fig:selection} summarizes 444 measurements across sign compositions and independent runs
at 85 distinct $(k,M,d)$ configurations with $M\ge4096$ ($k=3$--$6$, $M=4096$--$262144$); the theoretical curves use only the analytic
constants $\eta_k$. The horizontal axis is the measured failure probability, equivalently
comparing on the empirical coordinate $z_{\rm emp}=\Phi^{-1}(\widehat p_F)$. Thus this figure
verifies the conditional distribution itself; the finite-size shift of the threshold center is
handled separately in Section~\ref{sec:threshold}. When all signs are equal, the conditional
means of all four $k$ lie on the curve $-\eta_kA(z_{\rm emp})$, and the direction reverses
between $k=4$ and $k=5$. After dividing the conditional mean by $-A(z_{\rm emp})$, all sign
compositions lie on a straight line through the origin with slope $\eta_k$, and the balanced
sign composition returns to the origin; this panel verifies both the value of $\eta_k$ and its
decay in $\rho^2$. For the all-equal-sign rows, the median relative error at $k=3$ decreases
from $4.7\%$ at $M=4096$ to $0.8\%$ at $M=262144$, and at $k=5$ it does not exceed $0.5\%$ at
any size.
\begin{figure}[!htbp]
\centering\includegraphics[width=\linewidth]{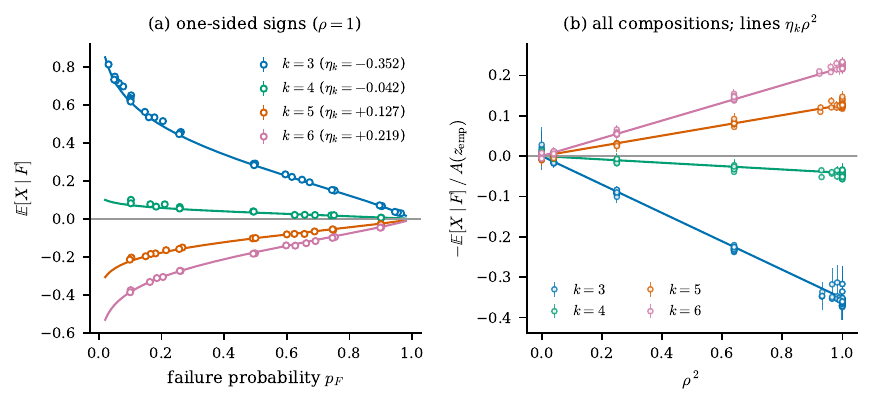}
\caption{Failure conditional mean versus the analytic prediction ($M\ge4096$). (a) all signs
equal, curve $-\eta_kA(\Phi^{-1}(p_F))$; (b) all sign compositions, line $\eta_k\rho^2$.
}
\label{fig:selection}
\end{figure}

\subsection{Threshold shift at finite size}\label{sec:threshold}
At finite $M$, the failure probability at $r=0$ is not equal to $1/2$. We use 112 central-window simulation runs across 111 distinct $(k,M,d)$ configurations ($|z_0|\le2$, $k=3$--$6$, $M=64$--$262144$) to test this shift, converting
in three steps. First we approximate the probability of duplicate edges by
$p_{\rm dup}=1-e^{-\lambda}$, where $\lambda=\binom d2/\binom Mk$. Then, assuming the
macro-core event and the duplicate-edge event are approximately independent, we convert to the
macro failure probability $p_{\rm core}=(\widehat p_F-p_{\rm dup})/(1-p_{\rm dup})$. Finally we
set $z_{\rm emp}=\Phi^{-1}(p_{\rm core})$ (Figure~\ref{fig:threshold}).

For $M\ge1024$, the measured shift $z_{\rm emp}-z_0$ is 94\%--101\% of the analytic prediction
$C_kM^{-1/6}$; including all data, the ratio is 88\%--103\%. Using only $z_0$, even at
$M=262144$ the mean absolute error of the failure probability remains 0.06--0.10; after
switching to $z_{\rm sh}$, the error does not exceed 0.008 for $M\ge4096$. For example, at
$k=3$, $r=0$, $M=65536$, the measured failure rate is 0.617 while $\Phi(z_{\rm sh})=0.619$.
After the shift, the residual for $M\ge4096$ is at most 0.06 on the $z$ scale, and the largest
residual appears at the window edge, with a direction and magnitude consistent with the known
finite-size bias of the window width.

These data support the following practice: in the tested Plain configurations, the shift
constant of the configuration model approximates the threshold center at finite size. This
conclusion is a numerical verification; whether the shift order transfers to the Plain model
still requires a separate proof. Together with Appendix~\ref{sec:convergence}, this section
separates two finite-size effects: the shift of the threshold center is captured by the
translation, while the bias of the conditional distribution itself at a fixed failure
probability is unaffected by the translation and is examined separately.
\begin{figure}[!htbp]
\centering\includegraphics[width=\linewidth]{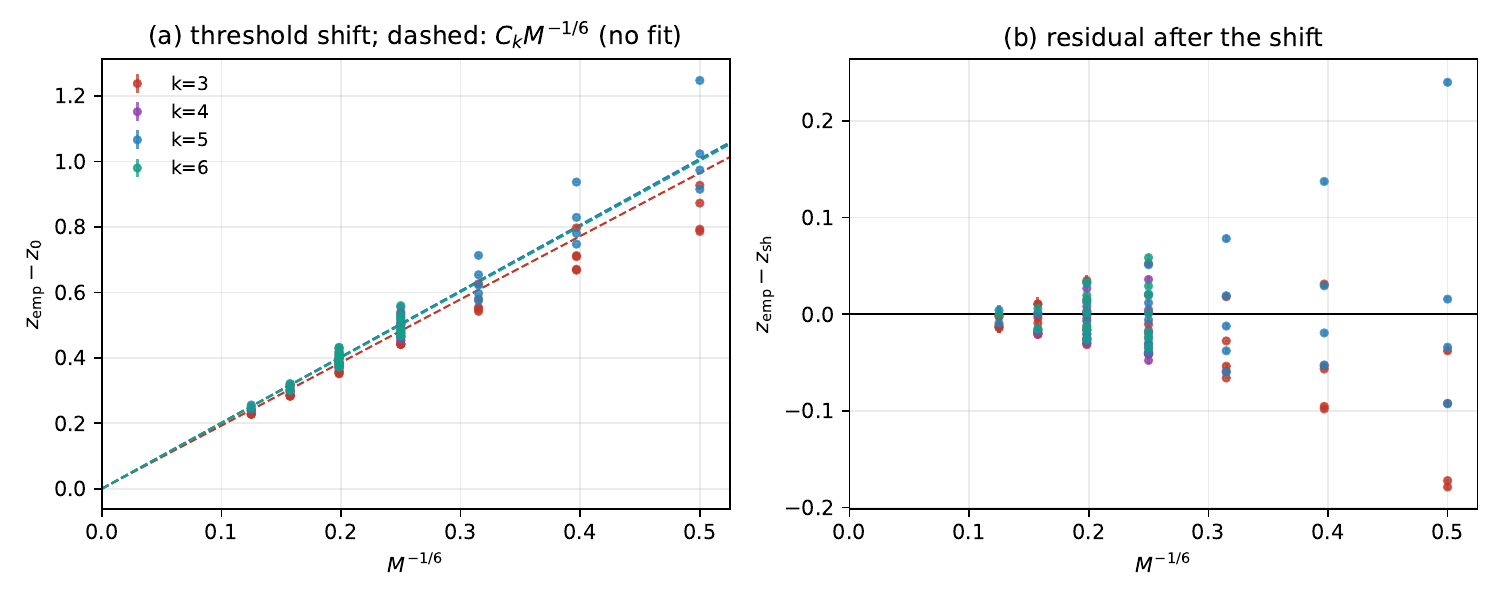}
\caption{Location of the threshold center. (a) measured shift and the analytic prediction
$C_kM^{-1/6}$ (dashed); (b) residual after the shift.}
\label{fig:threshold}
\end{figure}

\subsection{Chi-square shape of the unconditional measurement}\label{sec:chisq}
Under balanced signs, failure selection does not operate, so the shape of the unconditional
measurement at finite size can be tested separately. We compare its empirical quantile with the
quantile $[\chi^2_{M-1,u}-(M-1)]/\sqrt{2(M-1)}$ of the standardized chi-square $(M-1)$,
without fitting any parameters (Figure~\ref{fig:chisq}). For $k=3$ and $u=0.01$, at
$M=64,256,1024,4096,65536$ the measured quantiles are $-2.058$, $-2.205$, $-2.255$, $-2.293$,
$-2.319$; the chi-square gives $-2.064$, $-2.196$, $-2.261$, $-2.294$, $-2.318$; the normal
quantile is $-2.326$. The cases $k=4,5,6$ and $u=0.05$ likewise agree to about $0.01$.

Thus even at critical load, where the count of a single cell is far from normal, the shape at
finite $M$ is consistent with the chi-square inheritance obtained under deep overload in
\cite{prior}. The standardized chi-square has skewness $\sqrt{8/(M-1)}$, which vanishes as $M$
grows, consistent with the normal limit of Theorem~\ref{thm:main}. This yields the shape
correction $\kappa(u)$ of Section~\ref{sec:rules}, which contains no fitted parameters. The
chi-square result of \cite{prior} fixes $M$ and increases the load; here $M$ and $d$ grow
together. The agreement provides an empirical bridge between the two limiting orders.
\begin{figure}[!htbp]
\centering\includegraphics[width=.8\linewidth]{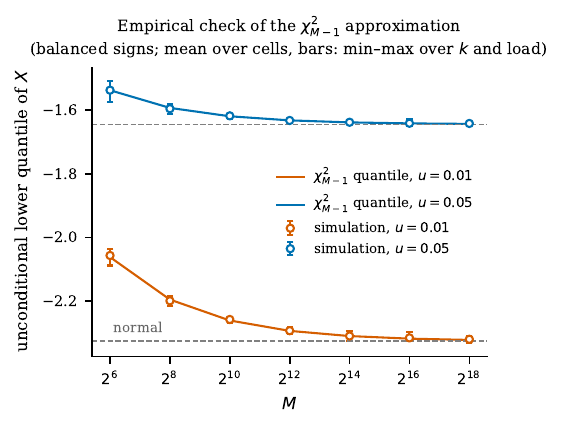}
\caption{Lower quantile of the unconditional measurement under balanced signs, and the
standardized chi-square $(M-1)$ quantile.}
\label{fig:chisq}
\end{figure}

\subsection{Lower-tail calibration in the critical window}\label{sec:calib}
Figure~\ref{fig:calib} compares the nominal lower-tail level with the actual lower-tail
probability in the failed test samples. The measured configurations have one-sided signs, lie
in the central window, and have failure probability at least about $0.03$. For these
configurations, when $M\ge256$ the additive corrections of selection and chi-square reduce the
grouped RMS relative error to about $0.04$--$0.08$; the unconditional threshold still carries
a substantial selection error.

The configurations included in each group differ across sizes, so the figure reports both the
dispersion and the number of configurations. The main theorem gives the direction and limiting
magnitude of the selection effect; this experiment gives the precision of the two analytic
corrections at finite size.
\begin{figure}[!htbp]
\centering\includegraphics[width=\linewidth]{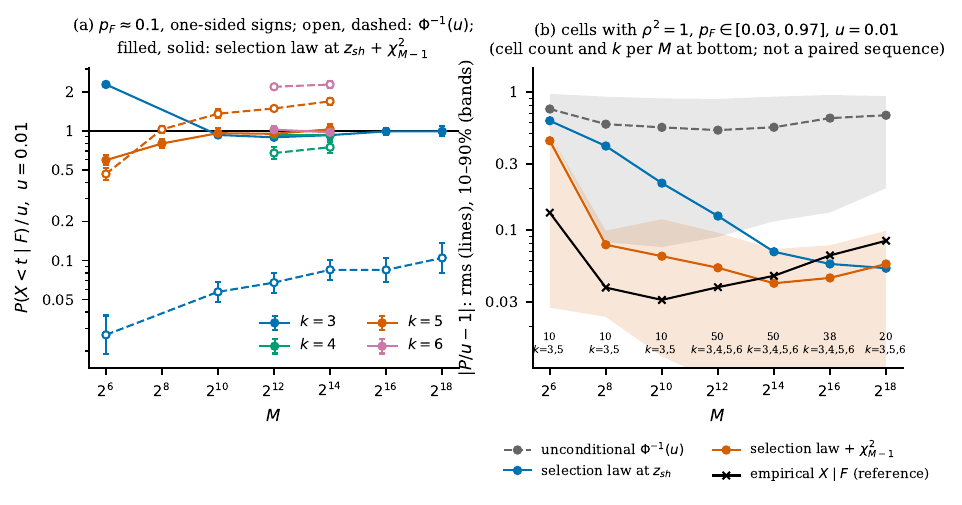}
\caption{Lower-tail calibration in the central window. (a) ratio of the actual lower-tail
probability to its nominal value when the failure probability is about $0.1$ and all signs
positive; (b) root mean square relative error at the $\rho^2=1$ grid points.}
\label{fig:calib}
\end{figure}

\subsection{Two Failure Mechanisms Along the Load Axis}\label{sec:num-loadaxis}
This set of experiments covers 32 configurations and about 1.69 million failure samples. We split failures into two
layers according to the actual core: a macroscopic layer (the core contains more than $0.1d$ edges) and a
repeated-edge layer (the core contains a pair of identical edges); the classification criterion is stated in
Section~\ref{sec:exp-design}.

The repeated-edge layer is further grouped by the sign product $\sigma$; its conditional mean equals the
unconditional mean plus $\pm2/s_M$. The largest $|z|$ among the 54 test units is $3.13$. On the macroscopic layer,
for each of the 96 test units we test whether the rejection rate of the corrected selection threshold equals $u$,
using a $99\%$ interval; no interval excludes $u$, and for $M\ge4096$, rej$/u$ lies between $0.83$ and $1.25$.
The difference between the macroscopic-layer weight computed from the analytic formula and the measured value is at
most $0.032$ for $M\ge4096$, and at most $0.009$ for $M=16384$ and $65536$. These results support the mechanistic
picture described by Proposition~\ref{prop:lowload} and Proposition~\ref{prop:mixture}.

Figure~\ref{fig:twomech} shows the layering results for $k=3$ with one-sided signs. As $z_{\rm sh}$ decreases, the
weight of the macroscopic layer passes from 1 to 0, and the transition location deepens as $M$ grows; the theoretical
curve $\Phi(z_{\rm sh})/(\Phi(z_{\rm sh})+p_{\rm dup})$ contains no adjustable parameters. The
conditional mean of all failures falls to $2/s_M$ of the repeated-edge layer on the low-load side and returns to the
mean given by the macroscopic-layer selection law on the window side.
The ratio of the measured shift of the repeated-edge layer to $2\sigma/s_M$ lies between $0.90$ and $1.05$ for every
$M$ and both values of $\sigma$.
In the region where the macroscopic layer dominates, the conditional mean exceeds the theoretical value by
15\%--60\% at $M=1024$, and drops to 3\%--8\% for $M\ge16384$; this has the same origin as the positive residual of
the macroscopic layer in Appendix~\ref{sec:convergence}.

Using only the corrected selection threshold also leads to undercoverage: in the low-load region with $k=3$, the
noncoverage under small-core failure can reach 99 times the nominal value. The two implementation rules of
Section~\ref{sec:rules} are designed precisely for this.
\begin{figure}[!htbp]
\centering\includegraphics[width=\linewidth]{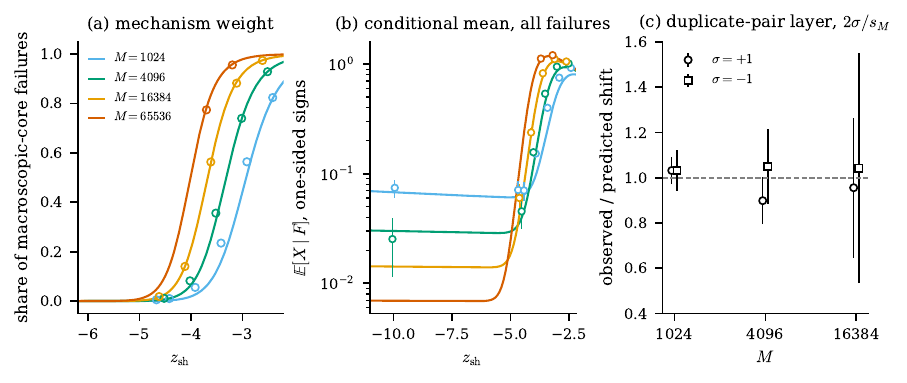}
\caption{$k=3$ low-load two-mechanism layering. (a) Fraction of macroscopic-core failures; (b) conditional
mean of all failures (all-equal sign); (c) ratio of the measured shift of the repeated-edge layer to
$2\sigma/s_M$. All curves are analytic predictions.}
\label{fig:twomech}
\end{figure}

\subsection{Coverage of the upper bound for unknown $d$}\label{sec:e2e}
The preceding sections test the conditional law under a known true load and sign composition. This section tests what the protocol can actually use: using only the post-failure observable
$\widehat d$ and the total sign $b$, construct the upper bound $U$ following Section~\ref{sec:upper}, and check whether it covers the true $d$.
What is tested here is the non-coverage probability of the post-failure upper bound, namely the probability that $U<d$, and it does not involve a second decoding round.

The post-failure coverage test comprises 288 test units, with criteria fixed before running. For each unit we estimate the non-coverage probability $\Pr(U<d\mid F)$ and reject the claim when the lower end of its $99.9\%$ interval (Wilson for direct sampling, the delta method for splitting) exceeds $u$ for the envelope rule or $1.3u$ for the mixture rule.
No unit is rejected for either rule. The largest estimated ratios to the nominal level are 1.106 and 1.151, and the lower ends of their intervals lie below $u$. Table~\ref{tab:e2e} summarizes the two rules and the
unconditional threshold by region. We also tested two further candidate rules: the asymptotic form of Proposition~\ref{prop:axis} without a
finite-size correction, whose maximum reaches $1.2$ times the nominal value, and the rule that switches the threshold according to
$\widehat d$, whose maximum reaches $3.24$ times. Their definitions and results appear in Appendix~\ref{app:exp-e2e-other}.

The crossover region in Table~\ref{tab:e2e} covers $-8.17\le z_0\le-5.49$ at $k=5$;
its 72 test units use splitting estimates of the two failure mechanisms.

\begin{table}[!htbp]
\centering
\caption{Post-failure coverage test. miss$/u$ is the ratio of the non-coverage rate to the nominal level, listed as median/maximum; the width is the median of $(U-d)/\sqrt M$.}
\label{tab:e2e}
\small
\setlength{\tabcolsep}{4pt}
\begin{tabular}{llrccccc}
\toprule
 & & & \multicolumn{3}{c}{miss$/u$ (median/max)} & \multicolumn{2}{c}{width}\\
\cmidrule(lr){4-6}\cmidrule(lr){7-8}
$k$ & region & units & Envelope & Mixture & Uncond. & Envelope & Mixture\\
\midrule
3 & window ($z_0\in\{0,-2\}$) & 48 & 0.63/1.04 & 0.98/1.08 & 0.63/1.04 & 2.73 & 2.71\\
3 & tail ($z_0\in\{-3.5,-4.5\}$) & 40 & 0.74/1.03 & 0.97/1.05 & 0.74/1.03 & 2.73 & 2.69\\
3 & low load ($z_0\in\{-6,-10\}$) & 32 & 0.94/1.10 & 0.97/1.15 & 0.94/1.10 & 2.21 & 1.83\\
5 & window & 48 & 0.98/1.07 & 0.99/1.07 & 1.08/1.87 & 1.86 & 1.86\\
5 & tail & 32 & 0.91/1.11 & 0.92/1.13 & 1.36/2.65 & 1.77 & 1.77\\
5 & crossover & 72 & 0.83/1.00 & 0.92/1.04 & 1.01/3.14 & 1.91 & 1.91\\
6 & window ($z_0\in\{0,-1.5\}$) & 16 & 1.00/1.04 & 1.00/1.04 & 1.15/2.15 & 1.65 & 1.63\\
\bottomrule
\end{tabular}
\end{table}

Three points can be read from the table. First, the maximum ratio of the envelope rule and the mixture rule is within $1.15$ in all regions, and the median does not exceed $1$. Second, for $k=3$ the
unconditional threshold and the envelope rule give exactly the same result, because for $k=3,4$ the envelope rule takes precisely the
unconditional threshold; once $k\ge5$, the unconditional threshold reaches maximum ratios of $1.87$, $2.65$, and $3.14$ in the
window, tail, and crossover regions respectively, and among the 288 test units there are 90 whose $99.9\%$ lower bound exceeds
$u$. This is direct evidence that the selection correction is necessary at realistic scales. Third, for $k\ge5$ the widths of the two rules are almost identical; for $k=3$ the mixture rule is narrower,
with a median width of $1.83\sqrt M$ on the low-load side against $2.21\sqrt M$ for the envelope rule, at the cost of a non-coverage rate closer to the nominal value. The envelope rule is conservative in the $k=3$
window (median ratio $0.63$), which is the cost of giving up the selection correction for $k=3,4$. The proportion of empty acceptance sets $\mathcal C$ is at most
$0.009$, occurring in the near-one-sided composition of the $k=3$ window, and does not exceed $0.002$ in the remaining regions. The test covers only $M\ge4096$ and the above configurations,
and does not cover the far low-load region for $k\ge5$ or the sparse end where $d/\sqrt M$ is small.

\section{Related work}\label{sec:related}
\paragraph{Random hypergraph core and finite-size scaling.} The threshold and the thinning solution of the random hypergraph 2-core
go back to Molloy\cite{molloy}, Darling--Norris\cite{dn}, and Janson--Łuczak\cite{jl07}. Dembo--Montanari
proved the failure probability of the configuration model inside the critical window together with a correction of order $n^{-1/6}$
\cite{dm}; their Remark 2.6 also gives the limiting law of the core size conditional on the core being present. Amraoui
et al.\ proposed the finite-length scaling and the threshold shift in iterative decoding\cite{amru}, and Nozaki--Kasai--Sakaniwa gave
an analytic solution of the covariance evolution\cite{nks}. These works study the failure itself or the core itself.
This paper studies a different object: the joint law between an external measurement, determined already before peeling,
and the failure score. The constant $\eta_k$
is precisely the cross-correlation between the measurement and the score; like the window width $\sigma_k$, it is determined by the full degree profile.
The numerical value of $\sigma_k$ agrees with the width parameter of AMRU, related by
$\sigma_k=ka_ce^{-x_c}\alpha_{\rm AMRU}$; the NKS result reproduces the dynamic part of $\sigma_k$ in the Poisson case.

\paragraph{Conditioning and rare cores.} Scott--Sorkin characterized the conditional structure of specific rare cores
below the threshold\cite{ss}. Janson--Łuczak proved the normality of the random graph $k$-core by conditioning on the defect\cite{jl}.
This paper carries over the same conditioning strategy when moving from the configuration model to the Plain model; unlike those works,
we also need to prove that the joint Gaussian law of the measurement
and the score remains unchanged after conditioning.

\paragraph{Two failure mechanisms.} The iterative decoding literature writes the block error rate as the sum of a waterfall
term and an error floor term (AMRU 2006, Eq.~(17)(18)\cite{amru06}). Pittel--Sorkin
also gave bounds that juxtapose the two mechanisms for the satisfiability probability of $k$-XORSAT\cite{ps}. Goodrich--Mitzenmacher
observed that, at a fixed subcritical load, the failure bottleneck of IBLT is two edges mapping to the same set of cells, giving a
failure rate of $O(t^{-k+2})$\cite{gm}. These results all stay at the level of probabilities.
Proposition~\ref{prop:lowload} goes further to give the relative constant and the conditional law of the measurement; Proposition~\ref{prop:mixture}
combines the two mechanisms into an explicit mixture of conditional CDFs and handles the crossover region where macroscopic failure has
already become rare.

\paragraph{Quadratic forms and selection distributions.} For the CLT of quadratic forms with random signs, there are results of de
Jong\cite{dejong}, Bloznelis--Götze\cite{bg}, and
Bresler--Nagaraj\cite{bn}, which respectively require independent inputs, a nondegenerate linear part, or a deterministic regular
underlying graph. For the general framework of conditional CLTs, see Dedecker--Merlevède\cite{dmer}. Theorem~\ref{thm:sign} treats
the case where the overlap graph is random, sparse, and nonregular, while the sign composition is fixed. The distribution
$\Phi_2(x,z;c)/\Phi(z)$ is Azzalini's extended skew-normal distribution\cite{azz85,aa06}. This paper's contribution is to prove that it
appears as the limit of the measurement conditional on failure, and to give the closed form of the parameter $\eta_k\rho^2$.

\paragraph{Failure analysis of IBLT.} The IBLT literature gives the failure probability and list guarantees at finite
length\cite{gm,yw,mizrahi}, but does not discuss the count measurement read after failure. Reference~\cite{prior} proposed the
self-sizing protocol and proved two things: the exact sign projection of the conditional mean, and the chi-square inheritance at fixed
$M$. This paper advances these conclusions to the critical window, where the failure probability is nondegenerate, and to the whole load
axis.

\section{Discussion and conclusion}
The count measurement, read only after decoding failure, carries a selection effect. Under sign imbalance this effect persists after standardization.
Its strength is governed by the closed-form constant $\eta_k\rho^2$, and its direction reverses between $k=4$ and $k=5$; improving the unconditional
reference distribution does not remove it. For the actual protocol, when $k\ge5$, the one-sided imbalance places the bias on the side of insufficient capacity;
under asymptotically balanced signs the measurement is unaffected by selection. The observable sign totals suffice to construct an upper confidence bound after failure,
and both implementation rules, the envelope rule and the mixture rule, pass the post-failure coverage test.

Along the load axis, the failure mechanism passes gradually from the macroscopic core of the critical window to edge repetition at a fixed subcritical load, and the failure-conditioned distribution is an explicit two-layer mixture.
When $k\ge5$, the mechanism that dominates the failure probability separates from the mechanism that dominates the conditional mean: the layer where the mean changes sign lies at
$t^2\approx(2k-3)\log M$, deeper than the equal-probability layer. In the tested finite-size configurations, the Dembo--Montanari shift
approximates the threshold center, and the chi-square $(M-1)$ law approximates the shape of the unconditional
measurement; neither involves a fitted parameter.

This paper's theory covers fixed $k\ge3$, the ideal Plain mapping, and deterministic signs independent of the mapping. The failure-conditioned measurement
distribution is now characterized along the whole load axis (Proposition~\ref{prop:axis}, requiring $d/\sqrt M\to\infty$). The following questions are left for future work:
the error rate at finite size; in the moving subcritical region with $|r_M|/\sqrt{\log M}\to\infty$,
the relative probability and the fine conditional law of the macroscopic layer itself, a region beyond the scope of Proposition~\ref{prop:mda}, where the failure-conditioned law
is dominated by the repeated-edge layer; the complete classification of the sparse end $d=O(\sqrt M)$; $k$ growing with the size; and non-ideal checksums and other mapping families.

\appendix
\section{Notation}\label{app:notation}
This appendix collects the notation used throughout the paper, grouped in the order of first
appearance. Unless an explicit comparison across different scales is required, the dependence
of each quantity on $M$ is omitted. The finite-size sign imbalance is denoted by $\rho$ and its
limit by $\rho_\infty$; for the remaining parameters ($r$ and the like), the absence of a
subscript denotes the finite quantity, and the subscript $\infty$ is added when the limit is
meant. The subscripts $c$ and $k$ denote constants that depend only on $k$. The symbols
$\Rightarrow$ and $\xrightarrow p$ denote convergence in distribution and convergence in
probability, respectively. Unless stated otherwise, $k\ge3$ is fixed and $M\to\infty$.

\newcommand{\notationtable}[2]{%
\par\smallskip\noindent\textbf{#1}\par\nopagebreak\smallskip
{\small\noindent\begin{tabular}{@{}p{.22\linewidth}p{.74\linewidth}@{}}
\toprule
Notation & Meaning\\
\midrule
#2
\bottomrule
\end{tabular}}\par\medskip}

\notationtable{Model and peeling}{
$M$ & number of cells, i.e. the number of vertices of the mapping hypergraph\\
$d$ & number of input elements, i.e. the estimation target\\
$k$ & number of cells to which each element is mapped\\
$S_i$ & the $k$-subset chosen by element $i$\\
$F$ & the event that ideal peeling fails, equivalently that the residual 2-core is nonempty\\
$\alpha$ & limit of the load $d/M$, Theorem~\ref{thm:sign}\\
$r$ & window coordinate, $d/M=\alpha_c+r/\sqrt M$; within the critical window $r$ is fixed\\
$n$ & candidate count, used in the test in place of the unknown $d$\\
}

\notationtable{Signs and measurement}{
$s_i$ & sign of element $i$, taking values in $\pm1$\\
$d_+$, $d_-$ & numbers of positive and negative elements\\
$b$ & total sign $d_+-d_-=d\rho$, observable by the protocol\\
$\rho$, $\rho_\infty$ & the sign imbalance $b/d$ and its limit\\
$C_v$ & count of cell $v$, $\sum_{i:v\in S_i}s_i$\\
$T$ & the count energy before peeling, Eq.~\eqref{eq:T}\\
$\widehat d$ & estimate of $d$ obtained from the energy\\
$s$ & standard deviation $\sqrt{2d(d-1)/(M-1)}$; $s(n)$ is the same expression evaluated at a candidate $n$\\
$X$ & standardized measurement $(\widehat d-d)/s$\\
$X_{\rm one}$ & the standardized measurement when all signs are equal\\
$\gamma$ & sign factor $(d\rho^2-1)/(d-1)$, with $\gamma\to\rho_\infty^2$\\
$V$ & sign residual $X-\gamma X_{\rm one}$\\
$c_F$ & increment in the expected overlap of two elements under failure selection relative to the unconditional case, Eq.~\eqref{eq:projection}\\
$q_u$ & the $u$ lower quantile of $\widehat d/d$ under failure\\
}

\notationtable{Critical constants}{
$x_c$ & the unique positive root of the equation $e^{x}-1=(k-1)x$\\
$q_c$, $a_c$ & $1-e^{-x_c}$ and $q_c^{k-1}$\\
$\mu_c$, $\alpha_c$ & critical mean degree $x_c/a_c$ and critical load $\mu_c/k$\\
$\beta_k$ & slope of the score with respect to the window coordinate, $-ka_ce^{-x_c}$\\
$\sigma_k^2$ & limit variance of the score, equal to the initial part $\sigma_0^2$ plus the dynamic part $\tau_k^2$\\
$\tau_k^2$ & variance of the dynamic part of the score, Eq.~\eqref{eq:tau}\\
$\eta_k$ & limit correlation coefficient of the measurement and the score when all signs are equal, Eq.~\eqref{eq:eta}\\
$w_k$ & window width $\sigma_k/|\beta_k|$\\
$C_k$ & coefficient of the finite-size threshold shift, Section~\ref{sec:rules}\\
}

\notationtable{Joint limit and conditional law}{
$R$ & precritical score, depending only on the unsigned mapping, Eq.~\eqref{eq:score}\\
$z_r$ & standardized window coordinate $-\beta_kr/\sigma_k$\\
$c$ & correlation coefficient of the joint limit, $\eta_k\rho_\infty^2$\\
$\Phi$, $\phi$ & standard normal distribution function and density\\
$\Phi_2(\cdot,\cdot;c)$ & standard bivariate normal distribution function with correlation $c$\\
$A(z)$ & $\phi(z)/\Phi(z)$\\
$f_{\rm sel}$, $q_{\rm sel}$ & density and quantile of the selection law\\
$\mathcal G$ & unsigned Plain hypergraph with element labels\\
$\mathcal S$ & event that depends only on the unsigned mapping and is invariant under permutation of the element labels\\
}

\notationtable{Upper confidence bound and implementation rule}{
$u$ & nominal lower-tail level\\
$\mathcal N(b)$ & candidate set $\{n\ge\max(|b|,2):n\equiv b\pmod 2\}$\\
$z(n)$ & standardized window coordinate corresponding to a candidate $n$\\
$t(n,b)$ & acceptance threshold for a candidate $n$\\
$\mathcal C$ & acceptance set; $U=\sup\mathcal C$ is the upper confidence bound\\
$\kappa(u)$ & chi-square shape correction\\
}

\notationtable{Load axis}{
$P_2$ & event that the actual core is exactly two repeated edges\\
$\lambda$, $\lambda_n$ & expected number of repeated edges $\binom d2/\binom Mk$, and its value at a candidate $n$\\
$\delta_k$, $b_k$ & exponents of the error terms in Proposition~\ref{prop:lowload} and Proposition~\ref{prop:mixture}\\
$\mathcal A$ & macroscopic failure event in which the number of edges of the actual 2-core exceeds $\varepsilon_0M$\\
$H_n$, $T_n$, $w_n$ & mixture distribution, acceptance threshold, and macroscopic-layer weight of Proposition~\ref{prop:axis}\\
}

\notationtable{Experimental coordinates}{
$z_0$ & theorem coordinate $(d/M-\alpha_c)\sqrt M/w_k$\\
$z_{\rm sh}$ & shifted coordinate $z_0+C_kM^{-1/6}$\\
$z_{\rm emp}$ & empirical coordinate recovered from the measured failure rate\\
$\widehat p_F$ & measured failure rate\\
$\sigma$ & product of the signs of the two edges of a repeated edge\\
}

\notationtable{Notation in the proofs}{
$D_v$ & initial degree of cell $v$\\
$\mu$ & finite-size mean degree $kd/M$\\
$G(f)$ & centered empirical functional of the initial degree\\
$e$, $h_c$ & influence functions of the initial parts of the measurement and the score\\
$\Gamma(f,g)$ & limit covariance at a fixed total number of endpoints, Eq.~\eqref{eq:bridge}\\
$W$ & dynamic innovation of the score\\
$\Lambda_k$ & limit mean of the number of bad edges in the configuration model\\
$H_{ij}$, $m_{ij}$ & overlap $|S_i\cap S_j|$ of the mappings of two elements\\
$\Omega$, $\Omega_2$, $\ell_i$, $\Omega_\ell$ & total overlap, sum of squares, weighted degree, and sum of squared weighted degrees of the overlap graph\\
$\theta$ & number of steps until the number of degree-one vertices first becomes zero\\
$\vartheta$ & error scale $\sqrt M(\log M)^6$ in the uniform concentration before stopping\\
}

\section{Proof of the main theorem}\label{sec:main-proof}
This appendix proves Theorem~\ref{thm:main}, Theorem~\ref{thm:sign}, and Corollary~\ref{cor:balanced}.
Section~\ref{sec:proof-sketch} first gives the proof sketch; the subsequent subsections give the complete
chain of proof, and the expansion of the key estimates is collected in Appendix~\ref{app:technical}.

The proof proceeds in five steps. The first step analyzes the initial degree distribution and obtains the
joint Gaussian limit of the energy and the influence function (Section~\ref{sec:mp-init}). The second step
tracks the full degree trajectory along the peeling process, constructs the precritical score, and decomposes
it into an initial-state part and the dynamic innovation (Sections~\ref{sec:mp-traj} and \ref{sec:mp-stop}).
The third step proves that the sign of the score determines the actual success or failure
(Section~\ref{sec:mp-global}). The fourth step transfers the result from the configuration model to the
Plain model (Section~\ref{sec:mp-plain}). The fifth step handles general sign composition
(Section~\ref{sec:signproof}). The two subsections after the proof sketch first give the critical constants
and then fill in the details of a few short arguments in the main text.

\subsection{Proof sketch}\label{sec:proof-sketch}
The proof of Theorem~\ref{thm:main} first establishes the joint limit of $(X_{\rm one},R_M)$ in the
all-equal-sign case, and then handles general sign composition via Theorem~\ref{thm:sign}. The
all-equal-sign case proceeds in four steps; the complete argument appears in the subsequent subsections, and
the key estimates appear in Appendix~\ref{app:technical}.

\paragraph{Initial state.} We first work in the configuration model: the $k$ endpoints of each edge are
drawn independently and uniformly, with repetition allowed. The measurement depends only on the initial
degrees, so it can be written as a functional $G_M(e_M)$ of the degrees, where $e_M(j)=(j-\mu_M)^2$. The
initial-state part of the score is another functional $G_M(h_c)$, with influence function
$h_c(j)=j(1-a_c)^{j-1}$ depending on the complete profile of the initial degrees. The total number of
endpoints is fixed, so the joint Gaussian limit of the two must subtract the correction term in
\eqref{eq:bridge}.

\paragraph{Peeling trajectory and the score.} Along the peeling process, the degree histogram follows the
fluid equation; at the critical parameters the degree-one density has a non-degenerate quadratic minimum,
which is the bottleneck of the peeling. The score $R_M$ reads the histogram at the entry time before the
bottleneck and extrapolates to the bottleneck with backward weights (\eqref{eq:score}). A Doob decomposition
of the backward functional writes $R_M$ as three terms: the deterministic term $\beta_kr$ from the window
location, the initial-state part $G_M(h_c)/\mu_c$, and the dynamic innovation $W$ (\eqref{eq:decomp}). The
increments of the innovation are $O(M^{-1/2})$, and its quadratic variation tends to the constant
$\tau_k^2$; stable convergence (\eqref{eq:stable}) shows that it is independent of the initial-state limit.
Hence the correlation between the measurement and the score comes entirely from the initial state, equal to
$\Gamma(e,h_c)/(\sqrt2\mu_c^2\sigma_k)$. Substituting $\Gamma(e,h_c)=\mu_ce^{-x_c}x_c(x_c-2)$ gives
\eqref{eq:eta}, and the sign of $\eta_k$ is exactly the sign of $x_c-2$. The influence function is
determined by the complete degree profile, so the correlation depends on the full initial degree
distribution.

\paragraph{The sign of the score determines success.} Within the short window near the bottleneck the local
noise freezes, and the number of degree-one vertices is approximately $x_c\sqrt M R_M$ plus the fluid
difference (\eqref{eq:freeze}). When $R_M$ is clearly negative, the number of degree-one vertices drops to
zero before the central time, so peeling fails; when $R_M$ is clearly positive, the process crosses the
bottleneck, after which the positive gap of the fluid and a first-moment bound on the small core rule out
failure. The Gaussian limit has no atoms, which gives \eqref{eq:transfer}.

\paragraph{Transfer to the Plain model.} The Plain model is the configuration model conditioned on the
event of having no bad edges, whose probability tends to $e^{-\Lambda_k}>0$. Planting finitely many bad
edges does not change the limit of $(X_{\rm one},R_M)$; expanding with mixed factorial moments and
inclusion-exclusion keeps the joint Gaussian law unchanged after conditioning.

\paragraph{General sign composition.} For statistics invariant under permutation of the element labels, one
can first generate the unsigned graph $\mathcal G$ and then distribute a given sign composition uniformly
over it. Theorem~\ref{thm:sign} decomposes the measurement into $\gamma X_{\rm one}$ and a term $V$
asymptotically independent of the graph. The correlation of the former with the score is $\eta_k$, and
multiplying by $\gamma\to\rho^2$ gives the correlation coefficient $\eta_k\rho^2$. The proof of
Theorem~\ref{thm:sign} itself is in Section~\ref{sec:signproof}: first approximate with iid signs, then flip
a whole batch from the surplus sign class to correct the total sign count exactly, and finally obtain the
normal limit from the self-averaging of the overlap graph and cumulant estimates on the dependency graph.

\subsection{Critical constants}\label{app:constants}
Fix $k\ge3$, and let $x_c$ be the unique positive root of the equation $e^{x}-1=(k-1)x$\cite{molloy,dn,jl07}. Let
\begin{equation*}
q_c=1-e^{-x_c},\quad a_c=q_c^{k-1},\quad
\mu_c=x_c/a_c,\quad\alpha_c=\mu_c/k.
\end{equation*}
$\mu_c$ is the critical average degree and $\alpha_c$ the critical load. In Theorem~\ref{thm:main} the slope
of the score against the window coordinate and the limiting variance of the score are
\begin{equation*}
\beta_k=-ka_ce^{-x_c},\qquad
\sigma_k^2=e^{-x_c}q_c\left(1-\frac{k a_c}{k-1}\right).
\end{equation*}
$\sigma_k^2$ is composed of the initial-state part $\sigma_0^2$ and the dynamic part $\tau_k^2$; see
\eqref{eq:tau} and the simplification that follows it. For $k=3,4,5,6$ the values of $\tau_k^2$ are
$0.031017361$, $0.016877142$, $0.011169746$, $0.008107775$; the numerical values of $\sigma_k^2$ and
$\eta_k$ appear in Table~\ref{tab:constants}.

\paragraph{Finite-size shift constants.}\label{app:ck}
The shift constants of Section~\ref{sec:rules} are obtained by converting the critical threshold shift of
Dembo--Montanari\cite{dm} to the scale of this paper. Write $w_k=\sigma_k/|\beta_k|$,
$\widetilde G=(k-2)/(k-1)$, $\widetilde F=[(k-1)x_c-(k-2)]/(kq_c^k)$. Then
\begin{equation*}
C_k=\frac{1}{w_k}\cdot\frac{0.996193\,\widetilde G^{2/3}\widetilde F^{-1/3}\alpha_c^{4/3}}{x_c^2e^{-x_c}},
\end{equation*}
where the constant $0.996193$ is the numerical value of the Airy integral in DM's eq. (2.16).

\subsection{Exact normalization and local uniformity}\label{app:normalization}
The normalization in \eqref{eq:measurement} of Section~\ref{sec:model} is exact, for the following reason.
Let $H_{ij}=|S_i\cap S_j|$. Then
\begin{equation*}
T=kd(1-k/M)+2\sum_{i<j}s_i s_j(H_{ij}-k^2/M).
\end{equation*}
The hypergeometric distribution gives
\begin{equation*}
\E H_{ij}=k^2/M,\qquad
\Var H_{ij}=\frac{k^2(M-k)^2}{M^2(M-1)}.
\end{equation*}
Different unordered pairs of elements have uncorrelated overlaps. Even when two pairs share an element, once
the mapping of that element is given, the conditional mean of each pair is still the constant $k^2/M$.
Summing these quantities gives the normalization in \eqref{eq:measurement}.

\paragraph{Local uniformity.} The window assumption \eqref{eq:window} allows an $o(M^{-1/2})$ slack, and the
way $\rho_M\to\rho$ is unrestricted, so any sequence with $r_M\to r$ satisfies the conditions of the theorem.
A subsequence argument gives: the joint limit in Theorem~\ref{thm:main} and the limit of the conditional CDF
hold uniformly over $(r,\rho)$ on any compact subset of $\mathbb R\times[-1,1]$. This removes the need to
prove a separate uniform CLT.

\paragraph{Convergence of conditional moments.} Convergence in distribution alone does not imply convergence
of moments: a small part of the probability mass can escape to infinity, vanish in the limit, yet still bias
the mean. Section~\ref{sec:mean} therefore needs $X^2$ to be uniformly integrable under failure. This has two
layers. The first is unconditional. Since $X\Rightarrow N(0,1)$ and $\E X^2=1$ holds exactly for every $M$,
the second moment equals the second moment of the limit, so $X^2$ is uniformly integrable. The second is
conditional. For any $K>0$,
\begin{equation*}
\E\bigl[X^2;\,X^2>K\bigm|F\bigr]\le\frac{\E\bigl[X^2;\,X^2>K\bigr]}{\Pr(F)}.
\end{equation*}
The numerator tends to zero uniformly in $M$ as $K\to\infty$, and the denominator tends to $\Phi(z_r)>0$, so
$X^2$ is also uniformly integrable under failure. The critical window requires the limiting failure
probability to be positive, and it matters here: if the denominator tended to zero, conditioning could blow up
the tail without control. Combined with local uniformity, the limits of the first two conditional moments
hold uniformly over $(r,\rho)$ on compact sets.

\subsection{Initial joint Gaussian limit}\label{sec:mp-init}
First relax the model: the $k$ endpoints of each edge are drawn independently and uniformly, with repetition
allowed. This is the configuration model. Write $D_v$ for the initial degree of vertex $v$ and
$\mu_M=kd/M$. For $e_M(j)=(j-\mu_M)^2$ and $h_c(j)=j(1-a_c)^{j-1}$ (with $h_c(0)=0$), define
\begin{equation*}
G_M(f)=\frac1{\sqrt M}\left[\sum_v f(D_v)-M\E f(\Pois(\mu_M))\right].
\end{equation*}
Conditioned on the total number of endpoints being fixed, the occupancy CLT gives the joint Gaussian limit of
$(G_M(e_M),G_M(h_c))$ with covariance
\begin{equation}
\Gamma(f,g)=\Cov(f(D),g(D))-
\frac{\Cov(f(D),D)\Cov(g(D),D)}{\mu_c},\quad D\sim\Pois(\mu_c).\label{eq:bridge}
\end{equation}
The second term is the correction coming from the constraint that the total number of endpoints is fixed.

The concrete approach is de-Poissonization. On the same independent endpoint sequence, replace the fixed
length $L=kd$ by an independent random length $N\sim\Pois(L)$. Write $S_f(l)=\sum_v f(D_v(l))$. Then
\begin{equation*}
S_f(N)-S_f(L)=a_f(N-L)+o_p(\sqrt M),\qquad
a_f=\frac{\Cov(f(D),D)}{\mu_c}.
\end{equation*}
For $e_M$, each newly added endpoint changes the sum on average by $2(l/M-\mu_M)+1$; for $h_c$, both the
first and second discrete differences are bounded. Over the range $|l-L|\le\sqrt M\log M$ the accumulated
drift error is $O_p(\log^2M)$ and the added martingale fluctuation is $o_p(\sqrt M)$. The ordinary Poisson
vector CLT plus this correction gives \eqref{eq:bridge}. The one-sided measurement satisfies
\begin{equation*}
X_{\rm one}=\frac{G_M(e_M)}{\sqrt2\mu_c}+o_p(1).
\end{equation*}
The measurement in the configuration model differs from the center used here, but this difference vanishes on
the $\sqrt M$ scale.

\subsection{Degree trajectory and the precritical score}\label{sec:mp-traj}
Let $t$ be the number of deleted edges divided by $M$, and write
\begin{equation*}
m_M(t)=\mu_M-kt,\quad a_M(t)=\left(\frac{m_M(t)}{\mu_M}\right)^{(k-1)/k},\quad
q_M(t)=a_M(t)^{1/(k-1)}.
\end{equation*}
Write $U_j$ for the number of vertices of current degree $j$ and $Q=U_1$ for the number of degree-one
vertices. While peeling has not stopped, the coordinates of degree $j\ge2$ satisfy the following fluid
equation; starting from the Poisson initial state, its solution is
\begin{equation*}
\dot u_j=\frac{k-1}{m_M(t)}[(j+1)u_{j+1}-ju_j],\qquad
u_j(t)=e^{-\mu_Ma_M(t)}\frac{[\mu_Ma_M(t)]^j}{j!},\quad j\ge2.
\end{equation*}
The corresponding degree-one density is
\begin{equation*}
\ell_M(t)=m_M(t)-\mu_Ma_M(t)[1-e^{-\mu_Ma_M(t)}].
\end{equation*}
At the critical parameters, the interior zero of $\ell_M$ is a non-degenerate quadratic minimum. This is the
bottleneck of the peeling.

Now define the backward weights of the heavy stubs at the final point:
\begin{equation*}
f_j(b)=jb[1-(1-b)^{j-1}],\quad f_0=f_1=0.
\end{equation*}
When the thinning parameter $a_c$ of the final point is fixed, this weight satisfies the backward equation.
The projection of the initial fluctuation onto it is $-\sum_j f_j(a_c)G_j=a_c\sum_jh_c(j)G_j$, where we used
the total constraint $\sum_jjG_j=0$. We adopt the convention that failure corresponds to a low score; under
this convention the direction of the initial-state influence is the \emph{positive} $h_c/\mu_c$.

Take
\begin{equation*}
t_{*,M}=\frac{\mu_M}{k}(1-q_c^k),\quad
\varepsilon_M=M^{-1/8},\quad t_-=t_{*,M}-\varepsilon_M,\quad
n_-=\lfloor Mt_-\rfloor.
\end{equation*}
If the process reaches $n_-$, let $b_-=a_c/a_M(t_-)$ and define
\begin{equation}
R_M=\frac{M m_M(t_{*,M})-\sum_{j\ge2}f_j(b_-)U_j(n_-)}{x_c\sqrt M}.\label{eq:score}
\end{equation}
If the process does not reach $n_-$, set $R_M=0$ for now; we will show later that this case has probability
$o(1)$. $R_M$ is an extrapolated score, computed directly from the degree histogram at the entry time $n_-$,
and it can be negative.

\subsection{Concentration before the bottleneck and dynamic innovation}\label{sec:mp-stop}
\begin{lemma}[Uniform concentration before stopping]
Let $\theta_M$ be the first step at which the number of degree-one vertices is zero. For fixed $\delta>0$,
take $\vartheta_M=\sqrt M(\log M)^6$. Then on $n\le\theta_M\wedge(d-\delta M)$, the uniform errors of
$Q_n-M\ell_M(n/M)$ and of $U_j(n)-M u_j(n/M)$ ($2\le j\le\lceil\log M\rceil$) are all $O_p(\vartheta_M)$.
\end{lemma}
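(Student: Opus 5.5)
The plan is a Wormald-type differential-equation argument: a Doob decomposition of each coordinate of the degree histogram, followed by a Gronwall comparison with the fluid solution, kept valid up to $\theta_M\wedge(d-\delta M)$. Work in the configuration model and use the standard exposure fact that, conditionally on the current degree histogram $(Q_n,U_2(n),\dots)$, the unexposed endpoints of the remaining edges are uniform among the remaining stubs. The Markov chain on histograms then has explicit one-step drifts. Removing an edge at a degree-one vertex lowers $Q$ by one. Each of its other $k-1$ endpoints lands on a degree-$j$ vertex with probability $jU_j/(\text{remaining stubs})$, which shifts mass from $j$ to $j-1$.

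Define $m_n=kd-kn$ (remaining stubs). The drift of $U_j(n)$ is
\[
\frac{k-1}{m_n}\bigl[(j+1)U_{j+1}-jU_j\bigr]+O(1/m_n),
\]
which matches the fluid field $\dot u_j$ evaluated at $u=U/M$. The drift of $Q$ is the corresponding expression and matches $\dot\ell_M$. Write
\[
U_j(n)=U_j(0)+\sum_{i<n}\text{drift}_i+\mathcal M_j(n).
\]
The increments of $\mathcal M_j$ are bounded by $k$. Freedman's (or Azuma's) inequality with a union bound over $j\le\lceil\log M\rceil$ and $n\le d$ gives $\max_{n,j}|\mathcal M_j(n)|=O(\sqrt{M}\log M)$ with probability $1-o(1)$. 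The initial errors $U_j(0)-Mu_j(0)$ are $O_p(\sqrt M)$ by the occupancy CLT of Section~\ref{sec:mp-init}; the $\log M$ truncation handles the tail, since $\Pr(\max_v D_v>\log M)\to0$.

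Next, compare with the fluid solution. On $n\le d-\delta M$ we have $m_n\ge k\delta M$, so the coefficient $(k-1)/m_n$ is $O(1/M)$ and the vector field is Lipschitz, with a constant depending on $\delta$, in the weighted norm $\sum_j j|u_j|$. The system is a linear infinite-dimensional ODE (a thinning). To avoid the truncation at $j=\lceil\log M\rceil$ losing control through the boundary term $(j+1)U_{j+1}$, I would not run Gronwall coordinatewise. Instead I would use the explicit solution: the fluid evolution is binomial thinning of the degree histogram with parameter $q_M(t)$. Concretely, consider the linear statistics
\[
\Psi_b(n)=\sum_j U_j(n)(1-b)^j,\qquad b\in[0,1],
\]
or the backward weights $f_j(b)$. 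For a well-chosen time-dependent $b(t)$ these are exact martingales up to $O(1)$ drift errors per step, because the generator acts on $(1-b)^j$ by multiplication.

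From the concentration of a family of such generating-function martingales over a grid of $b$ values, uniformly in $n$, I recover the individual $U_j$ for $j\le\log M$ by finite differences or Cauchy-formula inversion. This inversion costs polylogarithmic factors, which is the source of the $(\log M)^6$ in $\vartheta_M$. The degree-one count then follows from the stub identity
\[
Q_n=m_n-\sum_{j\ge2}jU_j(n),
\]
together with the tail bound for $j>\log M$. Accumulated drift discretization errors over $n\le d$ steps are $O(\log M)$ each in aggregate; the correction is $O(1/M)$ per step, summing to $O(1)$ times polylog factors.

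The main obstacle is the uniformity up to the random time $\theta_M$. Near the bottleneck $Q$ is only of order $\sqrt M$, and the dynamics is defined only while $Q>0$; errors must not be amplified by the nearly degenerate minimum of $\ell_M$. The martingale or generating-function formulation is attractive precisely because it does not use $Q>0$ except through stopping. I would apply optional stopping at $\theta_M\wedge(d-\delta M)$, and check that the stopped increments stay bounded and the drift formulas remain exact before $\theta_M$. I would also need a high-probability bound on the maximal degree at all times so that the truncation remains valid; degrees only decrease, so the initial bound suffices. I would first attempt the inversion step with $b$ on a grid of size $\mathrm{polylog}(M)$ and count the resulting logarithms.
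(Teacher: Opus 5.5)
Once you drop the coordinatewise Gronwall comparison, your route is essentially the paper's. A linear functional of the histogram with time-dependent binomial-thinning weights, for which the backward equation cancels the main drift, is a near-martingale with $O(k)$ increments and drift $O(L_M^3/M)$ per step, where $L_M=\lceil\log M\rceil$. The lemma then follows from Azuma for the dynamic part, bounded differences for the initial weighted sum, a union bound over targets, the initial maximum-degree bound, and freezing the weights at $\theta_M$. For the initial sum you need McDiarmid rather than the occupancy CLT, because the bound must be uniform over $O(ML_M)$ targets. Dropping Gronwall is necessary, not just convenient. The map $u\mapsto((j+1)u_{j+1}-ju_j)_j$ is not Lipschitz in $\sum_j j|u_j|$, contrary to your claim. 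Truncated at degree $L_M$ its norm is of order $L_M$, so Gronwall would cost $e^{CL_M}=M^{O(1)}$.

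The step that fails as you describe it is the inversion. With $b\in[0,1]$ you know $P_n(z)=\sum_jU_j(n)z^j$ only at real points $z=1-b\in[0,1]$, each up to an error $\epsilon$. Coefficient extraction from such data is exponentially ill-conditioned in the degree. Adding $\epsilon\,T_{L_M}(2z-1)$, with $T_L$ the Chebyshev polynomial, changes every value on $[0,1]$ by at most $\epsilon$ but changes the top coefficient by $\epsilon\,2^{2L_M-1}$. So from value-wise bounds, finite differences can only give coefficient errors of order $\epsilon\,4^{L_M}$. That is $\sqrt M$ times a fixed power of $M$, not a polylogarithmic loss.

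There are two repairs. The first is to put the target point $w$ on the unit circle. Let $b_s=a_M(n/M)/a_M(s/M)$ be the stub-survival probability from step $s$ to the target step $n$. Along the backward characteristic the weight at step $s$ is $(1-b_s+b_sw)^l=\E\,w^{\operatorname{Bin}(l,b_s)}$, which still has modulus at most one. A discrete Fourier inversion over $N>L_M$ roots of unity then recovers each $U_j(n)$ with no amplification.

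The second, simpler repair skips the inversion altogether. By linearity, the Cauchy integral of your weights is $\Pr\{\operatorname{Bin}(l,b_s)=j\}$. This equals $\mathbf 1\{l=j\}$ at $s=n$ and vanishes for $l\le1$, so forced leaf deletions contribute nothing. This is exactly the paper's weight. Likewise $f_l(b_s)$ gives the heavy-stub count directly, and hence $Q_n$ in one step.

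Either way, the polylogarithm in $\vartheta_M$ comes from the Azuma threshold needed for the union bound over $O(ML_M)$ targets, not from an inversion.
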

The argument uses the backward binomial weights and $f_j$, and stops the process at $\theta_M$. The maximum
initial degree is at most $L_M=\lceil\log M\rceil$ with high probability. As long as the number of remaining
stubs is $\Theta(M)$, the backward equation cancels the main drift, and the per-step remainder is controlled
by $C_\delta L_M^4/M$, with the projected martingale difference uniformly bounded. Combining three tools
gives the conclusion: the Azuma concentration inequality, the bounded-difference concentration of the initial
state, and a union bound over $O(ML_M)$ target times and degrees. The fluid gap at the entry is
$M\varepsilon_M^2=M^{3/4}$, much larger than $\vartheta_M$, so the process reaches $n_-$ with probability
$1-o(1)$. The detailed estimates are in Appendix~\ref{app:technical}.

A Doob decomposition of the backward functional gives
\begin{equation}
R_M=\beta_k r+\frac{G_M(h_c)}{\mu_c}+W+o_p(1).\label{eq:decomp}
\end{equation}
On the right are, in order, the deterministic term from the window location, the initial-state part, and the
dynamic innovation $W$. The normalized increments of $W$ are $O(M^{-1/2})$, and its predictable quadratic
variation tends to the deterministic constant $\tau_k^2$. Let $\Psi_M$ be measurable with respect to the
initial-state information and uniformly bounded. The stable decomposition we need is
\begin{equation}
\E[\Psi_Me^{iuW}]-e^{-u^2\tau_k^2/2}\E[\Psi_M]\longrightarrow0.\label{eq:stable}
\end{equation}
The proof applies a one-step Taylor expansion to the compensated exponential, with the sum of the remainders
$O(M^{-1/2})$; together with the uniform boundedness of the quadratic variation and its convergence to the
deterministic value, this gives \eqref{eq:stable}. Taking $\Psi_M$ to be the characteristic function of the
initial-state vector yields simultaneously the joint Gaussian limit and the independence between the
innovation and the initial-state limit. The detailed estimates are in Appendix~\ref{app:technical}.

Next we compute the dynamic variance. On the critical fluid trajectory let $b=a_c/a(t)$ and
$g_j(b)=f_j(b)-f_{j-1}(b)$, and set $g_1=0$. A random partner hits a vertex of degree $j\ge2$ with
probability $p_j(t)=j u_j(t)/m(t)$; the remaining probability mass goes into $p_1$. Each step deletes $k-1$
fixed partners, so
\begin{equation}
\tau_k^2=\frac{k-1}{x_c^2}\int_0^{t_c}
\left[\sum_jp_j(t)g_j(b)^2-\left(\sum_jp_j(t)g_j(b)\right)^2\right]dt.\label{eq:tau}
\end{equation}
Let $s_c=1-a_c$. The initial-state contribution $\sigma_0^2$ and the total variance simplify to
\begin{align*}
\sigma_0^2&=(1/\mu_c+s_c^2)e^{\mu_c(s_c^2-1)}
-e^{-2x_c}[1+(1-x_c)^2/\mu_c],\\
\sigma_k^2&=\sigma_0^2+\tau_k^2=e^{-x_c}q_c\left(1-\frac{k a_c}{k-1}\right).
\end{align*}
Using $\Gamma(e,e)=2\mu_c^2$ and $\Gamma(e,h_c)=\mu_ce^{-x_c}x_c(x_c-2)$, \eqref{eq:decomp} yields
\eqref{eq:eta}.

\subsection{The sign of the score determines the peeling outcome}\label{sec:mp-global}
Consider the window $[t_{*,M}-\varepsilon_M,t_{*,M}+\varepsilon_M]$ around the bottleneck. For all times
inside the window at which the process has not yet stopped (including the stopping time itself), the local
noise freezing gives
\begin{equation}
Q_n=x_c\sqrt M R_M+M[\ell_M(n/M)-\ell_M(t_{*,M})]+o_p(\sqrt M),\label{eq:freeze}
\end{equation}
with the error uniform over these times. The reason is as follows. Inside the window the added martingale
fluctuation is $O_p(\sqrt{M\varepsilon_M})$, and the accumulated error of the centered drift is
$O_p(\varepsilon_M\vartheta_M)$ times a logarithmic factor, both $o_p(\sqrt M)$. The difference between the
backward weights at the entry and the actual heavy-stub weights satisfies a bound of the same order. The
detailed estimates are in Appendix~\ref{app:technical}.

From this one can deduce success or failure from the sign of the score. Fix $\zeta>0$. If $R_M\le-\zeta$ and
the process has still not stopped by the central time, \eqref{eq:freeze} gives $Q<0$, a contradiction, so
peeling fails. If $R_M\ge\zeta$, the lower bound on the local fluid difference is $-O(1)$, so a first stop
inside the window likewise contradicts \eqref{eq:freeze}. Outside the window, until only few edges remain,
the positive gap of the fluid together with the uniform concentration before stopping rule out a first stop.

Finally one must rule out a small core. Let $d\le AM$. If $e$ edges form a subgraph of minimum degree at
least two, their endpoints occupy at most $v=\lfloor ke/2\rfloor$ vertices, hence
\begin{equation*}
\Pr(\exists\text{such }e\text{ edges})
\le \binom{d}{e}\binom{M}{v}(v/M)^{ke}
\le[C_{k,A}(e/M)^{k/2-1}]^e.
\end{equation*}
For $k\ge3$ one can choose $\delta_0>0$ small enough that the sum over $1\le e\le\delta_0M$ is $o(1)$. If
the process crosses the bottleneck yet still fails in the end, the residual part would violate this bound.
Thus the outcome is determined on both sides of the safe interval $\pm\zeta$; letting $\zeta\downarrow0$ and
using that the Gaussian limit has no atoms gives \eqref{eq:transfer}.

\subsection{Transfer from the configuration model to the Plain model}\label{sec:mp-plain}
In the configuration model, call an edge bad if some endpoint is repeated within that edge, and let
$N^{\rm bad}_M$ be the number of bad edges. The edges are independent, and each edge is bad with probability
$\binom{k}{2}/M+O(M^{-2})$, so
\begin{equation*}
N^{\rm bad}_M\Rightarrow\Pois(\Lambda_k),\qquad
\Lambda_k=\alpha_c\binom{k}{2}.
\end{equation*}
Conditioning on $N^{\rm bad}_M=0$ gives exactly the Plain model. The probability of this event tends to
$e^{-\Lambda_k}>0$, so error events of probability $o(1)$ transfer directly. But for the joint Gaussian law
to be preserved after conditioning, it must also be asymptotically independent of $N^{\rm bad}_M$.

Let $Y_M=(X_{\rm one},R_M)$. Consider planting a fixed number $l$ of special edges. Planting changes the
initial quadratic energy by $O_p(l\log M+l^2)$ and changes the difference-bounded initial influence function
by $O_l(1)$. In ordinary edge-deletion steps the partner pool changes by only finitely many stubs, and the
per-step corrections to the projected drift and quadratic variation are $O_l(M^{-1})$; there are at most $l$
special deletion steps. Hence the Gaussian limit of $Y_M$ at the precritical time is unchanged. The detailed
estimates are in Appendix~\ref{app:technical}.

For any bounded continuous function $f$, exchangeability of the element labels gives
\begin{equation*}
\E[f(Y_M)(N^{\rm bad}_M)_l]\longrightarrow
\Lambda_k^l\E f(Y),
\end{equation*}
where $(b)_l$ is the falling factorial. Expanding the indicator of having no bad edges with
inclusion-exclusion, and bounding the tail of the series by $\|f\|_\infty(dp_M)^l/l!$, gives
\begin{equation*}
\E[f(Y_M)\mathbf1_{\{N^{\rm bad}_M=0\}}]
\longrightarrow e^{-\Lambda_k}\E f(Y).
\end{equation*}
Dividing both sides by the probability of having no bad edges transfers the joint Gaussian law to the Plain
model. Combined with the preceding symmetric-difference error, the Plain result in the one-sided case
follows. Methodologically this step is close to existing conditioning strategies for defects\cite{jl}.

\subsection{Proof of Theorem~\ref{thm:sign} and Corollary~\ref{cor:balanced}}\label{sec:signproof}
For a joint measurement invariant under permutation of the element labels, one can first generate the
unsigned graph $\mathcal G$ and then distribute the signs uniformly according to the given composition.
Define the weighted overlap graph
\begin{equation*}
m_{ij}=|S_i\cap S_j|,\quad
\Omega=\sum_{i<j}m_{ij},\quad \Omega_2=\sum_{i<j}m_{ij}^2,\quad
\ell_i=\sum_{j\ne i}m_{ij},\quad\bar\ell=2\Omega/d.
\end{equation*}
Here $\Omega$ is the total overlap weight. For general $d/M\to\alpha>0$, with $\mu=k\alpha$,
\begin{equation}
\Omega_2/M\xrightarrow p\mu^2/2,\quad
M^{-1}\sum_i(\ell_i-\bar\ell)^2\xrightarrow p\mu^2,\quad
\max_i \ell_i=O_p(\log M).\label{eq:selfavg}
\end{equation}
The first relation comes from the concentration of the overlap mean and $\E(\Omega_2-\Omega)=O(1)$. The
second expands into a sum over overlap edges and adjacent edge pairs, then counts by the pattern of shared
elements to obtain a variance of $O(M)$. The third follows from $\ell_i\le k\max_vD_v$.

Write the sign quadratic form as $Y_s=\sum_{i<j}m_{ij}s_i s_j$. The construction has two steps: first
generate independent signs $t_i$ with $\E t_i=\rho_M$, then flip a uniformly chosen subset of the surplus
sign class so that the total sign count equals the given value exactly. Let $\xi_i=t_i-\rho_M$. When
$|\rho|<1$, the number of signs to flip is $O_p(\sqrt d)$. Projecting to correct the total count gives
\begin{equation*}
Y_s-\rho_M^2\Omega=
\rho_M\sum_i(\ell_i-\bar\ell)\xi_i+
\sum_{i<j}m_{ij}\xi_i\xi_j+o_p(\sqrt M).
\end{equation*}
The dependency graph of the main terms on the right has maximum degree $O(\log M)$, so the $r$-th connected
cumulant divided by $M^{r/2}$ tends to zero ($r\ge3$). The linear and quadratic terms are uncorrelated, and
the total variance is
\begin{equation*}
\rho_M^2(1-\rho_M^2)\sum_i(\ell_i-\bar\ell)^2+(1-\rho_M^2)^2\Omega_2.
\end{equation*}
By \eqref{eq:selfavg}, divided by $M$ this tends to $\mu^2(1-\rho^4)/2$. The detailed estimates are in
Appendix~\ref{app:technical}.

Let $\gamma=(d\rho_M^2-1)/(d-1)$. Combining the exact finite-sample conditional mean with the CLT above gives
\begin{equation}
X=\gamma X_{\rm one}+V,\qquad
\E[e^{iuV}\mid\mathcal G]\xrightarrow p
\exp[-u^2(1-\rho^4)/2].\label{eq:quenched}
\end{equation}
The conditional characteristic function is bounded, so the convergence also holds in $L^1$. This proves that
$V$ is asymptotically independent of any graph-measurable vector that already has a joint limit. The endpoint
case $|\rho|=1$ is handled by $\E V^2=1-\gamma^2\to0$, and there is no need to establish a uniform CLT on
the rare sign class. This completes the proof of Theorem~\ref{thm:sign}. Combining it with the one-sided
joint limit gives the correlation coefficient $\eta_k\rho^2$, completing the chain of proof of the main
theorem.

When $\rho_M\to0$, on the event $\mathcal S$ whose probability is bounded away from zero, $\gamma X_{\rm
one}$ tends to zero; the $L^1$ error of \eqref{eq:quenched} divided by $\Pr(\mathcal S)$ still tends to
zero. The event is invariant under permutation, so the signs can be randomized legitimately. This proves
Corollary~\ref{cor:balanced}.

\subsubsection{Technical estimates}\label{app:technical}
This subsection gives the key estimates in the main chain of proof, with notation continuing from the main
text and the preceding subsections. All constants may depend on the fixed $k$ and on the distance $\delta$ to
the exhaustion of the endpoints, but not on the target time, the truncation degree, or $M$.

\subsubsection{Error of the initial de-Poissonization}
Let $L=kd$ and $w=\lceil\sqrt M\log M\rceil$. On the deterministic interval $[L-w,L+w]$, construct a forward
martingale in the order in which endpoints are added, and use Doob's maximal inequality to control the
martingale difference between any two times. The event $|N-L|\le w$ has probability tending to one, so this
method covers both $N<L$ and $N\ge L$ simultaneously, without a conditional-independence assumption for
deleting endpoints in reverse.

For $e_M$, the accumulated difference between the average increment and $1$ is at most
$O(w^2/M)=O(\log^2 M)$. For $h_c$, write the average increment as $A_h(l)$. Bounded differences give
$A_h(L)-\E A_h(L)=O_p(M^{-1/2})$; the difference between the Binomial and Poisson expectations on a single
bin is $O(M^{-1})$; each added endpoint changes $A_h$ by at most $C/M$. This again gives an accumulated
drift error of $O_p(\log^2M)$. The second moment of the maximum of the martingale corresponding to $h_c$ is
$O(w)$. For $e_M$, stop the martingale when the maximum degree exceeds $\lceil\log M\rceil$; this moment is
$O(w\log^2M)=o(M)$. The degree increases monotonically as endpoints are added, so a union bound on the
binomial tail at the right endpoint alone rules out the bad event of truncation failure.

For the triangular array, first take the drift coefficient as $a_{f,M}=\E[f(P_M+1)-f(P_M)]$ with
$P_M\sim\Pois(\mu_M)$. For $e_M$ this coefficient is exactly one; for $h_c$ it differs from the critical
value $a_f$ by $O(M^{-1/2})$, which after multiplying by $N-L=O_p(\sqrt M)$ contributes only $O_p(1)$.
Applying the ordinary CLT to the Poisson triple $(e_M,h_c,P_M)$ and subtracting $a_f$ times the total-count
coordinate gives all the cross-covariances of \eqref{eq:bridge}. One can also cite directly the existing
conditional CLT, namely Janson's Theorem 2.2\cite{janson2001}. Its conditions all hold here: the Poisson
parameter lies in a positive compact interval, the lattice characteristic function satisfies a uniform
quadratic lower bound, and the required fixed-order moments are bounded. The theorem does not require the
quadratic energy to have a positive exponential moment.

\subsubsection{One-step drift of the backward functional and uniform concentration}
Fix the target step $n$, and let the current step be $s<n$. Each step deletes exactly the $k$ stubs of one
edge, so the number of remaining stubs $N_s=k(d-s)$ is deterministic, and the cancellation in the backward
equation is exact. On the interval under consideration, $N_s\ge k\delta M$ and the maximum degree is at most
$L_M=\lceil\log M\rceil$. For
\[
\phi_l(b)=\Pr\{\operatorname{Bin}(l,b)=j\},\quad j\ge2,
\qquad\text{or}\qquad \phi_l(b)=f_l(b),
\]
we have $\phi_0=\phi_1=0$ and $|\phi_l-\phi_{l-1}|\le2$. Let $b_s=a_M(n/M)/a_M(s/M)$ and
$V_s=\sum_v\phi_{D_v(s)}(b_s)$. Fixing $b_s$, write $\Delta_l=\phi_{l-1}-\phi_l$ and
$A=\sum_vD_v\Delta_{D_v}$; then $|A|\le2N_s$. The forced-deleted leaves contribute nothing to either $V$ or
$A$. Each deleted partner changes $A$ by at most $CL_M$, and the denominator decreases by only one. Exposing
the fixed $k-1$ partners in order gives
\[
\E[\Delta V\mid\mathcal F_s,\ b_s\text{ fixed}]
=\frac{k-1}{N_s}A+O(L_M/M).
\]
This comparison already covers the case where the same vertex is hit several times, and does not need the
hits to be independent.

On an interval at fixed distance from the exhaustion of the endpoints, the first two time derivatives of $b$
are bounded. From the differences of Bernstein polynomials we have $|\partial_b\phi_l|\le CL_M^2$ and
$|\partial_b^2\phi_l|\le CL_M^3$. First expand the time change in the post-deletion state, then switch back
to the pre-deletion state: the second-order remainders over all vertices sum to $O(L_M^3/M)$; at most $k$
vertices change, and the cross terms this produces are $O(L_M^2/M)$. The backward equation cancels the
first-order main terms in both the sampling and the time directions, hence
\[
|\E[V_{s+1}-V_s\mid\mathcal F_s]|\le CL_M^3/M
\le CL_M^4/M.
\]
After $O(M)$ steps the accumulated error is $o(\sqrt M)$. After the process stops, the state and the weights
freeze together. The conditional random difference of each step comes only from at most $k$ deletions, so the
centered martingale difference is at most $4k$ in absolute value.

The weight sum at the starting point satisfies a uniform bounded-difference condition for each independent
endpoint, and its expectation differs from the Poisson thinning value by $O(1)$. The Azuma inequality plus a
union bound over the $O(ML_M)$ targets $(n,j)$ gives the uniform $O_p(\vartheta_M)$ bound in
Section~\ref{sec:mp-stop}. The tail of the Poisson distribution at $j>L_M$, weighted by any fixed polynomial
and multiplied by $M$, still tends to zero. The statement above includes the moment of the first stop.

\subsubsection{Convergence of the dynamic bracket and its separation from the initial state}
Deleting one partner contributes $g_j(b)$ to the projection. When the number of remaining stubs is
$\Theta(M)$, the variance correction from sampling without replacement and from repeated hits is
$O(L_M/M)$; switching the weights of the next moment back to the current weights produces another error of
$O(L_M^2/M)$. So the true one-step conditional variance is
\[
(k-1)\left[\sum_{j\ge2}p_jg_j^2-
 \left(\sum_{j\ge2}p_jg_j\right)^2\right]+O(L_M^2/M).
\]
Replacing the state by the fluid value via uniform concentration gives an error of
$O_p(\vartheta_M L_M^2/M)=o_p(1)$. Truncating the Poisson tail and writing the sum as a Riemann sum gives
\eqref{eq:tau}; the missing final segment of the integral has length $\varepsilon_M\to0$, and the integrand
is bounded, so this part can be ignored.

Let $Z_s$ be the normalized stopped martingale and $C_s=\langle Z\rangle_s$. On the whole sample space
$|\Delta Z_s|\le4k/(x_c\sqrt M)$, and
\[
0\le C_s\le\frac{16k^2d}{x_c^2M},\qquad C_{n_-}\xrightarrow p\tau_k^2.
\]
Applying a one-step conditional Taylor expansion to the compensated exponential $\exp(iuZ_s+u^2C_s/2)$, the
second-order terms cancel and the sum of the third-order remainders is $O(M^{-1/2})$. The modulus of the
exponential is controlled by the deterministic upper bound above. For any $\Psi_M$ measurable with respect to
the initial state and satisfying $|\Psi_M|\le1$, the same telescoping sum gives
\[
\left|\E[\Psi_Me^{iuZ_{n_-}}]
-e^{-u^2\tau_k^2/2}\E \Psi_M\right|
\le C_u\left(M^{-1/2}+\E|C_{n_-}-\tau_k^2|\right)=o(1).
\]
This proves \eqref{eq:stable} uniformly over all initial-state test functions, and does not need sigma-fields
nested in $M$. The independence between the Gaussian limit of the initial state and the dynamic innovation
follows.

\subsubsection{Stopping near the bottleneck and the global outcome}
Write $\mathcal E_n=Q_n-M\ell_M(n/M)$ and $\Delta U_2=U_2-Mu_2$. Sequential partner deletion gives the true
drift
\[
\E[Q_{n+1}-Q_n\mid\mathcal F_n]
=-1+(k-1)\frac{2U_2-Q_n}{Mm_M(n/M)}+O(M^{-1}).
\]
Each deletion changes $2U_2-Q$ by at most three, so this remainder does not require truncating the maximum
degree. After subtracting the fluid drift, the remaining main term is
\[
\frac{k-1}{Mm_M(n/M)}(2\Delta U_2-\mathcal E_n).
\]
Over a window of length $O(M\varepsilon_M)$ it accumulates to $O_p(\varepsilon_M\vartheta_M)=o_p(\sqrt M)$.
Multiplying the increments by $\mathbf1_{\{n<\theta_M\}}$ produces a stopped martingale on the original
probability space, with maximum fluctuation $O_p(\sqrt{M\varepsilon_M})$ inside the window. The last step can
reach $\theta_M$, so this control also covers the stopping moment.

At the entry write $\Delta U_j=U_j(n_-)-Mu_{M,j}(n_-/M)$. The Poisson thinning identity together with
handling the rounding gives
\[
x_c\sqrt M R_M-M\ell_M(t_{*,M})
=-\sum_{j\ge2}f_j(b_-)\Delta U_j+O(1).
\]
Since $|j-f_j(b_-)|\le C\varepsilon_Mj^2$, the above differs from the true entry fluctuation
$-\sum j\Delta U_j$ by $O_p(\varepsilon_M\vartheta_ML_M^3)+o(1)=o_p(\sqrt M)$. This proves
\eqref{eq:freeze}.

The global positivity of the fluid degree-one density comes from the fact that the function
$-\log(1-q)/q^{k-1}$ attains its minimum $\mu_c$ uniquely at $q_c$. Hence the critical fluid has only this
one interior zero before exhaustion. On a compact interval at fixed distance from the exhaustion of the
endpoints, the gap outside the window is at least $cM\varepsilon_M^2-O(\sqrt M)$, so the uniform bound before
stopping rules out a first stop in that interval and outside the window. Inside the window the lowest value
of the fluid difference is $-O(1)$, because the derivative at $t_{*,M}$ is $O(M^{-1/2})$ while the local
second derivative has a positive lower bound.

The rounding in the small-core count can also be handled with uniform constants. Let $v=\lfloor ke/2\rfloor$;
the vertex factor in the binomial upper bound is $e^v(v/M)^{ke-v}$. Its logarithmic derivative
$ke/v-\log(v/M)>0$, so one may replace $v$ by the real number $ke/2$ as an upper bound. Take
$C_{k,A}=e^{1+k/2}A(k/2)^{k/2}$ and choose $\delta_0$ so that $C_{k,A}\delta_0^{k/2-1}<1/2$. Summing the
part with $e\le\log M$ as a geometric series and the part with larger $e\le\delta_0M$ as $2^{-e}$, both
parts are $o(1)$. A path with positive score that reaches the moment with fewer than $\delta_0M$ edges
remaining would violate this bound if it still failed; a path with negative score has its degree-one count
drop to zero before the central time. Finally letting the safe interval $\zeta\downarrow0$ completes the
transfer to the real event.

\subsubsection{Limit after planting finitely many bad edges and the inclusion-exclusion expansion}
At time zero, expose all special edges and the initial degrees of the ordinary edges. Given this information,
the ordinary labeled stubs can be distributed uniformly into the slots of the ordinary edges; as each
ordinary edge is exposed, the assignment of the remaining slots is still uniform. When choosing the leaf with
the smallest vertex label, whether it belongs to a special edge is already predictable. The ordinary partner
pool differs from the full pool by at most $kl$ stubs, so the total variation of the per-step sampling kernel
is $O_l(M^{-1})$, and the corrections to the drift and bracket of the bounded projected increment are of the
same order. There are at most $l$ special steps, whose total contribution is $O_l(1)$.

Thus the initial limit, the uniform concentration, the arrival at the precritical time, and the
compensated-exponential argument all preserve the same limit after planting. Planting preserves the limiting
distribution, not the score of each individual graph; special edges may change the final outcome of an
individual graph.

Let $I_e$ be the indicator that edge $e$ is bad. After fixing $l$ edges with $I_e=1$, the remaining edges are
still independent, so the planting lemma applies. Invariance of the edge labels gives the mixed factorial
moment formula. Since $\sup_M dp_M<\infty$, the absolute value of the $l$-th term of the inclusion-exclusion
series is at most $\|f\|_\infty (dp_M)^l/l!$, and the tail of the series tends to zero uniformly in $M$. This
completes the transfer of the Gaussian law. The symmetric-difference error of the true failure is handled
separately by dividing by the probability of having no bad edges, which tends to a positive number; the two
transfers are done in two steps.

\subsubsection{From iid signs to a fixed total sign count}
Fix a deterministic graph satisfying \eqref{eq:selfavg}, and first handle the case $|\rho|<1$. Then
$\Omega=O(d)$ and $\Omega_\ell=\sum_i\ell_i^2=O(d)$. Take iid signs $t_i$ and let
\[
S=\sum_i t_i,\quad q=|S-d\rho_M|/2,\quad
\epsilon=\operatorname{sign}(S-d\rho_M),\quad J=\{i:t_i=\epsilon\}.
\]
Write $N=|J|$. Draw a uniformly chosen $q$-subset $U$ of $J$ and flip its signs. This mechanism is
equivariant under label permutations, and the output composition is fixed, so the output is exactly the
uniform distribution on that composition. Write $h_i=\sum_jm_{ij}t_j$ and
$I_U=\sum_{i<j,\ i,j\in U}m_{ij}$; expanding edge by edge gives
\[
Y_s-Y_t=-2\epsilon\sum_{i\in U}h_i+4I_U.\label{eq:batchflip}
\]
The edges inside $U$ have both endpoints flipped, so their actual contribution is unchanged; the first term
counts them twice, so the correction $4I_U$ cannot be omitted.

Consider the two cases of $\epsilon$ separately with $T_\epsilon=\sum_i\mathbf1_{\{t_i=\epsilon\}}h_i$, whose
expectation is $p_\epsilon\rho_M\sum_i \ell_i$ with $p_\epsilon=(1+\epsilon\rho_M)/2$. Redrawing $t_z$ changes
$T_\epsilon$ by at most $4\ell_z$, so the Efron--Stein inequality gives $\Var T_\epsilon\le
C\Omega_\ell=O(d)$. Both groups have size $p_\epsilon d+O_p(\sqrt d)$, so for both $\epsilon$ simultaneously
$\bar h_J=\rho_M\bar\ell+O_p(d^{-1/2})$. This also covers the random $\epsilon$ selected by $S$.

Given $t$, sampling without replacement gives the exact variance, together with the expectation bound on the
internal edges:
\begin{align*}
\Var\left(\sum_{i\in U}h_i\given t\right)
&=\frac{q(N-q)}{N(N-1)}\sum_{i\in J}(h_i-\bar h_J)^2
\le\frac{q\Omega_\ell}{N-1},\\
0\le\E[I_U\mid t]&\le\frac{q(q-1)}{N(N-1)}\Omega.
\end{align*}
On the event $q\le K\sqrt d$, $N\ge cd$, apply the conditional Chebyshev and Markov inequalities and then let
$K$ grow; the sampling error is $O_p(d^{1/4})$ and $I_U=O_p(1)$. For a degenerate group the variance is
simply set to zero. Combining $2\epsilon q=S-d\rho_M$ with \eqref{eq:batchflip} gives
\[
Y_s=Y_t-\rho_M\bar\ell(S-d\rho_M)+o_p(\sqrt d),
\]
which is the fixed-total-count projection of Section~\ref{sec:signproof}.

The counting behind the graph self-averaging is as follows. Two distinct overlap pairs sharing only one
element have a constant conditional mean by the symmetry of vertex permutations, so their covariance is zero.
Expanding $\sum_i \ell_i^2$ gives star-triple terms; if two star-triples share two or three elements, their
joint support graph is connected. Exposing independent $k$-subsets in order along a spanning tree, the
product expectations are $O(M^{-3})$ and $O(M^{-2})$ respectively, with $O(d^4)$ and $O(d^3)$ corresponding
terms, so the total variance is $O(M)$. One does not need to assume that all overlaps are mutually
independent.

After projection, the local terms of the linear and quadratic parts number $O(d)$ in total, and the size of
each term as well as the maximum degree of the dependency graph is $O(1+\Delta)$, where
$\Delta=\max_i \ell_i$. The connected cumulant of fixed order $r\ge3$ satisfies
\[
|\kappa_r|\le C_r d(1+\Delta)^{2r-1}=o(M^{r/2}).
\]
The variance converges as computed in Section~\ref{sec:signproof}, and the method of moments gives the normal
limit. For a random graph, take a subsequence of an arbitrary subsequence on which the relevant statistics
converge almost surely and $\Delta\le\log^2d$ holds eventually; arguing graph by graph on this subsequence
yields convergence in probability of the conditional characteristic function. The exact centering coefficient
$\gamma=(d\rho_M^2-1)/(d-1)$ differs from $\rho_M^2$ by a quantity whose product with $\Omega$ is $O(1)$.
The conditional characteristic function is bounded, so the convergence holds in $L^1$, and one may multiply by
any bounded graph-measurable test function to establish joint asymptotic independence. The endpoint
$|\rho|=1$ uses the exact total-variance identity of Section~\ref{sec:signproof}, and the above construction
is not performed on the rare sign class.

\subsubsection{Non-degeneracy of the constants}
The integrand of the dynamic variance is strictly positive on a positive-length interval before the critical
point, so $\tau_k^2>0$. For the initial-state part, the Cauchy--Schwarz inequality gives
$\eta_k^2\le \sigma_0^2/(\sigma_0^2+\tau_k^2)<1$, which guarantees that the density of the selection law is
strictly positive. Since $\beta_k<0$ and failure corresponds to a low score, the threshold is indeed
$z_r=-\beta_k r/\sigma_k$.

All remainders in the finite-dimensional joins are handled by Slutsky's theorem. Conditioning on having no
bad edges preserves $o_p(1)$ because the probability of that event has a positive lower bound. When the score
threshold is replaced by the true failure, the error of the joint CDF is bounded for all $x$ simultaneously
by the probability of the symmetric difference of the events. Finally, dividing by $\Phi(z_r)>0$ gives the
conditional distribution; the limit is continuous, so the CDF convergence is uniform. The convergence of the
conditional mean additionally uses the uniform integrability from $\E X^2=1$ to truncate; the convergence of
the quantile additionally uses the strict positivity of the limiting density. The argument above is for a
fixed-level quantile.

\section{Proof of the upper confidence bound}\label{app:upper}
This appendix proves Theorem~\ref{thm:upper} and states and proves the lemma it uses,
Lemma~\ref{lem:loadaxis}. We first sketch the ideas behind (ii) and (iii), and then give, in
order, the finiteness of the search range, the proofs of (i) and (ii), Lemma~\ref{lem:loadaxis},
and (iii).

(ii) follows from the compactness of $[-R,R]\times[-1,1]$ and a subsequence argument; on this range
$\Pr(F)$ has a uniform positive lower bound. (iii) proceeds in two steps. First, an edge-addition
coupling and the monotonicity of the 2-core give $\Pr(F)\to1$. Then the exact identity
$\mathrm{KS}(\text{failure},\text{all})=(1-p_F)\,\mathrm{KS}(\text{failure},\text{success})$
from \cite{prior}, together with the lemma below, shows that the conditional law tends to the
standard normal.

\begin{lemma}[Unconditional CLT on the load axis]\label{lem:loadaxis}
Fix $k\ge3$, use the Plain mapping, and assume $0<\liminf d_M/M\le\limsup d_M/M<\infty$, with the
signs a deterministic sequence independent of the mapping. Then $X\Rightarrow N(0,1)$.
\end{lemma}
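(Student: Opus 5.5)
The plan is to reduce the lemma to the all-same-sign case via Theorem~\ref{thm:sign}, and then prove the one-sided CLT by the initial-degree machinery of Appendix~\ref{sec:mp-init}, with the load no longer pinned at $\alpha_c$. By a subsequence argument it suffices to treat sequences along which $d/M\to\alpha\in(0,\infty)$ and $\rho\to\rho_\infty\in[-1,1]$. Convergence in distribution is characterized by subsequential limits, and every subsequence has a further subsequence of this type by compactness of $[\liminf d/M,\limsup d/M]\times[-1,1]$. Along such a subsequence, Theorem~\ref{thm:sign} (whose hypotheses are exactly $d/M\to\alpha>0$ and $\rho\to\rho_\infty$) gives
\[
X=\gamma X_{\rm one}+V,\qquad \E[e^{iuV}\mid\mathcal G]\xrightarrow{p}e^{-u^2(1-\rho_\infty^4)/2}.
\]
Since $X$ itself is not permutation-sensitive in law (the two-step generation has the same law as the original model), it suffices to identify the law of the right-hand side.

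First, prove $X_{\rm one}\Rightarrow N(0,1)$ at a general load $\alpha$. In the configuration model, $X_{\rm one}$ is $G_M(e_M)/(\sqrt2\mu)+o_p(1)$ with $\mu=k\alpha$. The de-Poissonization argument of Appendix~\ref{sec:mp-init} and its technical estimates never use $\mu=\mu_c$; they use only that the Poisson parameter lies in a compact subset of $(0,\infty)$. It therefore gives $G_M(e_M)\Rightarrow N(0,\Gamma(e,e))$, with $\Gamma(e,e)=\Var(D-\mu)^2-\Cov((D-\mu)^2,D)^2/\mu=(2\mu^2+\mu)-\mu=2\mu^2$. Hence the variance is $1$.

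Second, transfer to the Plain model exactly as in Section~\ref{sec:mp-plain}, now using only the energy coordinate. The number of bad edges tends to $\Pois(\alpha\binom k2)$. Planting finitely many bad edges changes the energy by $O_p(\log M)=o_p(\sqrt M)$, and the inclusion--exclusion expansion shows that conditioning on no bad edges preserves the Gaussian limit. Alternatively, one could avoid the transfer by computing moments directly from the representation $T=kd(1-k/M)+2\sum_{i<j}(H_{ij}-k^2/M)$, but the transfer route reuses existing estimates.

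Third, combine the two pieces: for the limit, $\E e^{iuX}=\E\bigl[e^{iu\gamma X_{\rm one}}\E[e^{iuV}\mid\mathcal G]\bigr]$. The inner factor converges in $L^1$ (it is bounded), and $X_{\rm one}$ is graph-measurable. Thus
\[
\E e^{iuX}\to e^{-u^2\rho_\infty^4/2}e^{-u^2(1-\rho_\infty^4)/2}=e^{-u^2/2},
\]
using $\gamma\to\rho_\infty^2$. At the endpoint $|\rho_\infty|=1$ this is still covered, since $\E V^2=1-\gamma^2\to0$. Every subsequence thus has a further subsequence converging to $N(0,1)$, which proves the lemma.

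The main obstacle I expect is the first step: checking that the uniform concentration and truncation estimates of the de-Poissonization (in particular the stopped martingale for $e_M$ with maximal degree $\lceil\log M\rceil$) hold uniformly over loads in a compact interval rather than at $\alpha_c$. I would first verify that every constant there depends on $\mu$ only through bounds on Poisson moments, which are continuous in $\mu$.
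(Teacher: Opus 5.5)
Your proposal is correct and follows the paper's proof essentially step for step. The shared steps are: the subsequence reduction to $d/M\to\alpha$, $\rho\to\rho_\infty$; the de-Poissonization of $e_M$ at a general load, giving $\Gamma(e,e)=2\mu^2$; the transfer to the Plain model by bad-edge inclusion--exclusion; and the combination with Theorem~\ref{thm:sign} through the characteristic function, with the endpoint $|\rho_\infty|=1$ handled by $\E V^2=1-\gamma^2\to0$. The obstacle you flag is exactly the paper's own claim, namely that the de-Poissonization uses only that $\mu_M$ stays in a positive compact interval with bounded Poisson moments. After your subsequence reduction you only need this along one convergent load sequence, not uniformly over a compact interval.
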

The proof reuses three existing tools: the occupancy CLT from the de-Poissonization in
Appendix~\ref{sec:main-proof} (which only requires $\mu_M$ to lie in a positive compact interval);
the mixed factorial moments in the conditioning on having no bad edges; and Theorem~\ref{thm:sign}.
The complete proof appears below in the paragraph on Lemma~\ref{lem:loadaxis}. For
$D\sim\Pois(\mu)$, the identity $\Gamma(e,e)=2\mu^2$ holds for every $\mu$. The endpoint case
$|\rho|=1$ is handled by the exact second moment $\E V^2=1-\gamma^2\to0$.

\paragraph{Search range.} Let
$L=\min\{\Phi^{-1}(u),\min_{0\le c\le\max(\eta_k,0)}q_{\rm sel}(u;0,c)\}$. The selection law can be
written as $X=cY+\sqrt{1-c^2}Z$, where $Y$ is a standard normal truncated at $Y\le z$, and $Z$ is
independent of it. When $c\le0$, $X$ is stochastically larger than $N(0,1)$; when $c>0$, $X$ is
stochastically increasing in $z$. Hence when $z_M(n)\ge0$ we have $q_{\rm sel}\ge L$; together with
$s(n)\le n\sqrt{2/(M-1)}$, this gives $t_M(n,b)\ge n(1-|L|\sqrt{2/(M-1)})$. When $M>1+2L^2$, the
acceptance set lies in a finite interval. For $u=0.01$, $L=-2.416$ for $k=5$ and $L=-2.467$ for
$k=6$; for $k=3,4$, $L=\Phi^{-1}(0.01)$.

\paragraph{Theorem~\ref{thm:upper}(i).} Write
$G_{z,c}(x)=\Phi_2(x,z;c)/\Phi(z)$ and $q_M=q_{\rm sel}(u;z_M(d),\eta_k\rho_M^2)$. Since
$z_M(d)\to z_r$ and $\eta_k\rho_M^2\to\eta_k\rho^2$, and $G$ is jointly continuous and strictly
increasing in $x$, we have $q_M\to q_*$. The inequality $\widehat d<t_M(d,b)$ is equivalent to
$X<q_M$; by the uniform convergence in Theorem~\ref{thm:main} and the continuity of $G$,
$\Pr(X<q_M\mid F)\to u$. Since $\{d\le U\}\supseteq\{\widehat d\ge t_M(d,b)\}$, taking the limit
gives (i). For (ii), suppose the convergence is not uniform, and take a subsequence whose deviation
stays at least $\epsilon$; by compactness take a further subsequence with $r_M\to r_*$ and
$\rho_M\to\rho_*$; completing it to a full sequence satisfying the window assumption contradicts (i).

\paragraph{Lemma~\ref{lem:loadaxis}.} By a subsequence argument it suffices to treat the case
$d/M\to\alpha\in(0,\infty)$ and $\rho_M\to\rho$; write $\mu=k\alpha$. The de-Poissonization of
$e_M(j)=(j-\mu_M)^2$ in Appendix~\ref{sec:main-proof} uses only two facts: $\mu_M$ lies in a
positive compact interval, and the Poisson moments are uniformly bounded. Hence in the configuration
model $G_M(e_M)\Rightarrow N(0,\Gamma(e,e))$. For $D\sim\Pois(\mu)$, $\Var(e(D))=\mu+2\mu^2$ and
$\Cov(e(D),D)=\mu$, so $\Gamma(e,e)=2\mu^2$, and thus $X_{\rm one}\Rightarrow N(0,1)$. The transfer
to the Plain model reuses the mixed factorial moments: $l$ special edges change the energy by
$O_p(l\log M+l^2)$, $d\,p_M\to\alpha\binom k2$ is bounded, and the tail of the inclusion-exclusion
series is uniformly controlled. The three-term self-averaging of the overlap graph holds when
$d/M\to\alpha$, so Theorem~\ref{thm:sign} gives: when $|\rho|<1$,
$\mathcal L(V\mid\mathcal G)\Rightarrow N(0,1-\rho^4)$. Therefore
\begin{equation*}
\E e^{iuX}=\E\bigl[e^{iu\gamma X_{\rm one}}\E(e^{iuV}\mid\mathcal G)\bigr]\to e^{-u^2\rho^4/2}e^{-u^2(1-\rho^4)/2}=e^{-u^2/2}.
\end{equation*}
When $|\rho|=1$, $\E V^2=1-\gamma^2\to0$.

\paragraph{Theorem~\ref{thm:upper}(iii).} Adding independent edges one by one couples the mappings
for different $d$ so that they increase with $d$. The property that the 2-core is nonempty is
monotone under edge addition, so $\Pr(F)$ is nondecreasing in $d$. For any $R$, $d_M$ eventually
exceeds the load corresponding to the window coordinate $R$, hence $\liminf\Pr(F)\ge\Phi(z_R)$;
letting $R\to\infty$ gives $\Pr(F)\to1$. This identity gives
$\mathrm{KS}(\text{failure},\text{all})\le1-p_F\to0$, which together with Lemma~\ref{lem:loadaxis}
gives $X\mid F\Rightarrow N(0,1)$. When $z_M(d)\to+\infty$, $q_{\rm sel}\to\Phi^{-1}(u)$, and the
remaining argument is the same as the second half of (i).

\section{Proof of the load-axis results}\label{app:loadaxis}
\subsection{Macro moderate deviation probabilities}\label{app:mda}
\begin{proposition}[Moderate deviation probability of a macro core]\label{prop:mda}
Fix $k\ge3$, $d/M=\alpha_c+r_M/\sqrt M$, with $r_M\to-\infty$ and
$|r_M|\le C\sqrt{\log M}$. Let $\mathcal A$ be the event that the actual 2-core has more than
$\varepsilon_0M$ edges, with $0<\varepsilon_0<\alpha_cq_c^k$. Then in the Plain model
$\Pr(\mathcal A)=\Phi(z_r)(1+o(1))$, where $z_r=-\beta_kr_M/\sigma_k$.
\end{proposition}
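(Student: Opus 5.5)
We plan to upgrade the window-level statement $\Pr(F)\to\Phi(z_r)$ of Theorem~\ref{thm:main}(iii) to a relative-error statement in the moderate-deviation regime $r_M\to-\infty$, $|r_M|\le C\sqrt{\log M}$. Here $\Phi(z_r)$ decays like $M^{-c}$ for some $c$ depending on $C$, so additive $o(1)$ errors no longer suffice. We will reuse the decomposition \eqref{eq:decomp} and write
\[
R_M=\beta_k r_M+G_M(h_c)/\mu_c+W+\mathcal E_M .
\]
The first task is to show that the centered score $(R_M-\beta_kr_M)/\sigma_k$ satisfies a moderate-deviation principle at level $x=z_r\to-\infty$ with $|x|\le C'\sqrt{\log M}$. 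Concretely, we want
\[
\Pr\bigl(R_M\le0\bigr)=\Phi(z_r)(1+o(1)),
\]
and the remainder $\mathcal E_M$ must be shown to be $o(1/|z_r|)$ with probability $1-o(\Phi(z_r))$, so that it shifts the threshold only by a negligible relative amount.

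For the initial part $G_M(h_c)$, a sum of bounded functions of occupancy counts after de-Poissonization, we will establish a Cramér-type moderate deviation. The route is through cumulant bounds: in the Poissonized model the sum is i.i.d. over cells, so cumulants of order $j$ are $O(M^{1-j/2}j!C^j)$. The Statulevičius condition then gives
\[
\Pr(\,\cdot\le x)=\Phi(x)(1+o(1)) \quad\text{for } |x|=o(M^{1/6}).
\]
The fixed-total constraint is handled by a local CLT for the total count, through conditioning on $N=L$ via Janson's conditional theorem with relative-error local estimates.

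For the dynamic innovation $W$, a martingale with increments $O(M^{-1/2})$ and bracket converging to $\tau_k^2$, we need the bracket to concentrate fast. We will show $C_{n_-}-\tau_k^2=O(M^{-1/2}\mathrm{polylog}M)$ with overwhelming probability, using the uniform concentration lemma with Azuma tails, whose failure probability is $\exp(-(\log M)^2)$, far below $\Phi(z_r)$. A martingale moderate-deviation bound (Grama–Haeusler type) with relative error then applies. The two parts are combined through the stable decomposition \eqref{eq:stable}, upgraded to an exponential-moment form: we will run the compensated exponential argument with imaginary $u$ replaced by real $\lambda=O(\sqrt{\log M})$. The bounded increments make the third-order remainders $O(\lambda^3M^{-1/2})$, which is negligible. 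This yields a joint Laplace transform of Gaussian form, and a change of measure (exponential tilting at $\lambda\approx z_r$) converts it into the Gaussian tail with relative error.

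Next comes the transfer from $\{R_M\le0\}$ to $\mathcal A$, which we expect to be the main obstacle. The freezing identity \eqref{eq:freeze} has error $o_p(\sqrt M)$, and under the tilted measure we must show that this error is $o(\sqrt M/|z_r|)$ except on events of probability $o(\Phi(z_r))$. The plan is to redo the window estimates with Azuma/Freedman tails at deviation level $\sqrt{M\varepsilon_M}\log M$. With these, $R_M\le-\zeta/|z_r|$ forces a stop at the bottleneck, and the stop leaves a core of size about $M\alpha_cq_c^k>\varepsilon_0M$ by the fluid beyond the bottleneck, since the residual core after a bottleneck stop follows the supercritical core fluid. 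Conversely, $R_M\ge\zeta/|z_r|$ gives crossing, after which a large core would require a fluid deviation of probability $\exp(-cM^{1/2})$. The boundary layer $|R_M|\le\zeta/|z_r|$ has tilted probability $O(\zeta)\Phi(z_r)$ by the Gaussian density ratio, and we finish by letting $\zeta\downarrow0$. Finally, the passage from the configuration model to the Plain model conditions on having no bad edges, an event of probability bounded below. The mixed factorial moment argument carries over because planting $l$ bad edges perturbs the score by $O_l(M^{-1/2})$, which is negligible against the shift $1/|z_r|$ in relative terms under tilting.
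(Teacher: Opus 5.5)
Your overall architecture matches the paper's: superpolynomially small bad events, so that they are $o(\Phi(z_r))$; a bandwidth sandwich of width $o(1/|z_r|)$ between $\{R_M\le0\}$ and $\mathcal A$, with no need to exclude small cores; and a Plain transfer in which each planted term has precision relative to $\Phi(z_r)$.

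The step that does not go through as written is the core moderate deviation for the score. Gaussian asymptotics of $\E e^{-\lambda S}$ at real $\lambda=O(\sqrt{\log M})$ only fix the normalizing constant in the untilting identity
\[
\Pr(S\le -t)=\E e^{-tS}\,e^{-t^2}\,\E_t\bigl[e^{t(S+t)}\mathbf 1_{\{S+t\le0\}}\bigr].
\]
On their own they give Chernoff-type bounds, which miss $\Phi(-t)$ by a factor of order $t$ and yield only $\log\Pr(S\le-t)\sim-t^2/2$. The last factor equals $(\sqrt{2\pi}\,t)^{-1}(1+o(1))$ only if, under the tilted law $P_t$, $S+t$ is within Kolmogorov distance $o(1/t)$ of $N(0,1)$. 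An error $\epsilon$ in the tilted CDF moves that factor by at most $2\epsilon$, which is a relative $O(t\epsilon)$.

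This quantitative tilted CLT is exactly where the paper spends its effort. It tilts the score martingale step by step and shows the tilted drift is $-t$ times the bracket, with the recentred bracket close to $1$. It then gets a polynomial Kolmogorov rate by Esseen smoothing, so that $t\epsilon_M\to0$. You can supply this by running your compensated exponential with complex exponents $\lambda+iu$ and smoothing. Alternatively, you can bypass it: condition on the initial state and convolve a conditional martingale moderate deviation for $W$, uniform over good initial states, with one for the initial part. One of the two has to be done.

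A second soft spot is the fixed total. Janson's conditional theorem is a CLT and gives no relative error. Moreover, tilting the Poissonized cells in the $h_c$ direction at level $t$ moves the mean of $N$ by order $t\sqrt M$, because $\Cov(h_c(D),D)=\mu_c(1-x_c)e^{-x_c}\neq0$. That is off the local-CLT scale, so conditioning on $N=L$ would need a second tilt in the $D$ direction. This is also where $\Gamma(h_c,h_c)$, rather than $\Var h_c(D)$, comes from.

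A simpler repair is to apply Cramér's moderate deviation to the iid Poissonized sum of $h_c(D_v)-a_{h_c}D_v$, which has variance $\Gamma(h_c,h_c)$, and to control the de-Poissonization coupling error at superpolynomial level. The paper avoids both issues with one device you may want to adopt. It reveals the $kd$ endpoints one at a time in the exact fixed-total model, and notes that given the endpoints the pairing is still uniform. It can therefore concatenate the initial Doob increments and the dynamic increments into a single martingale, with increments $O(M^{-1/2})$ and bracket in $1\pm v_M$ off a superpolynomially small set. A single tilting then handles everything.

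A minor point: $C_{n_-}$ carries a deterministic $O(\varepsilon_M)=O(M^{-1/8})$ bias from the missing final segment, not $O(M^{-1/2}\,\mathrm{polylog}\,M)$. This is harmless, because only $t^2v_M\to0$ is needed.
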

Proposition~\ref{prop:mda} gives the relative probability of a macro failure between the critical
window and low load. Equation (2.18) of Dembo--Montanari gives an expansion for the configuration
model for each fixed $r$, with additive error of order $n^{-\eta}$ ($\eta<5/26$)\cite{dm}. The
additive error converts to relative precision only when $\Phi(-t)\gg n^{-\eta}$.
Proposition~\ref{prop:mda} gives a relative asymptotic over the entire range
$|r_M|=O(\sqrt{\log M})$ and transfers the result to the Plain model.

The next four subsections prove Proposition~\ref{prop:mda}. Write $L=\lceil\log M\rceil$,
$\vartheta_M=\sqrt M(\log M)^6$, $\varepsilon_M=M^{-1/8}$, and $t=-z_r$. ``Superpolynomially
small'' means $o(M^{-b})$ for every fixed $b>0$.

\subsubsection{The event on which the concentration estimates hold}
The estimates used for the uniform concentration before stopping in Appendix~\ref{sec:main-proof}
hold uniformly on a fixed compact neighborhood of $\mu_M=kd/M$. Concretely: the tail probability
of the maximum degree is superpolynomially small; for a fixed target $(n,j)$ the difference of the
backward initial-state functional has a constant upper bound, the dynamic martingale difference
also has a constant upper bound, and with threshold $\sqrt M(\log M)^2$ the McDiarmid and Azuma
inequalities give a tail of $\exp(-c\log^4M)$; under the maximum-degree truncation the accumulated
drift is $O(L^4)$. Taking a union over $O(ML)$ targets yields a good event $G_M$ whose complement
has superpolynomially small probability. On $G_M$, up to the stopping time (including the stopping
instant itself), the error of every histogram is at most $CB_M$. Before $n_-$ the fluid degree-one
density is at least $cM\varepsilon_M^2-C\sqrt M|r_M|$, which is far larger than $\vartheta_M$, so
on $G_M$ the process does not stop before $n_-$.

\subsubsection{Concatenating the initial and dynamic stages into a martingale}
The configuration model can be generated as follows: first draw $kd$ iid uniform endpoints, then
assign the labeled stubs to the slots of the edges independently and uniformly. The initial-state
filtration records the endpoint sequence; the dynamic filtration adds to this the part of the
pairing revealed so far by the peeling process. Given the endpoint sequence, the unrevealed pairing
is still uniform, so the initial Doob increments and the dynamic increments can be concatenated
into a single martingale difference sequence. For $h_c(j)=j(1-a_c)^{j-1}$, the unnormalized Doob
increment of the $(l+1)$-th endpoint is
\begin{gather*}
\Delta_l=\nu_{n-l-1}\bigl(D_V^{(l)}\bigr)-\frac1M\sum_v\nu_{n-l-1}\bigl(D_v^{(l)}\bigr),\\
\nu_m(j)=\E h_c\bigl(j+1+\mathrm{Bin}(m,1/M)\bigr)-\E
h_c\bigl(j+\mathrm{Bin}(m,1/M)\bigr),
\end{gather*}
This increment is bounded. Changing one endpoint changes the initial-state bracket $V_{\rm in}$ by
at most $C/M$, so by McDiarmid's inequality $V_{\rm in}$ concentrates within a band of width
$(\log M)^2/\sqrt M$. A direct computation of $\Var\left(\sum_vh_c(D_v)\right)$ (retaining the
$1/M$ terms in the two-vertex covariance) gives
$\E V_{\rm in}=\sigma_0^2/\sigma_k^2+O((1+|r_M|)/\sqrt M)$. On $G_M$ the dynamic bracket differs
from $\tau_k^2/\sigma_k^2$ by $O(M^{-1/8}+(\log M)^8/\sqrt M+|r_M|/\sqrt M)$. Hence the
concatenated martingale $S_M$ has increments of size at most $C/\sqrt M$; its bracket has a
deterministic upper bound and, except for a superpolynomially small exceptional probability, lies
within $1\pm v_M$, where $v_M=C\{M^{-1/8}+(\log M)^8/\sqrt M\}$.

\subsubsection{Exponential tilting and the relative tail probability}
Let $m_j=\E[e^{-t\xi_j}\mid\mathcal F_{j-1}]$ and define the measure $Q$ with density
$\prod_je^{-t\xi_j}/m_j$. On the good set,
$\prod_jm_j=e^{t^2/2}(1+O(t^2v_M+t^3/\sqrt M))$. The bound $Z_t^2\le Z_{2t}e^{Ct^2}$ gives the
second-moment bound $\E_P(dQ/dP)^2\le e^{Ct^2}$, so the bad set remains negligible under $Q$ and
the total variation between the exponential tilting $P_t$ and $Q$ tends to zero. Under $Q$ the
conditional mean of the increment is $-tv_j+O(t^2v_j/\sqrt M)$; recentering then makes the bracket
close to $1$. For a centered martingale with increment bound $\beta'$ and bracket exceptional
probability $q'$, the telescoping expansion of the compensating exponential and Esseen smoothing
give
\begin{equation*}
\sup_x|P(T\le x)-\Phi(x)|\le C(\beta'+v'+q')^{1/8}+Cq'.
\end{equation*}
At the magnitudes of the present model, the Kolmogorov error $\epsilon_M$ of the tilted
distribution satisfies $t\epsilon_M\to0$. The untilting formula is
\begin{equation*}
\Pr(S_M\le-t)=\E e^{-tS_M}\,e^{-t^2}\,\E_t\bigl[e^{t(S_M+t)}\mathbf1_{\{S_M+t\le0\}}\bigr]
\end{equation*}
Integrating by parts gives $\Pr(S_M\le-t)=\Phi(-t)(1+o(1))$. The same argument gives
$\Pr(|S_M+t|\le\delta)=o(\Phi(-t))$ for bandwidths $\delta$ satisfying $t\delta\to0$.

\subsubsection{Transfer to the actual macro failure and the Plain model}
On $G_M$, $R_M/\sigma_k=S_M+b_M+O((\log M)^4/\sqrt M)$; the exact mean expansion gives
$b_M=-g_kr_M+O((1+r_M^2)/\sqrt M)$, where $g_k=ka_ce^{-x_c}/\sigma_k$. Replacing the $O_p$ bound
from the local freezing by a deterministic bandwidth $\delta_M\to0$ (satisfying
$b_M\delta_M\to0$) at the cost of a superpolynomially small exceptional probability gives
\begin{equation*}
\mathcal A\,\triangle\,\{S_M\le-b_M\}\subseteq\mathcal B_M\cup\{|S_M+b_M|\le C\delta_M\},
\end{equation*}
where $\mathcal B_M$ has superpolynomially small probability. When the score lies below the
negative bandwidth the process stops while the residual is still of linear size, so it lies in
$\mathcal A$; when the score lies above the positive bandwidth the process does not stop in the
window or on either side of it, until the residual drops below $\varepsilon_0M/2$, so it does not
lie in $\mathcal A$. This step does not require excluding small cores. Combining the relative tail
estimate with the bandwidth estimate gives the result in the configuration model.

Then transfer to the Plain model. When $l$ special edges are fixed, the change of
$\sum_vh_c(D_v)$ induced by the special edges has a deterministic upper bound and the per-step
error of the sampling kernel is $O_l(1/M)$, so all the rates above remain unchanged and
$\Pr(\mathcal A\mid I_1=\cdots=I_l=1)/p_M\to1$. Applying finite-order upper and lower Bonferroni
truncations to the indicator of having no bad edges, first letting $M\to\infty$ and then letting
the truncation order tend to infinity, gives $\Pr(\mathcal A,N^{\rm bad}_M=0)/p_M\to
e^{-\Lambda_k}$; dividing by $\Pr(N^{\rm bad}_M=0)\to e^{-\Lambda_k}$ gives the result in the
Plain model. The key point is that every fixed planted term already has precision relative to
$p_M$.
\subsection{Fixed subcritical load}\label{app:lowload}
\begin{proposition}[Fixed subcritical load]\label{prop:lowload}
Fix $k\ge3$ and $d/M\to\alpha\in(0,\alpha_c)$. Write $\lambda=\binom d2/\binom Mk$, let $P_2$ be
the event that the actual core is exactly two repeated edges, and let
$\gamma=(d\rho_M^2-1)/(d-1)$ be the sign factor of Section~\ref{sec:sign}. There exists
$\delta_k\ge1$ such that
\begin{equation*}
\Pr(F)=\lambda\{1+O(M^{-\delta_k})\},\qquad \Pr(P_2\mid F)\to1,\qquad X\mid F\Rightarrow N(0,1),
\end{equation*}
\begin{equation*}
\E[X\mid F]=\frac{2\gamma}{s_M}+O(|\gamma|M^{-\delta_k}),\qquad c_F\binom d2\longrightarrow1.
\end{equation*}
\end{proposition}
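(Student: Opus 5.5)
We decompose the failure event by the structure of the residual 2-core and show that, at a fixed subcritical load, everything except a pair of identical edges is negligible relative to $\lambda$. Since $\binom Mk\asymp M^k$ and $d\asymp M$, we have $\lambda\asymp M^{2-k}$. We split $F$ into three pieces. The first is the macroscopic core event, where the core has more than $\delta_0M$ edges. The second is the event $P_2$. The third is every other small core with $e\le\delta_0M$ edges that is not a single repeated pair.

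\textbf{The macroscopic core.} At fixed $\alpha<\alpha_c$ the fluid degree-one density has a positive gap of order $M$ all along the trajectory. The good event $G_M$ of Appendix~\ref{app:mda} has superpolynomially small complement, and on it the histogram errors are $O(\vartheta_M)$. So the macroscopic event has probability $o(M^{-b})$ for every $b$, and in particular it is $\lambda M^{-\delta_k}$.

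\textbf{Other small cores.} We refine the first-moment count from Section~\ref{sec:mp-global}. A core with $e$ edges on $v\le\lfloor ke/2\rfloor$ vertices has expected number of occurrences
\begin{equation*}
O\bigl(d^eM^v(v/M)^{ke}\bigr)=O\bigl(M^{e+v-ke}\bigr).
\end{equation*}
For $e=2$ the only core is two identical edges, with $v=k$, which gives exactly $\lambda(1+O(1/M))$. For $e\ge3$ with $v\le ke/2$ the exponent is at most $e(1-k/2)$, which equals $3-3k/2$ at $e=3$. We compare with $2-k$: we have $3-3k/2\le 2-k-1/2$ for $k\ge3$. For larger $e$ we sum the geometric series already used in Section~\ref{sec:mp-global}. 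This gives a relative error of $M^{-1/2}$ or better.

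The statement asks for $\delta_k\ge1$, so we need a sharper count. We classify cores by their excess $ke-2v\ge0$. Cores of excess zero with $e\ge3$ are $2$-regular on the vertex set. Each edge then shares every vertex with other edges, and a direct enumeration gives exponent $e+v-ke=e(1-k/2)$, which is at most $2-k-1$ once $e\ge4$. The case $e=3$ with $v=3k/2$ must also be checked; it gives $3-3k/2\le 1-k$ for $k\ge4$. For $k=3$ we use the integrality constraint to push the exponent down by one. We expect this bookkeeping to be the main obstacle, so we would first tabulate the minimal-exponent configurations for each $e\le 4$, and then use the geometric bound for the rest.

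Events in which a repeated pair coexists with some other core are handled by the same count with one extra factor. This gives $\Pr(F)=\lambda(1+O(M^{-\delta_k}))$, and hence $\Pr(P_2\mid F)\to1$.

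\textbf{Conditional law and mean.} Given $P_2$ and a specific duplicated pair $\{i,j\}$, the rest of the mapping is a Plain mapping of $d-2$ elements conditioned on having a nearly empty core. That conditioning has probability $1-o(1)$. The measurement splits as
\begin{equation*}
\widehat d=\widehat d_{\rm rest}+2s_is_j+O_p\Bigl(\sum_{l}(H_{il}+H_{jl})/M\Bigr).
\end{equation*}
Conditioned on the pair, the probability that $s_is_j=+1$ is $(1+\gamma)/2$ by exchangeability. For the law we apply Lemma~\ref{lem:loadaxis} to the remainder, divide by $s\to\infty$, and use Slutsky to get $X\mid F\Rightarrow N(0,1)$. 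For the mean we use the exact identity \eqref{eq:projection} with $\mathcal S=F$. Under $P_2$, exactly one pair has excess overlap $k-k^2/M$, and every other pair has conditional mean overlap $k^2/M+O(M^{-2})$ by symmetry. This gives
\begin{equation*}
c_F\binom d2=1+O(M^{-\delta_k}),\qquad
\frac{d\,c_F(d\rho^2-1)}{s}=\frac{2\gamma}{s}\bigl(1+O(M^{-\delta_k})\bigr).
\end{equation*}
The contribution from $F\setminus P_2$ is controlled by its probability ratio together with the uniform integrability from $\E X^2=1$.
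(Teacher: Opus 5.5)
Your bounds on $\Pr(F)$ and $\Pr(P_2\mid F)$, and your derivation of $X\mid F\Rightarrow N(0,1)$, are essentially sound, modulo three small repairs.
\begin{itemize}
\item The fluid gap is not of order $M$ near exhaustion. Run the concentration argument only until $\delta_0M$ edges remain, and re-establish it at fixed $\alpha<\alpha_c$; the paper instead uses bounded-depth message passing with McDiarmid.
\item The lower bound $\Pr(F)\ge\lambda-O(\lambda^2)$ needs the second factorial moment of the number $N$ of repeated pairs, not only a first-moment count.
\item The cross term $\frac{2(s_i+s_j)}{k(1-k/M)}\sum_l s_l(|T\cap S_l|-k^2/M)$ is $O_p(1)$, not $O_p(M^{-1})$. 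This is still negligible after dividing by $s$.
\end{itemize}

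The genuine gap is the conditional mean, and with it $c_F\binom d2\to1$. Via the projection identity, both follow from $\E[Y_+\mid F]=2+O(\sqrt M\,M^{-\delta_k})$ with $Y=\widehat d-d$. This is an unnormalized first moment on an event of probability of order $\lambda\to0$, and neither of your two steps controls it.

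First, conditioning on $P_2$ selects the remaining $d-2$ edges: they must peel completely around the blocked set $T$ and contain no further core. So symmetry does not give their mean overlap. Even granting your additive error $O(M^{-2})$ per pair, summing over the $\binom d2\asymp M^2$ pairs leaves an $O(1)$ error in $c_F\binom d2$, not $O(M^{-\delta_k})$.

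Second, uniform integrability from $\E X^2=1$ survives conditioning only when the conditioning probability stays bounded away from zero; this is exactly the caveat of Appendix~\ref{app:normalization}. Here Cauchy--Schwarz gives only $|\E[Y;F\setminus P_2]|\le s\sqrt{\Pr(F\setminus P_2)}$. Two disjoint repeated pairs already give $\Pr(F\setminus P_2)\ge c\lambda^2$, so this bound is of order $\sqrt M\lambda$, a factor $\sqrt M$ larger than the main term $2\gamma\lambda$.

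The missing idea is to avoid conditioning on $P_2$ altogether and to bound $|Y|$ conditionally on the rare configurations themselves. The paper does this as follows.
\begin{itemize}
\item It uses the random-pair planted law $Q$, whose density with respect to $P$ is exactly $N/\lambda$. Under $Q$ all other edges are unconditioned, so $\E[YN]=2\gamma\lambda$ exactly, and it remains to bound $\E[|Y|\,|\mathbf 1_F-N|]$.
\item On a covering event $A_{I,V}$ only the $e$ edges in $I$ are constrained. Splitting $Y$ into internal, cross and external parts gives $\E[|Y|\mid A_{I,V}]\le C(\sqrt M+e^2)$. This is summed against $B_e$, which is why the small-core sum is needed with weights $e^r$, $r\in\{0,2\}$.
\item Two prescribed repeated pairs give $\E|Y|\le C\sqrt M$, so the multiplicity term is $O(\sqrt M\lambda^2)$.
\item A linear-size core contributes a superpolynomially small probability times a polynomial bound on $|Y|$.
\end{itemize}
Together these yield $\E[Y;F]=2\gamma\lambda+O(\sqrt M(M^{-b_k}+\lambda^2))$. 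This gives the stated mean and $c_F\binom d2=1+O(M^{1/2-\delta_k})$. Your claimed $1+O(M^{-\delta_k})$ is stronger than what is either proved or needed.
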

Proposition~\ref{prop:lowload} shows that at a fixed subcritical load the failure is almost always
caused by two repeated edges; in this regime the standardized energy tends again to normality, and
the conditional mean on the original scale retains the leading term $2\gamma/s_M$, whose origin is
explained in Section~\ref{sec:loadaxis}.

This subsection proves Proposition~\ref{prop:lowload}. Write $K=\binom Mk$, $J=\binom d2$,
$N=\sum_{i<j}\mathbf1_{\{S_i=S_j\}}$, $D=\{N\ge1\}$, and $\lambda=J/K$. Let $a=(k-2)/2$,
$b_k=\lceil3(k-2)/2\rceil$, $\delta_k=\min\{k-2,\,b_k-(k-2)\}\ge1$,
$W_{ij}=|S_i\cap S_j|-k^2/M$, and
$Y=\widehat d-d=\frac{2}{k(1-k/M)}\sum_{i<j}s_is_jW_{ij}$.

\subsubsection{Uniform counting of small cores}
A 2-core subgraph with $e$ edges uses at most $v_e=\lfloor ke/2\rfloor$ vertices. For a prescribed
set $I$ of $e$ edge labels and a $v_e$-element vertex set $V$, let
$A_{I,V}=\{S_i\subseteq V,\ i\in I\}$; every such core lies in some $A_{I,V}$. The first moment
is
\begin{equation*}
B_e=\binom de\binom M{v_e}\Bigl(\binom{v_e}k\Big/\binom Mk\Bigr)^e\le\bigl[C(e/M)^a\bigr]^e,
\end{equation*}
This bound holds uniformly for $d\le AM$ and $e\le\varepsilon M$ ($\varepsilon<1/k$). For $e=3$,
retaining the rounding gives $B_3=O(M^{-b_k})$. For $e\ge4$ the ratio of adjacent terms is at
most $Ce^a((e+1)/M)^a$. For fixed $r\ge0$, choose $\varepsilon$ small enough that the adjacent
ratio of $e^rB_e$ is at most $1/2$ on $4\le e<\varepsilon M$; then
\begin{equation}
\sum_{e=3}^{\lfloor\varepsilon M\rfloor}e^rB_e=O(M^{-b_k}).\label{eq:smallcore}
\end{equation}
This sum also covers the intermediate range with $e\to\infty$ and $e=o(M)$.

\subsubsection{Superpolynomially small probability of a linear-size core}
For every fixed $\varepsilon>0$ and $B>0$, $\Pr(|\mathrm{core}|\ge\varepsilon M)=o(M^{-B})$. The
proof uses message passing of fixed depth. Let $Z_T$ be the number of edges whose incoming
messages are all $1$ at round $T$; then $|\mathrm{core}|\le Z_T$ on every graph. Local convergence
to a Poisson branching tree gives $\E Z_T/M\to\alpha q_T^k$, where $q_0=1$ and
$q_{t+1}=1-\exp(-k\alpha q_t^{k-1})$. When $\alpha<\alpha_c$ this recursion has no positive fixed
point, so one can choose $T$ with $\alpha q_T^k<\varepsilon/4$. On the set $H$ of graphs with
maximum degree at most $\lceil\log M\rceil$, changing $h$ edge coordinates changes $Z_T$ by at
most $L_Mh$, where $L_M$ is of polylogarithmic order. Applying a McShane extension of $Z_T|_H$
truncated to $[0,d]$, and then McDiarmid's inequality, gives
$\Pr(Z_T\ge\varepsilon M)\le\Pr(H^c)+\exp\{-cM/L_M^2\}=o(M^{-B})$.

\subsubsection{Failure probability and the conditional law}
A single edge cannot form a core; two edges form a core if and only if they are identical. Hence
on $F\setminus P_2$ the actual core has at least three edges, and by the previous two subsections
$\Pr(F\setminus P_2)=O(M^{-b_k})$. The probability that any two distinct edge pairs are
simultaneously repeated is $K^{-2}$, so $\E N=\lambda$ and $\E[N(N-1)]\le\lambda^2$; by
Bonferroni's inequality $\Pr(F)=\lambda+O(\lambda^2+M^{-b_k})$.

Let $Q$ be the distribution obtained by planting a repeated pair at random, whose density with
respect to the original distribution is exactly $N/\lambda$; then
$\|P(\cdot\mid D)-Q\|_{\rm TV}=O(\lambda)$. Since $D\subseteq F$,
$\|P(\cdot\mid F)-P(\cdot\mid D)\|_{\rm TV}=\Pr(F\setminus D)/\Pr(F)$. Combining the two bounds
gives $\|P(\cdot\mid F)-Q\|_{\rm TV}=O(M^{-\delta_k})$. This is a total variation bound over the
entire labeled mapping graph.

\subsubsection{CLT under the unconditional law and under the repeated-edge conditioning}
Let $v_i=\mathbf1_{S_i}-(k/M)\mathbf1$ and reveal the edges in order of their labels. The
martingale difference is $\Delta_j=\frac{2}{k(1-k/M)}s_jv_j^\top\sum_{i<j}s_iv_i$, and
$Y=\sum_j\Delta_j$. The sum of conditional variances is
\begin{gather*}
\mathcal
V_d=s_M^2+\frac{8}{k(1-k/M)(M-1)}\sum_{i<\ell}(d-\ell)s_is_\ell
W_{i\ell},\\
\Var\mathcal
V_d=\frac{64}{(M-1)^3}\sum_{\ell=2}^d(\ell-1)(d-\ell)^2=O(M),
\end{gather*}
so $\mathcal V_d/s_M^2$ converges to $1$ in $L^2$. The bound $\E\Delta_j^4=O(1)$ gives the
conditional Lindeberg condition, so the martingale CLT yields $X\Rightarrow N(0,1)$, uniformly in
any deterministic signs. Fix a planted pair $(a,b)$, write their common mapping as $T$ and their
sign product as $\sigma$; then
\begin{equation*}
Y=Y_R+2\sigma+\frac{2}{k(1-k/M)}(s_a+s_b)\sum_{j\in R}s_j\bigl(|T\cap S_j|-k^2/M\bigr),
\end{equation*}
The variance of the last term is $O(1)$. Applying the unconditional CLT to the remaining edges and
then transferring back to $P(\cdot\mid F)$ via the total variation bound completes the proof.

\subsubsection{Expansion of the conditional mean}
The total variation bound does not by itself control the mean, so
$\E[|Y|\,|\mathbf1_F-N|]$ must be estimated directly. Given $A_{I,V}$ (with $|I|=e$), splitting
$Y$ into an internal, a cross, and an external part gives $\E[|Y|\mid A_{I,V}]\le C(\sqrt
M+e^2)$. By \eqref{eq:smallcore} (with $r=0,2$) and the superpolynomial bound for linear-size
cores, $\E[|Y|;F,N=0]=O(\sqrt M M^{-b_k})$. When two prescribed repeated pairs occur
simultaneously, $\E|Y|\le C\sqrt M$, so $\E[|Y|(N-1);N\ge1]\le C\sqrt M\lambda^2$. When a
prescribed edge pair $(i,j)$ is repeated, $\E Y=2s_is_j$, so $\E[YN]=2\gamma\lambda$, and hence
\begin{equation*}
\E[Y;F]=2\gamma\lambda+O\bigl(\sqrt M(M^{-b_k}+\lambda^2)\bigr).
\end{equation*}
Since $F$ is invariant under permutations of the edge labels, for each $M$ exactly
$\E[X\mid F]=\gamma\E[X_+\mid F]$, where $X_+$ is the measurement with all-positive signs.
Applying the above identity in the all-positive case and multiplying by $\gamma$ gives the
remainder with factor $|\gamma|$ in the mean expansion. Matching the coefficients with
\cite{prior} in the all-positive case gives
$c_F\binom d2=\tfrac12\E[Y_+\mid F]=1+O(M^{1/2-\delta_k})$.
\subsection{The macro conditional law and the two-mechanism mixture}\label{app:mixture}
\subsubsection{Setup and the results used}

This subsection fixes $k\ge3$, uses the Plain model, and places the load in the moderate
deviation regime:
\begin{equation}
\frac dM=\alpha_c+\frac{r_M}{\sqrt M},\qquad r_M\to-\infty,\qquad |r_M|\le C\sqrt{\log M}.\label{F:range}
\end{equation}
Write $t_M=-g_kr_M\to\infty$, where $g_k=ka_ce^{-x_c}/\sigma_k$;
$p_M=\Phi(-t_M)$; and $\lambda_M=\binom d2/\binom Mk$. The macro event is
$\mathcal A=\{\text{number of edges of the actual 2-core}>\varepsilon_0M\}$. The sign factor is
$\gamma=(d\rho_M^2-1)/(d-1)$; the all-positive measurement is denoted $X_+$, and the measurement
under an arbitrary deterministic sign composition is denoted $X_s$. The left-tail Mills ratio is
written as $A(-t)=\phi(t)/\Phi(-t)$ following the main text. Let
\begin{equation*}
G_{M,s}(x)=\frac{\Phi_2(x,-t_M;\eta_k\rho_M^2)}{\Phi(-t_M)}.
\end{equation*}

\begin{proposition}[Macro conditional law and two-mechanism mixture]\label{prop:mixture}
Within the range \eqref{F:range}, take $\varepsilon_0$ small enough (Lemma~\ref{F:lem:cutoff}),
let the signs be an arbitrary deterministic composition, and do not require $\rho_M$ to converge.
Then
\begin{gather*}
\sup_x\bigl|\Pr(X\le x\mid \mathcal A)-G_{M,s}(x)\bigr|\to0,\qquad
\Var(X\mid \mathcal A)=1-\eta_k^2\rho_M^4+o(1),\\
\sup_x\left|\Pr(X\le x\mid
F)-\frac{p_MG_{M,s}(x)+\lambda_M\Phi(x)}{p_M+\lambda_M}\right|\to0.
\end{gather*}
\end{proposition}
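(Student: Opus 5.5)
The plan has three parts: a two-dimensional version of the tilting argument behind Proposition~\ref{prop:mda} for the all-positive measurement, a quantitative form of Theorem~\ref{thm:sign} to reach general signs, and a decomposition of $F$ into the macro event $\mathcal A$, the repeated-edge event $P_2$, and a negligible remainder.

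\emph{Macro layer, all-positive signs.} In the concatenated martingale of Proposition~\ref{prop:mda} I will carry a second coordinate. The initial Doob increments of $\sum_v e_M(D_v)$ live in the same endpoint filtration as those of $h_c$. Since $X_{\rm one}=G_M(e_M)/(\sqrt2\mu_c)+o_p(1)$, this gives a bivariate martingale $(X^{\rm mart}_M,S_M)$ with increments $O(\sqrt{\log M}/\sqrt M)$ after degree truncation. Its bracket matrix concentrates near $\begin{pmatrix}1&\eta_k\\ \eta_k&1\end{pmatrix}$, with exceptional probability that is superpolynomially small, by the same McDiarmid and Azuma estimates. The dynamic increments only enter $S_M$, which reproduces the dilution $\Gamma(e,h_c)/(\sqrt2\mu_c^2\sigma_k)=\eta_k$.

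I then tilt only in the $S$ direction, by $e^{-tS_M}/\prod m_j$. Under $Q$ the pair has drift $(-t\eta_k,-t)$ and the same bracket. The bivariate version of the smoothed martingale Berry--Esseen bound gives a joint Kolmogorov error $\epsilon_M$ with $t\epsilon_M\to0$. Untilting $\Pr(X_{\rm one}\le x,\,S_M\le-t)=\E e^{-tS_M}e^{-t^2}\E_t[e^{t(S_M+t)}\mathbf 1\{X\le x,\,S_M+t\le0\}]$ and integrating by parts in the $S$ variable gives $\Phi_2(x,-t;\eta_k)(1+o(1))$ uniformly in $x$. The bandwidth replacement of $\mathcal A$ by $\{S_M\le-b_M\}$ and the Bonferroni transfer to the Plain model are copied verbatim from Proposition~\ref{prop:mda}. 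In each planted term the precision is relative to $p_M$, and this holds jointly with the $x$ coordinate because planting moves $X_{\rm one}$ by $O_p(l\log M/\sqrt M)$.

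\emph{General signs.} This is the step I expect to be the main obstacle. Corollary~\ref{cor:balanced} only allows conditioning on events of probability bounded away from zero, whereas here $p_M$ vanishes, but only polynomially, since $t\le C'\sqrt{\log M}$. I will therefore upgrade Theorem~\ref{thm:sign} quantitatively. I first show that the self-averaging relations \eqref{eq:selfavg} hold at a polynomial rate off a set of superpolynomially small probability, using McDiarmid on $\Omega_2$ and on $\sum_i(\ell_i-\bar\ell)^2$ under degree truncation. On that good graph set, the cumulant bound $|\kappa_r|\le C_rd(1+\Delta)^{2r-1}$ and the batch-flip projection give a deterministic Kolmogorov rate $M^{-\omega}$ for $\mathcal L(V\mid\mathcal G)$. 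The fixed-composition correction needs to be controlled in probability over signs only, not over graphs, so the rate is uniform in the graph.

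With this rate, $\Pr(X\le x,\mathcal A)=\E[\mathbf 1_{\mathcal A}\Pr(\gamma X_{\rm one}+V\le x\mid\mathcal G)]$, and the error is $O(p_MM^{-\omega})+o(M^{-B})=o(p_M)$. Convolving the macro law of $X_{\rm one}$ with $N(0,1-\gamma^2)$ yields $G_{M,s}$, and this works without requiring $\rho_M$ to converge, because the rate is uniform in the sign composition. For $\Var(X\mid\mathcal A)$ I need uniform integrability of $X^2$ under $\mathcal A$. I will get it from sub-Gaussian tails under $Q$ for the martingale coordinate, since its increments are bounded and its bracket is deterministically bounded, together with graph-conditional fourth moments of $V$. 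The formula $1-\eta_k^2\rho_M^4$ is then the $t\to\infty$ limit of the truncated-normal variance, since $A(-t)[-t+A(-t)]\to1$.

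\emph{Mixture.} I split $F$ into $\mathcal A$, into $P_2\setminus\mathcal A$, and into the remaining non-macro failures. For the last piece, \eqref{eq:smallcore} with $\varepsilon\ge\varepsilon_0$ chosen via Lemma~\ref{F:lem:cutoff} bounds cores of $3\le e\le\varepsilon_0M$ edges by $O(M^{-b_k})=o(\lambda_M)$. Near criticality there may be cores of intermediate size that \eqref{eq:smallcore} does not cover unless $\varepsilon_0$ is small; this is exactly what the cutoff lemma must supply. The overlap $\mathcal A\cap D$ has probability $O(p_M\lambda_M)$, by planting one repeated pair and rerunning the macro estimate. On $P_2$, the planted-pair total-variation argument of Proposition~\ref{prop:lowload} gives $X\mid P_2\Rightarrow N(0,1)$, because the shift $2\sigma/s_M=O(M^{-1/2})$. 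Combining $\Pr(F)=(p_M+\lambda_M)(1+o(1))$ with the two conditional CDFs gives the mixture, uniformly in $x$. Whichever weight dominates, the relative errors of the other layer are multiplied by a weight at most one.
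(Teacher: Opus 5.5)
Your all-positive macro step (bivariate concatenated martingale, tilting only along the score, untilting with a joint rectangle CLT, bandwidth sandwich, planted Bonferroni) follows the paper's route. So does your mixture step: you split $F$ into $\mathcal A$, the repeated-pair layer, and cores with $3$ to $\varepsilon_0M$ edges, and bound $\Pr(\mathcal A\cap D)\le\lambda_M Q(\mathcal A)$. For the last bound the paper only needs $Q(\mathcal A)\to0$, which it gets by monotone coupling to a fixed window rather than by rerunning the moderate-deviation estimate under planting.

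You diverge in the passage to general signs. You aim for a polynomial Kolmogorov rate for $\mathcal L(V\mid\mathcal G)$. The paper proves only that $\E_s[e^{iuV}\mid\mathcal G]$ converges uniformly over a deterministic good set $\mathcal H_M$ (Lemma~\ref{F:lem:unifsign}, by extracting a violating deterministic graph sequence). No rate is needed. The error in $\Pr(X\le x,\mathcal A)$ is at most $\Pr(\mathcal A)\sup_{\mathcal H_M}(\cdot)+\Pr(\mathcal G\notin\mathcal H_M)$, and only the second term has to be $o(p_M)$. Your superpolynomial concentration of $\Omega_2$ and $\sum_i(\ell_i-\bar\ell)^2$ (the paper's Lemma~\ref{F:lem:graph}) already gives that. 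The Berry--Esseen upgrade of the cumulant and batch-flip argument is therefore extra work.

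The genuine gap is your claim that this rate is uniform in the sign composition. That uniformity is exactly what you use to avoid assuming $\rho_M$ converges, and it is false near the endpoints. Take $d_-=1$, with $i_0$ the negative label. Then $\gamma=1-4/d$ and $V=4(\bar\ell-\ell_{i_0})/[k(1-k/M)s]$. So $\mathcal L(V\mid\mathcal G)$ is the rescaled empirical law of $\bar\ell-\ell_i$, asymptotically a centred $\Pois(k^2\alpha_c)$ lattice law at the same scale as $\sqrt{1-\gamma^2}$. Its Kolmogorov distance to $N(0,1-\gamma^2)$ therefore stays bounded away from zero. More generally, whenever $\min(d_+,d_-)$ stays bounded, $V$ depends on $O(1)$ random labels and cannot be asymptotically normal.

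The conclusion survives only because $V$ is then negligible, and you have not supplied that argument. It needs three pieces:
\begin{enumerate}
\item a graph-by-graph variance $\Var_s(V\mid\mathcal G)=1-\rho_M^4+o(1)$, valid uniformly over \emph{all} compositions on $\mathcal H_M$ (the paper gets it from the exact finite identity in $\Omega,\Omega_2,\Omega_\ell$ and the four-point sign moment, Lemma~\ref{F:lem:endpoint});
\item Chebyshev together with the bounded density of the limiting law;
\item a subsequence argument over $\rho_M\in[-1,1]$, using the CLT when the limit is interior and Chebyshev when it is $\pm1$.
\end{enumerate}
The graph-conditional moments of $V$ you invoke for the variance would supply this, but you must use them for the CDF as well.

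Smaller points follow.
\begin{enumerate}
\item Relations such as $X_{\rm one}=G_M(e_M)/(\sqrt2\mu_c)+o_p(1)$, and an $O_p$ planting shift, do not survive division by $p_M$. Use the exact affine identity \eqref{F:affine} and deterministic bounds on the degree-truncation event instead.
\item The untilting gives an additive $o(p_M)$ error uniformly in $x$, not a relative factor $1+o(1)$.
\item For the variance, $X^2$ is not uniformly integrable under $\mathcal A$, because its centre is $-\eta_k\rho_M^2t_M$; you need the recentred square.
\item Passing from $Q$ to $P(\cdot\mid\mathcal A)$ costs a factor growing with $M$. So sub-Gaussian $Q$-tails only control $|W|$ beyond a growing level, and the middle range needs the polynomial Kolmogorov rate via clipping, as in Proposition~\ref{F:prop:moments}.
\item For the $V$ part, the exact decomposition $\Var(X_s\mid\mathcal A)=\gamma^2\Var(X_+\mid\mathcal A)+\E[\Var_s(V\mid\mathcal G)\mid\mathcal A]$ avoids any uniform integrability.
\item The convolution is centred at $-\eta_k\gamma t_M$. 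Replacing this by $-\eta_k\rho_M^2t_M$ costs $O(t_M/d)$, and you should say so.
\end{enumerate}
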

The first two displays say that in the moderate deviation regime a macro failure still obeys the
selection law of the critical window, with only its center shifted by $t_M$. The last display says
that the failure-conditioned distribution is a two-layer weighted mixture whose weights $p_M$ and
$\lambda_M$ are both explicit.

This subsection and Appendix~\ref{app:wm} prove four groups of statements, which correspond to
the main text as follows.
\begin{enumerate}[label=(\arabic*)]
\item The macro conditional CDF with all-positive signs:
$\sup_x|\Pr(X_+\le x\mid \mathcal A)-\Phi_2(x,-t_M;\eta_k)/\Phi(-t_M)|\to0$ (Proposition~\ref{F:prop:cdf}).
\item The conditional moments with all-positive signs and the conditional mean under an arbitrary
composition (Proposition~\ref{F:prop:moments}).
\item The macro conditional CDF and conditional variance under an arbitrary composition, namely
the first two displays of Proposition~\ref{prop:mixture} (Proposition~\ref{F:prop:sign}).
\item The control of the three weighted error terms, namely the weighted mean formula of
Proposition~\ref{prop:wm}, proved in Appendix~\ref{app:wm}.
\end{enumerate}

We use directly the scalar inputs already established in Appendix~\ref{app:mda} (the proof of
Proposition~\ref{prop:mda}). For ease of reference they are collected here.

\begin{enumerate}[label=(C\arabic*)]
\item There exists a good event $G_M$ whose complement has superpolynomially small probability. On
$G_M$, up to the stopping time, the error of every histogram is at most $C\vartheta_M$, with
$\vartheta_M=\sqrt M(\log M)^6$.
\item In the configuration model, the initial Doob increments and the dynamic increments
concatenate into a score martingale $S$; its increments are at most $C/\sqrt M$, its bracket has a
deterministic upper bound, and except for a superpolynomially small exceptional probability it
lies within $1\pm v_M$, with $v_M=C\{M^{-1/8}+(\log M)^8/\sqrt M\}$.
\item There is a bandwidth sandwich between the macro event and the score event:
$\mathcal A\,\triangle\,\{S\le-b_M\}\subseteq\mathcal B_M\cup\{|S+b_M|\le C\delta_M\}$, where
$\mathcal B_M$ is superpolynomially small, the deterministic bandwidth $\delta_M\to0$ satisfies
$b_M\delta_M\to0$, and $b_M=t_M+O((1+r_M^2)/\sqrt M)$.
\item The relative asymptotic of the macro probability: $\Pr(\mathcal A)=p_M(1+o(1))$; the same
holds with $l$ fixed special edges, and transfers to the Plain model via a relative Bonferroni
bound.
\end{enumerate}

``Superpolynomially small'' means $o(M^{-b})$ for every fixed $b>0$. By \eqref{F:range}, $p_M$ is
no smaller than a fixed negative power divided by a polylogarithm, so a superpolynomially small
quantity divided by $p_M$ is still superpolynomially small. This is used repeatedly below.

\subsubsection{Choosing the lower bound on the macro core edge count}

Proposition~\ref{prop:mda} holds for any $0<\varepsilon_0<\alpha_cq_c^k$. The two-mechanism
reduction also needs the small-core counting to cover all core sizes below the macro layer, so
here a small enough cutoff is fixed.

\begin{lemma}[Choice of the cutoff]\label{F:lem:cutoff}
Take $\varepsilon_*>0$ such that the small-core sum \eqref{eq:smallcore} holds for
$r\in\{0,2\}$ on $3\le e\le\varepsilon_*M$, and $\varepsilon_*<\alpha_cq_c^k/2$. Fix
$0<\varepsilon_0\le\varepsilon_*$. Let $D$ be the event that some repeated edge pair exists, and
$R=F\setminus(\mathcal A\cup D)$. Then under \eqref{F:range}
\begin{equation*}
\Pr(R)=O(M^{-b_k})=o(\lambda_M),\qquad \Pr(\mathcal A\cap D)=o(\lambda_M).
\end{equation*}
\end{lemma}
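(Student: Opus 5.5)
The plan is to split the failure event by the size of the actual 2-core and bound each piece with the small-core first-moment estimate \eqref{eq:smallcore} and the superpolynomial bounds already used in Appendix~\ref{app:lowload}. Two preliminary checks come first. The counting bound $B_e\le[C(e/M)^a]^e$ was derived uniformly for $d\le AM$. It therefore applies verbatim under \eqref{F:range}, where $d/M\to\alpha_c$. In particular $\varepsilon_*$ can be chosen once, depending only on $k$ and an upper bound $A>\alpha_c$, so that the adjacent-term ratio of $e^rB_e$ ($r\in\{0,2\}$) is at most $1/2$ on $4\le e\le\varepsilon_*M$. This gives $\sum_{e=3}^{\varepsilon_*M}e^rB_e=O(M^{-b_k})$. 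Shrinking $\varepsilon_*$ below $\alpha_cq_c^k/2$ costs nothing.

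\emph{Bound on $\Pr(R)$.} On $R$, peeling fails, the core has at most $\varepsilon_0M\le\varepsilon_*M$ edges, and no repeated pair exists. A core with one edge is impossible, and a core with two edges consists of identical edges, which is excluded by $D^c$. So the core has between $3$ and $\varepsilon_*M$ edges, and the union bound over $A_{I,V}$ gives $\Pr(R)\le\sum_{e=3}^{\varepsilon_*M}B_e=O(M^{-b_k})$. To compare with $\lambda_M$, note that $\lambda_M=\binom d2/\binom Mk\asymp M^{2-k}$ and $b_k=\lceil 3(k-2)/2\rceil>k-2$ for $k\ge3$. Hence $M^{-b_k}=o(\lambda_M)$.

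\emph{Bound on $\Pr(\mathcal A\cap D)$.} I would use the planting argument. By exchangeability and a union bound, $\Pr(\mathcal A\cap D)\le\binom d2\Pr(\mathcal A\cap\{S_1=S_2\})$. Conditioning on $S_1=S_2$ costs exactly the factor $1/\binom Mk$, so $\Pr(\mathcal A\cap D)\le\lambda_M\Pr(\mathcal A\mid S_1=S_2)$. Under the conditional law, the remaining $d-2$ edges are an independent Plain mapping at load $\alpha_c+O(r_M/\sqrt M)$, plus one planted double edge. A double edge occupies only $k$ cells, and the core of the full graph differs from the core of the remaining graph by at most the two planted edges together with whatever edges those $k$ vertices keep alive. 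For this reason I would not try to compare the two cores directly. Instead I would invoke (C4), whose planting version with $l$ fixed special edges gives $\Pr(\mathcal A\mid\text{planted})=p_M(1+o(1))$. Since $p_M=\Phi(-t_M)\to0$ because $t_M\to\infty$, this yields $\Pr(\mathcal A\cap D)\le\lambda_Mp_M(1+o(1))=o(\lambda_M)$.

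The main obstacle is justifying that the planted repeated pair is covered by the planting statement in (C4). That statement was formulated for bad edges with a repeated endpoint, not for two identical distinct edges. I would handle this by rerunning the planting argument of Appendix~\ref{app:mda}. Two identical edges change $\sum_vh_c(D_v)$ by a deterministic $O(1)$, perturb the partner pool by $O(1)$ stubs per step, and contribute at most two special deletion steps. So the score-martingale rates, and hence the relative tail $\Pr(\mathcal A\mid\cdot)/p_M\to1$, are unchanged. Note that only $\Pr(\mathcal A\mid S_1=S_2)\to0$ is needed, not the relative precision. As a fallback, therefore, the sandwich (C3) on the reduced graph, together with a two-edge perturbation shifting the score by $O(M^{-1/2})$, already suffices.
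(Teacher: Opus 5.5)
Your argument is correct. The bound on $\Pr(R)$ is the paper's argument verbatim. Your reduction $\Pr(\mathcal A\cap D)\le\E[N\mathbf 1_{\mathcal A}]=\lambda_M\Pr(\mathcal A\mid S_1=S_2)$ is the same planting identity the paper uses; by exchangeability its $Q_M(\mathcal A)$ is your conditional probability.

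You differ in how you show $\Pr(\mathcal A\mid S_1=S_2)\to0$. You import a planted-pair version of the relative moderate-deviation asymptotic (C4), which means rerunning the tilting and relative-Bonferroni machinery of Appendix~\ref{app:mda} for a new kind of special edge. The paper needs no relative precision here. It proves only the fixed-window planted limit $Q_r(\mathcal A)\to\Phi(g_kr)$, using the finite-planting lemma of Appendix~\ref{sec:mp-plain}, which already allows correlated special edges. It then uses that $\mathcal A$ is monotone when ordinary edges are added. For each fixed $r_0<0$, eventually $d\le\alpha_cM+r_0\sqrt M$, so $\limsup_M Q_M(\mathcal A)\le\Phi(g_kr_0)$, and letting $r_0\to-\infty$ finishes.

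The paper's route is softer and uses nothing beyond critical-window theory. Yours gives the sharper $Q_M(\mathcal A)=p_M(1+o(1))$. That is more than this lemma needs, but it is exactly what the paper later asserts, via the same planted-pair extension of (C4), to bound the intersection term $\lambda_M\sqrt{Mp_M}$ in the proof of Proposition~\ref{prop:wm}. So your extra work is not wasted.

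One caution about your fallback. A pathwise statement that the planted pair shifts the score by $O(M^{-1/2})$ is not available. The score is read off an order-dependent peeling trajectory, and the pair can keep alive a cascade of edges; the paper notes that planting preserves only the limiting law, not individual scores. Comparing the planted and unplanted graphs at the same $d$ by monotonicity also goes the wrong way, since adding the pair only enlarges the core. If you want to avoid relative precision, the right soft replacement is the paper's monotone coupling in the load direction within the planted model.
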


\begin{proof}
On $R$ the actual core has between $3$ and $\varepsilon_0M$ edges. The first-moment sum of
Appendix~\ref{app:lowload} requires only $d\le AM$, independent of the distance to the threshold,
so $\Pr(R)=O(M^{-b_k})$; and since $b_k-(k-2)\ge1$, this is $o(\lambda_M)$.

For the intersection, first consider a fixed window. Let
$d/M=\alpha_c+r/\sqrt M+o(M^{-1/2})$ with $r$ fixed, plant a pair of identical uniform
$k$-subsets, and let the remaining edges be independent. The finite-planting lemma of
Appendix~\ref{sec:mp-plain} allows the special edges to be correlated, so the precritical score
still has the same Gaussian limit; the two repeated edges are themselves valid $k$-subsets, the
probability of no bad edge among the remaining ordinary edges has a positive limit, and the
inclusion-exclusion transfer proceeds as usual. The planted repeated edge forces the final
failure, so here only the macro event is transferred: when the score lies below a fixed negative
gap the process stops with $(\alpha_cq_c^k+o(1))M$ edges remaining, which lies in $\mathcal A$;
when the score lies above a fixed positive gap the pre-stopping concentration and local freezing
rule out early stopping, until the residual drops below $\varepsilon_0M/2$, which does not lie in
$\mathcal A$. Removing the safety gap gives $Q_r(\mathcal A)\to\Phi(g_kr)$, where $Q_r$ is the
planted distribution.

Return to \eqref{F:range}. For any fixed $r_0<0$, eventually
$d\le\lfloor\alpha_cM+r_0\sqrt M\rfloor$. Coupling by adding the same sequence of ordinary edges,
the macro event is monotone under edge addition, so
$\limsup_MQ_M(\mathcal A)\le\Phi(g_kr_0)$. Letting $r_0\to-\infty$ gives $Q_M(\mathcal A)\to0$.
The density of the random-pair planted distribution with respect to the original distribution is
$N/\lambda_M$, where $N$ is the number of repeated edge pairs, so
\[
\Pr(\mathcal A\cap D)\le\E[N\mathbf1_{\mathcal A}]=\lambda_MQ_M(\mathcal A)=o(\lambda_M).\qedhere
\]
\end{proof}

The probability asymptotic of Proposition~\ref{prop:mda} holds for any
$\varepsilon_0<\alpha_cq_c^k$, so the macro probability itself is independent of the choice of
cutoff. Proposition~\ref{prop:mixture}, Proposition~\ref{prop:wm}, and the results in this
subsection conditioned on $\mathcal A$ all refer to the cutoff selected in
Lemma~\ref{F:lem:cutoff}.

\subsubsection{Initial-state Doob decomposition of the measurement and the score with all-positive signs}

Work first in the configuration model. Let $n_e=kd$ be the total number of endpoints,
$\mu=n_e/M$, the endpoints $V_1,\ldots,V_{n_e}$ independent and uniform, and $D_v$ the occupancy
count. Define
\begin{equation*}
K=\sum_v(D_v)_2,\qquad Y_K=\frac{K-\E K}{\sqrt2\mu\sqrt M},\qquad H=\sum_vh_c(D_v),
\end{equation*}
where $(j)_2=j(j-1)$, $h_c(j)=j(1-a_c)^{j-1}$, and $\E K=n_e(n_e-1)/M$. $Y_K$ is the
standardized occupancy energy, which is converted back to $X_+$ in \eqref{F:affine} at the end.

After revealing the first $l$ endpoints, write the occupancy count as $D_v^{(l)}$ and let
$m=n_e-l-1$. For a function $f$ define
$\nu^f_m(j)=\E[\Delta f(j+\mathrm{Bin}(m,1/M))]$, where $\Delta f(j)=f(j+1)-f(j)$. When the next
endpoint $V$ falls, the Doob increment of $\sum_vf(D_v)$ is
\begin{equation*}
\Delta_f=\nu^f_m(D_V^{(l)})-\frac1M\sum_v\nu^f_m(D_v^{(l)}).
\end{equation*}
Taking $f(j)=(j)_2$ gives $\nu^f_m(j)=2j+2m/M$, so the Doob increment of the energy is exactly
$2(D_V^{(l)}-l/M)$. The total number of endpoints is fixed sample by sample; here the Poisson
ensemble does not replace the true initial state.

\begin{lemma}[Mean of the bracket]\label{F:lem:bracket}
Let the matrix bracket of the initial-state vector $(Y_K,(H-\E H)/(\mu_c\sigma_k\sqrt M))$ be
$(V_{EE},V_{EH},V_{HH})$. Then
\begin{equation}
\E V_{EE}=(1-1/n_e)(1-1/M),\qquad \E V_{EH}=\eta_k+O\bigl((1+|r_M|)/\sqrt M\bigr),\label{F:bracketmean}
\end{equation}
$\E V_{HH}$ is the same as the initial-state bracket of Appendix~\ref{app:mda}.
\end{lemma}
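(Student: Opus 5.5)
The plan rests on one exact identity. For a square-integrable martingale started at zero, the expected predictable bracket equals the covariance of the terminal values. Here the initial Doob martingales of $K$ and $H$ are built on the endpoint filtration with $n_e$ fixed sample by sample. Their terminal values are $K-\E K$ and $H-\E H$. Hence
\begin{equation*}
\E V_{EE}=\frac{\Var K}{2\mu^2M},\qquad
\E V_{EH}=\frac{\Cov(K,H)}{\sqrt2\,\mu\,\mu_c\sigma_k M},\qquad
\E V_{HH}=\frac{\Var H}{\mu_c^2\sigma_k^2M}.
\end{equation*}
So the lemma reduces to exact or near-exact multinomial occupancy moments. The last identity is literally how the initial-state bracket of Appendix~\ref{app:mda} was defined and computed, which settles the statement about $V_{HH}$.

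For $V_{EE}$ I would also check the identity directly, since the increment is explicit. Given $\mathcal F_l$, the increment $2(D_V^{(l)}-l/M)$ has conditional variance $4[M^{-1}\sum_v(D_v^{(l)})^2-(l/M)^2]$. Each $D_v^{(l)}\sim\mathrm{Bin}(l,1/M)$, so the expectation of this variance is exactly $4l(1-1/M)/M$. Summing over $l=0,\dots,n_e-1$ gives $2(1-1/M)n_e(n_e-1)/M$. Dividing by $2\mu^2M=2n_e^2/M$ yields $(1-1/n_e)(1-1/M)$ with no error term.

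For $V_{EH}$ I would compute $\Cov(K,H)$ under the multinomial$(n_e;1/M,\dots,1/M)$ law. Write it as $M\Cov((D_1)_2,h_c(D_1))+M(M-1)\Cov((D_1)_2,h_c(D_2))$, with $D_1\sim\mathrm{Bin}(n_e,1/M)$. The main step is the two-cell term. I would expand the joint law of $(D_1,D_2)$ around independent $\Pois(\mu)$ variables. The standard Poissonization expansion then gives
\begin{equation*}
M(M-1)\Cov((D_1)_2,h_c(D_2))=-M\frac{\Cov((D)_2,D)\Cov(h_c(D),D)}{\mu}+O(1),\qquad D\sim\Pois(\mu).
\end{equation*}
Here the $1/M$ terms of the two-cell covariance must be retained, exactly as in Appendix~\ref{app:mda}. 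Equivalently, one can condition a Poisson vector on its total and use a local CLT for the total. Either route yields $\Cov(K,H)=M\Gamma_\mu((j)_2,h_c)+O(1)$, where $\Gamma_\mu$ is \eqref{eq:bridge} evaluated at $\mu$. Since $(j)_2$ and $e_M$ differ by a linear function of $j$ plus a constant, and $\Gamma$ annihilates linear functions, this equals $\Gamma_\mu(e,h_c)$.

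Finally, replace $\mu$ by $\mu_c$. We have $\mu-\mu_c=k r_M/\sqrt M$, and $\Gamma_\mu(e,h_c)$ is smooth in $\mu$ near $\mu_c$, so this costs $O(|r_M|/\sqrt M)$. The same holds for the normalization factor $\mu/\mu_c$. Substituting $\Gamma(e,h_c)=\mu_ce^{-x_c}x_c(x_c-2)$ then gives $\Gamma(e,h_c)/(\sqrt2\mu_c^2\sigma_k)=\eta_k$, which is \eqref{eq:eta}. The error budget is therefore $O((1+|r_M|)/\sqrt M)$.

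The main obstacle is making the two-cell multinomial covariance expansion exact to order $1/M$, uniformly over $\mu$ in a compact neighbourhood of $\mu_c$. This matters because $M(M-1)$ times the covariance must be $O(1)$-accurate after cancelling the leading term. I would do this with generating functions. $h_c(j)=j(1-a_c)^{j-1}$ is a derivative of $s^j$ evaluated at $s=1-a_c$, and $(j)_2$ is a second derivative at $s=1$. So $\E[(D_1)_2h_c(D_2)]$ can be read off exactly from the trinomial PGF $\bigl(1+(s_1-1)/M+(s_2-1)/M\bigr)^{n_e}$. Expanding $n_e\log(\cdot)$ to second order then gives the $O(1/M)$ correction in closed form.
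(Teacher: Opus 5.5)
Your proposal is correct. The reduction through the Doob isometry (expected bracket equals the covariance of the terminal values) is the paper's step too. Your direct increment computation of $\E V_{EE}$ is a valid substitute for the paper's one-line $\Var K=2n_e(n_e-1)(M-1)/M^2$, which comes from the pairwise-uncorrelated collision indicators.

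Where you genuinely differ is in computing $\Cov(K,H)$. The paper writes $K=\sum_{i\ne j}\mathbf 1_{\{V_i=V_j\}}$ and conditions on the single collision $V_1=V_2$. This gives an exact finite identity, valid for every $h$:
$\Cov(K,H)=\frac{n_e(n_e-1)(M-1)}{M^2}\E\Delta^2h(B)$, with $B\sim\mathrm{Bin}(n_e-2,1/M)$.
In this form the order-$M$ two-cell term never appears separately from the part of the one-cell term that cancels it. What remains is a one-dimensional Binomial-to-Poisson replacement for the bounded function $\Delta^2h_c$, with error $O(1/M)$. The evaluation $\E_{\Pois(\mu)}\Delta^2h_c=a_ce^{-\mu a_c}(\mu a_c-2)$ then produces $\eta_k$ directly.

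Your split into one-cell and two-cell covariances with a second-order trinomial expansion also works, and in fact exactly. One has $\E[(D_1)_2g(D_2)]=\frac{n_e(n_e-1)}{M^2}\E g(B)$. For $B'\sim\mathrm{Bin}(n_e,1/M)$ one also has $\E g(B')-\E g(B)=\frac2M\E\Delta g(B)+\frac1{M^2}\E\Delta^2g(B)$. Together these confirm your two-cell formula with an $O(1)$ remainder.

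What your route buys is an explicit identification of the bracket mean as the finite-$M$ version of the fixed-total covariance $\Gamma$ of \eqref{eq:bridge}. The price is exactly the cancellation you flag as the main obstacle, which the paper's collision identity removes.

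One small point: you take $\Gamma(e,h_c)=\mu_ce^{-x_c}x_c(x_c-2)$ from the main text, where it is only asserted. It closes in two lines:
\begin{itemize}
\item For Poisson $D$, the identities $\E[(D)_2h(D)]=\mu^2\E h(D+2)$ and $\Cov(f(D),D)=\mu\E\Delta f(D)$ give $\Gamma((j)_2,h)=\mu^2\E\Delta^2h(D)$.
\item Writing $\Delta^2h_c(j)=\partial_s[s^j(s-1)^2]$ at $s=1-a_c$ and taking the Poisson expectation then yields $\mu_c^2a_ce^{-x_c}(x_c-2)=\mu_ce^{-x_c}x_c(x_c-2)$.
\end{itemize}
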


\begin{proof}
Expanding the collision indicators gives $\Var K=2n_e(n_e-1)(M-1)/M^2$. For any $h$, let
$B\sim\mathrm{Bin}(n_e-2,1/M)$; there is the finite identity
\begin{equation*}
\Cov(K,H)=\frac{n_e(n_e-1)(M-1)}{M^2}\E\Delta^2h(B).
\end{equation*}
Here is the proof. Write $K=\sum_{i\ne j}\mathbf1_{\{V_i=V_j\}}$. Conditioned on $V_1=V_2$, the
mean of $H$ is $\E h(B+2)+(M-1)\E h(B)$; the unconditional mean is
$M\E h(B)+2\E\Delta h(B)+M^{-1}\E\Delta^2h(B)$. Their difference is
$(1-M^{-1})\E\Delta^2h(B)$; multiplying by $n_e(n_e-1)/M$ gives the claim.

For $h=h_c$, the Poisson limit gives
$\E_{\Pois(\mu)}\Delta^2h_c(D)=a_ce^{-\mu a_c}(\mu a_c-2)$. The function $\Delta^2h_c$ is
bounded, the total variation error between Binomial and Poisson is $O(M^{-1})$, and the parameter
$\mu$ lies in a compact interval and the dependence is smooth. By the Doob isometry, $\E V_{EE}$
and $\E V_{EH}$ are the corresponding variance and covariance divided by the normalization
constant; substituting $\mu=\mu_c+O(|r_M|/\sqrt M)$ gives \eqref{F:bracketmean}. The limit of the
cross term is exactly $\eta_k$, which comes from the true occupancy model and does not presuppose
independence between the dynamic noise and the energy.
\end{proof}

The maximum occupancy count can be of order $\log M$, so the energy increments are unbounded. To
obtain concentration, take $L=\lceil\log M\rceil$ and define the truncated energy $e_L$ by
$e_L(0)=0$ and $\Delta e_L(j)=2\min(j,L)$. For $j\le L+1$ we have $e_L(j)=(j)_2$, and
$|\Delta e_L|\le2L$, $|\Delta^2e_L|\le2$. The true Doob martingale is reconstructed from $e_L$;
the large increments of the original martingale are not simply deleted.

\begin{lemma}[Superpolynomial concentration of the truncated bracket]\label{F:lem:conc}
For $(f,g)\in\{(e_L,e_L),(e_L,h_c),(h_c,h_c)\}$,
\begin{equation}
\Pr\Bigl(|V^{(L)}_{fg}-\E V^{(L)}_{fg}|>CL^2(\log M)^2/\sqrt M\Bigr)\le2e^{-c(\log M)^4},\label{F:conc}
\end{equation}
and the difference between $\E V^{(L)}_{fg}$ and the mean of the untruncated bracket, multiplied
by any fixed power of $M$, still tends to zero.
\end{lemma}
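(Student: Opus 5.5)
The plan is to treat the truncated bracket $V^{(L)}_{fg}$ as a function of the $n_e=kd$ independent endpoints $V_1,\ldots,V_{n_e}$ and to apply McDiarmid's bounded-difference inequality, then to compare means separately. First I would write the bracket explicitly. The Doob increment at step $l$ is $\Delta_f^{(l)}=\nu^f_m(D^{(l)}_V)-M^{-1}\sum_v\nu^f_m(D^{(l)}_v)$ with $m=n_e-l-1$. Its conditional second moment (or cross moment) given $\mathcal F_l$ is therefore
\[
\frac1M\sum_v\nu^f_m(D^{(l)}_v)\nu^g_m(D^{(l)}_v)-\Bigl(\frac1M\sum_v\nu^f_m(D^{(l)}_v)\Bigr)\Bigl(\frac1M\sum_v\nu^g_m(D^{(l)}_v)\Bigr).
\]
After the normalization by $2\mu^2M$, $\sqrt2\mu\mu_c\sigma_kM$ or $\mu_c^2\sigma_k^2M$, the bracket $V^{(L)}_{fg}$ is the sum of these quantities over $l<n_e$.

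The key bounded-difference step goes as follows. Changing a single endpoint $V_i$ alters $D^{(l)}$ only for $l>i$, and then only at two vertices, each by $\pm1$. Since $|\Delta e_L|\le2L$ and $|\Delta^2e_L|\le2$, we get $|\nu^{e_L}_m|\le2L$ and $|\nu^{e_L}_m(j+1)-\nu^{e_L}_m(j)|\le2$; for $h_c$ both quantities are $O(1)$. Each summand above therefore changes by $O(L/M)$, and summing over $O(M)$ steps and normalizing by $M$ gives a per-coordinate influence $c_i\le CL/M$. McDiarmid then gives a tail bound $2\exp(-2\epsilon^2/\sum c_i^2)$ with $\sum c_i^2\le CL^2/M$. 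Choosing $\epsilon=CL^2(\log M)^2/\sqrt M$ makes the exponent at least $c L^2(\log M)^4\ge c(\log M)^4$, which is \eqref{F:conc}; there is even some slack. I will check carefully that no term has influence exceeding $L$: the truncation $\Delta e_L(j)=2\min(j,L)$ is exactly what makes $\nu^{e_L}_m$ globally bounded by $2L$ regardless of the maximum degree, so no good event is needed.

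The mean comparison is the step I expect to be the main obstacle. The untruncated bracket uses $\nu_m(j)=2j+2m/M$, which agrees with $\nu^{e_L}_m(j)$ unless $j+\mathrm{Bin}(m,1/M)$ exceeds $L$. So
\[
|\nu^{e_L}_m(j)-\nu^{(\cdot)_2}_m(j)|\le 2(j+m/M)\,\mathbf1_{\{j>L/2\}}+C\,\E\bigl[(j+B)\mathbf1_{\{j+B>L\}}\bigr]
\]
with $B\sim\mathrm{Bin}(m,1/M)$. The difference of the bracket means is then bounded by polynomial factors (degrees and $M$) times the probability that some occupancy count exceeds $L/2$, plus Binomial tails beyond $L/2$. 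With $L=\lceil\log M\rceil$ and $\mu$ bounded, $\Pr(\mathrm{Bin}(n_e,1/M)>L/2)\le(C\mu/L)^{L/2}=\exp(-\tfrac12\log M\log\log M(1+o(1)))$, which is superpolynomially small. Cauchy--Schwarz, combined with polynomial moment bounds on $\max_vD_v$ and on the untruncated bracket, then shows that the difference of means, times any fixed power of $M$, tends to zero. The delicate point is keeping the threshold at $L/2$ rather than $L$, so that the condition $j+B>L$ forces one of the two variables past $L/2$; I will handle the cross terms $(e_L,h_c)$ the same way, using boundedness of $\nu^{h_c}$.
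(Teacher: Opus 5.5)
Your concentration argument is essentially the paper's. It applies McDiarmid over the $n_e$ independent endpoints, using that changing one endpoint moves two occupancy coordinates by one at each later step. Your per-coordinate sensitivity $CL/M$ is sharper than the paper's stated $CL^2/M$. Either bound suffices: yours gives the exponent $cL^2(\log M)^4$ in place of $c(\log M)^4$.

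For the comparison of means you take a genuinely different route. The paper never compares the two brackets step by step. By the Doob isometry, $\E V^{(L)}_{fg}$ and the mean of the untruncated bracket are exactly the (co)variances of the terminal values, divided by the same normalization. Those terminal values are $\sum_v e_L(D_v)$ versus $K$, together with their covariances with $H$. Since $e_L(j)=(j)_2$ for $j\le L+1$, the two terminal values coincide off the event $\{\max_v D_v>L\}$, whose probability is at most $M(eC/L)^L$. Both are deterministically $O(M^2)$, so the difference of (co)variances is superpolynomially small with no further computation.

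Your step-by-step comparison of $\nu^{e_L}_m$ with $2j+2m/M$ also works. You correctly handle the fact that these never agree pathwise: the future $\mathrm{Bin}(m,1/M)$ can push $j$ past $L$ for every $j$, which is exactly why the paper remarks that the brackets need not coincide path by path. Your split $\{j+B>L\}\subseteq\{j>L/2\}\cup\{B>L/2\}$ deals with this. Your displayed bound is redundant but correct: its second term alone already dominates the exact difference $2\E[(j+B-L)^+]$.

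What your route buys is a stronger conclusion. Carried out with absolute values as you set it up, it shows that $\E|V^{(L)}_{fg}-V_{fg}|$ is superpolynomially small, so the brackets themselves are close in $L^1$. The cost is the per-step bookkeeping of Binomial tails. The paper needs only closeness of the means, and the isometry delivers that essentially for free.
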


\begin{proof}
Changing one input endpoint changes, at each subsequent $l$, two occupancy coordinates by $1$
each. For $e_L$, $\nu^{e_L}_m$ is bounded by $O(L)$ and a single coordinate changes by $O(1)$;
for $h_c$ both are $O(1)$. So each
$q_l^{fg}=M^{-1}\sum_v\nu^f_m\nu^g_m-\bar\nu^f_m\bar\nu^g_m$ changes by at most $CL^2/M$; after
normalization and summing over $O(M)$ steps, the sensitivity of the whole bracket to each input
coordinate is at most $CL^2/M$. McDiarmid's inequality gives \eqref{F:conc}.

The probability that the maximum occupancy count exceeds $L$ is at most $M(eC/L)^L$, smaller than
any fixed negative power of $M$. On its complement $K$ equals the truncated energy, and both are
at most $O(M^2)$ in absolute value on the whole sample space. Hence the differences in the center,
the variance, and the covariance with $H$, multiplied by any fixed power of $M$, still tend to
zero. Here only the replacement of the mean is needed; the truncated bracket need not equal the
original bracket path by path.
\end{proof}

\subsubsection{The dynamic stage and the deterministic upper bound on the bracket}

After the initial state, the labeled stubs are assigned to the edges independently and uniformly,
and the dynamic filtration follows the construction of Appendix~\ref{app:mda}, without revealing
the full pairing in advance. The initial-state increment of the two-dimensional vector is
$(\Delta e_L/(\sqrt2\mu\sqrt M),\Delta h_c/(\mu_c\sigma_k\sqrt M))$, and the dynamic increment is
$(0,\text{score martingale increment})$. In the dynamic stage the first coordinate is zero, so it
contributes no new $EE$ or $EH$ bracket; this is an identity on the filtration and does not
require an independence assumption.

By Lemma~\ref{F:lem:bracket}, Lemma~\ref{F:lem:conc}, and input (C2), except for a
superpolynomially small exceptional set the total matrix bracket $\mathcal V$ satisfies
\begin{equation}
\|\mathcal V-\Sigma\|\le v_M,\qquad \Sigma=\begin{pmatrix}1&\eta_k\\\eta_k&1\end{pmatrix}.\label{F:Sigma}
\end{equation}
$\Sigma$ is positive definite: the initial-state energy has variance $1$, and the score has
strictly positive dynamic variance $\tau_k^2$, so $|\eta_k|<1$.

After truncation the increment bound is $\beta_M=CL/\sqrt M$, but on the whole sample space the
trace of the bracket is only known to be $O(L^2)$. To perform the exponential tilting, the entire
vector is stopped when the trace of the bracket first exceeds the fixed constant
$C_0>\operatorname{tr}\Sigma+1$. The bracket increments are positive semidefinite matrices, so the
trace is monotone; the stopping changes the terminal value only when the total trace exceeds
$C_0$, an event of superpolynomially small probability. After the stopping,
$\operatorname{tr}\mathcal V\le C_0+C\beta_M^2$ holds deterministically. Below, $(Y^*,S^*)$
denotes this stopped vector. On the good set it equals the original score and energy up to a
deterministic center term that is superpolynomially small. The subsequent exponential tilting acts
only on $S^*$.

\subsubsection{Error rate of the two-dimensional distribution function under exponential tilting}

This gives the explicit polynomial rate needed for the macro conditional law.

\begin{lemma}[Single-coordinate tilting lemma]\label{F:lem:tilt}
Let a two-dimensional martingale difference $\zeta_j=(u_j,v_j)$ satisfy $|\zeta_j|\le\beta$, let
the trace of the total matrix bracket $\mathcal V$ have a deterministic upper bound, and let
$\Pr(\|\mathcal V-\Sigma\|>v)\le q$, where $\beta,v$ are polynomially small in $M$ and $q$ is
superpolynomially small. Write $(U,S)=\sum_j\zeta_j$, let $1\le t\le C\sqrt{\log M}$, and define
the tilted measure $dP_t/dP=e^{-tS}/\E e^{-tS}$. Then the lower-rectangle CDF distance
$\epsilon_M$ between $(U+\eta_kt,S+t)$ under $P_t$ and $N(0,\Sigma)$ satisfies
\begin{equation}
\epsilon_M\le C\Bigl\{(\beta+v+t\beta+q_Q)^{1/64}+tv+t^2\beta+t^2v+t^3\beta+q_Q\Bigr\},\label{F:rate}
\end{equation}
where $q_Q$ is superpolynomially small. Substituting $\beta=O(\log M/\sqrt M)$ and
\eqref{F:Sigma}, the right-hand side is a fixed negative power times a polylogarithm, so
$t\epsilon_M\to0$.
\end{lemma}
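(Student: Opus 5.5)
The plan is to compare characteristic functions of the tilted vector with those of $N(0,\Sigma)$, then pass to rectangle probabilities by Esseen-type smoothing. The compensated-exponential telescoping used for \eqref{eq:stable} and for the scalar tilting bound of Appendix~\ref{app:mda} is the template.

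\textbf{Tilted drift.} First, replace $P_t$ by the sequential tilt $Q$ with density $\prod_j e^{-tv_j}/m_j$, where $m_j=\E[e^{-tv_j}\mid\mathcal F_{j-1}]$. A one-step Taylor expansion gives
\[
\log m_j=\tfrac{t^2}{2}\E[v_j^2\mid\mathcal F_{j-1}]+O(t^3\beta\,\E[v_j^2\mid\mathcal F_{j-1}]).
\]
Hence $\prod_j m_j=e^{t^2\mathcal V_{SS}/2}(1+O(t^3\beta))$ on the bounded-trace set. On the good set $\|\mathcal V-\Sigma\|\le v$ this equals $e^{t^2/2}(1+O(t^2v+t^3\beta))$. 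Off that set, use the deterministic trace bound and Cauchy--Schwarz via $\E(dQ/dP)^2\le e^{Ct^2}$. Because $t\le C\sqrt{\log M}$, the factor $e^{Ct^2}$ is only polynomial, and it is absorbed into the superpolynomial $q$. The result is a superpolynomially small $q_Q$ together with a total-variation comparison $\|P_t-Q\|=O(t^2v+t^3\beta)+q_Q$.

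Under $Q$ the conditional mean of $\zeta_j$ is $-t\,\E[\zeta_jv_j\mid\mathcal F_{j-1}]+O(t^2\beta\,\E|\zeta_j|^2)$. Summing gives $-t(\mathcal V_{US},\mathcal V_{SS})=-t(\eta_k,1)+O(tv)$ on the good set. This is exactly where the centering $(U+\eta_k t,S+t)$ comes from. The recentered increments $\tilde\zeta_j$ are $Q$-martingale differences of size at most $2\beta+Ct\beta$. Their $Q$-bracket differs from $\mathcal V$ by $O(t\beta+t^2\beta)$, so it stays within $v+t\beta+t^2\beta$ of $\Sigma$ except on a set of probability at most $q_Q$.

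\textbf{Characteristic functions.} For fixed $\lambda\in\mathbb R^2$, telescope the compensated exponential $\exp(i\lambda^\top\tilde Z_j+\lambda^\top\tilde{\mathcal V}_j\lambda/2)$. After stopping, its modulus is bounded deterministically by $e^{C|\lambda|^2}$. The second-order terms cancel, and the third-order remainders sum to $O(|\lambda|^3\beta')$ with $\beta'=\beta+t\beta$. This yields
\[
|\E_Q e^{i\lambda^\top\tilde Z}-e^{-\lambda^\top\Sigma\lambda/2}|\le e^{C|\lambda|^2}\bigl(|\lambda|^3\beta'+|\lambda|^2v'+q_Q\bigr),
\]
where $v'=v+t\beta+t^2\beta$.

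\textbf{Main obstacle: smoothing.} The hard step is converting this into a lower-rectangle CDF bound in two dimensions. The bound is only useful for $|\lambda|\le\Lambda$ with $e^{C\Lambda^2}$ at most a small power, so I would take $\Lambda\asymp\sqrt{c\log(1/(\beta'+v'+q_Q))}$. A cruder route, chosen for robustness, is: smooth rectangle indicators by Gaussian convolution at scale $h$, and use Fourier inversion on $|\lambda|\le\Lambda$. The Gaussian limit has bounded density because $\Sigma$ is positive definite with $|\eta_k|<1$, so the boundary strip costs $O(h)$. Balancing $h$ against the exponential blowup in $\Lambda$ loses powers, and choosing $h=(\beta'+v'+q_Q)^{1/64}$ leaves the stated root.

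The additive terms $tv+t^2\beta+t^2v+t^3\beta+q_Q$ come from two sources. One is the total-variation difference between $P_t$ and $Q$. The other is the shift between the true drift $-t\mathcal V_{\cdot S}$ and $-t(\eta_k,1)$, which changes the rectangle probability by $O(tv)$ because the density is bounded. Finally, substituting $\beta=O(\log M/\sqrt M)$ and $v=v_M$ from \eqref{F:Sigma} shows that every term is a negative power of $M$ times a polylogarithm. Since $t=O(\sqrt{\log M})$, it follows that $t\epsilon_M\to0$.
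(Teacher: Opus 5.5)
Your tilting and drift steps match the paper's Steps~1--2 and are fine: the product tilt $Q$ with $\E(dQ/dP)^2\le e^{Ct^2}$, the total-variation cost $O(t^2v+t^3\beta)$ plus a superpolynomial term, the drift $-t\mathcal B_je_2$ that produces the centering $(\eta_kt,t)$, and the $O(t\beta)$ bracket perturbation. The gap is in the characteristic-function step. Write $\delta=\beta'+v'+q_Q$. As you note yourself, your bound $e^{C|\lambda|^2}(\cdots)$ is informative only for $|\lambda|\le\Lambda\asymp\sqrt{c\log(1/\delta)}$, i.e.\ $\Lambda\asymp\sqrt{\log M}$. No smoothing then gives a power rate. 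With Gaussian smoothing at scale $h$, the uncontrolled frequencies contribute $\int_{|\lambda|>\Lambda}e^{-h^2|\lambda|^2/2}\,d\lambda=2\pi h^{-2}e^{-h^2\Lambda^2/2}$. For your choice $h=\delta^{1/64}$ we have $h\Lambda\to0$, so this term is of order $\delta^{-1/32}$, not small. More generally, any Esseen-type bound with frequency cutoff $\Lambda$ pays a term of order $1/\Lambda\asymp(\log M)^{-1/2}$. So your route gives only a logarithmic $\epsilon_M$. Then $t\epsilon_M$ is at best bounded, not vanishing, when $t$ is of order $\sqrt{\log M}$. That is exactly the moderate-deviation range where the lemma is needed, since the inverse tilting in Proposition~\ref{F:prop:cdf} multiplies $\epsilon_M$ by $t$. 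The claim that balancing $h$ against the blowup ``loses powers'' and leaves the exponent $1/64$ is therefore not justified; the loss is logarithmic.

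The missing device is the paper's directional stopping at the target bracket, not merely at the trace cap.
\begin{itemize}
\item For each unit direction $\theta$, standardize $\theta\cdot T$ by $(\theta^\top\Sigma\theta)^{1/2}$ and stop when its predictable bracket first exceeds $1+Cw$. This changes only samples of probability at most $q'$.
\item After this stopping, the normalized compensated exponential $\exp\{iu\,\theta\cdot T_j+u^2(\langle\theta\cdot T\rangle_j-1)/2\}$ has modulus at most $e^{C(w+b^2)u^2}$ on the whole sample space. That is $O(1)$ throughout $|u|\le\delta^{-1/16}$.
\item The telescoping then gives $|\varphi_T(u)-e^{-u^\top\Sigma u/2}|\le C\delta(1+|u|^4)$ on this polynomially large frequency ball.
\item Smoothing at $h=\delta^{1/32}$ suppresses the remaining frequencies superpolynomially. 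Truncating to $[-R,R]^2$ with $R=\delta^{-1/32}$ and removing the smoothing yields $\delta^{1/64}$.
\end{itemize}
The same stopping is also what lets you handle the bad set inside the telescoping sum. The event $\{\|\mathcal V-\Sigma\|>v\}$ is not $\mathcal F_{j-1}$-measurable, so you cannot simply pay $q_Q$ times an exponential modulus bound at each step; the stopping makes the modulus bound hold pathwise.
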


\begin{proof}
\emph{Step 1 (product tilting).} Define the measure $Q$ with density
$\prod_je^{-tv_j}/\E(e^{-tv_j}\mid\mathcal F_{j-1})$. The deterministic trace bound gives
$\E(dQ/dP)^2\le e^{Ct^2}$, so the bad set remains superpolynomially small under $Q$; the total
variation distance between the true tilting $P_t$ and $Q$ is $O(t^2v+t^3\beta)$ plus a
superpolynomially small term.

\emph{Step 2 (drift and bracket).} Write
$\mathcal B_j=\E(\zeta_j\zeta_j^\top\mid\mathcal F_{j-1})$ and $e_2=(0,1)^\top$. A stepwise
Taylor expansion gives
\begin{equation}
\E_Q(\zeta_j\mid\mathcal F_{j-1})=-t\mathcal B_je_2+\varrho_j,\qquad|\varrho_j|\le Ct^2\beta\operatorname{tr}\mathcal B_j,\label{F:drift}
\end{equation}
and the operator-norm difference between the recentered conditional covariance and $\mathcal B_j$
is at most $Ct\beta\operatorname{tr}\mathcal B_j$. Let $T$ be the vector martingale centered
under $Q$; then on the good set
\[
\|\langle T\rangle-\Sigma\|\le v+Ct\beta,\qquad|(U+\eta_kt,S+t)-T|\le C(tv+t^2\beta).
\]
The drift $-\eta_kt$ of the first coordinate comes from the cross entry of $\mathcal B_j$ in
\eqref{F:drift}; knowing only that the scalar variance is close to $1$ does not give this drift,
so the $EH$ bracket must be controlled simultaneously.

\emph{Step 3 (quantitative two-dimensional rectangle CLT).} For a centered two-dimensional
martingale $T$ with increment bound $b$, deterministic trace upper bound, and matrix bracket
within $w$ of $\Sigma$ with probability at least $1-q'$, set $\delta=b+w+q'$. For each unit
direction $\theta$, standardize $\theta\cdot T$ by $(\theta^\top\Sigma\theta)^{1/2}$ and stop when
the directional bracket first crosses $1+Cw$; the stopping changes only samples of probability at
most $q'$. A one-step Taylor expansion of the compensating characteristic function gives,
direction by direction, for $r=|u|\le\delta^{-1/16}$,
\[
|\varphi_T(u)-e^{-u^\top\Sigma u/2}|\le C\{br^3+b^2r^4+(w+b^2+q')r^2+q'\}\le C\delta(1+r^4),
\]
with constants uniform in the direction. Here only the characteristic function of the original
$T$ is estimated direction by direction; no common stopping vector is constructed.

Add an independent $N(0,h^2I_2)$ to both sides, with $h=\delta^{1/32}$. Fourier inversion gives
the supremum of the smoothed density difference: the integral over the low-frequency ball
$|u|\le\delta^{-1/16}$ is at most $C\delta^{5/8}$, and the high-frequency part is controlled by
$e^{-h^2|u|^2/2}$, which is superpolynomially small relative to $\delta$. The second moments of
both sides are uniformly bounded; restricting the lower rectangle to $[-R,R]^2$ with
$R=\delta^{-1/32}$, the outside probability is at most $C/R^2$ and the integral of the density
error inside is at most $CR^2\delta^{5/8}$, so the smoothed rectangle error is at most
$C\delta^{1/16}$. Finally remove the smoothing: take the coordinate tolerance
$a=h\sqrt{C\log(1/\delta)}$, so the probability that the noise exceeds the tolerance is less than
$\delta^2$; the marginal density of the smoothing Gaussian is uniformly bounded, and the
probability of the boundary strip of the rectangle is $O(a)$. Sandwiching the original rectangle
between the $\pm a$ shifts of the smoothed rectangle gives
\begin{equation}
d_{\rm rect}(T,N(0,\Sigma))\le C\delta^{1/64}.\label{F:rect}
\end{equation}
This conclusion also covers unbounded endpoints, that is, the marginal distributions.

Applying \eqref{F:rect} to the $T$ of Step 2 (with $b=2\beta$, $w=v+Ct\beta$), controlling the
shift $C(tv+t^2\beta)$ by the anticoncentration of the Gaussian marginals, and adding the total
variation error of Step 1 gives \eqref{F:rate}.
\end{proof}

\begin{remark}
The rate in \eqref{F:rate} is weak; it is used only to guarantee $t\epsilon_M\to0$, not to
predict the error at finite $M$.
\end{remark}

\subsubsection{Macro conditional CDF with all-positive signs}

\begin{proposition}[All-positive macro conditional law]\label{F:prop:cdf}
Under \eqref{F:range}, with the cutoff of Lemma~\ref{F:lem:cutoff}, the Plain model satisfies
\begin{equation}
\sup_x\left|\Pr(X_+\le x\mid \mathcal A)-\frac{\Phi_2(x,-t_M;\eta_k)}{\Phi(-t_M)}\right|\to0.\label{F:cdf}
\end{equation}
Equivalently, $X_++\eta_kt_M$ tends to $N(0,1-\eta_k^2)$ conditioned on $\mathcal A$.
\end{proposition}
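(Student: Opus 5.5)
The plan is to work first in the configuration model with the stopped two-dimensional martingale $(Y^*,S^*)$ of the preceding subsection, compute the joint law of energy and score on the score-defined event, transfer to the true macro event $\mathcal A$, and finally pass to the Plain model. Throughout, the tilting lemma (Lemma~\ref{F:lem:tilt}) is the main tool.

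\textbf{Joint probability on the score event.} For fixed $x$ we need
\[
\Pr(Y^*\le x,\ S^*\le -b_M)=\Phi_2(x,-t_M;\eta_k)\,(1+o(1))
\]
relative to $p_M$, uniformly in $x$. Use the untilting identity
\[
\Pr(Y^*\le x,\ S^*\le-b_M)=\E e^{-tS^*}\,\E_t\bigl[e^{tS^*}\mathbf 1\{Y^*\le x,\ S^*\le -b_M\}\bigr]
\]
with $t=t_M$. Under $P_t$, Lemma~\ref{F:lem:tilt} shows that $(Y^*+\eta_kt,\ S^*+t)$ is within lower-rectangle distance $\epsilon_M$ of $N(0,\Sigma)$, with $t\epsilon_M\to0$. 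Write the integrand as $e^{-t(W-t)}$ with $W=S^*+t$, and integrate by parts in the $W$ coordinate on the rectangle $\{Y^*+\eta_kt\le x+\eta_kt,\ W\le t-b_M\}$. The kernel $te^{-t w}$ has total mass $O(1)$ on $w\ge0$. Multiplying by $\E e^{-tS^*}=e^{t^2/2}(1+o(1))$, which follows from the bracket bound \eqref{F:Sigma} and a product-tilting estimate, gives the same expression as for the exact Gaussian with error $O(t\epsilon_M)\,p_M$. For the exact Gaussian, that expression equals $\Phi_2(x,-t;\eta_k)$ up to the factor from $b_M-t_M=O((1+r_M^2)/\sqrt M)$. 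Since $t\cdot(b_M-t_M)\to0$, this shift costs only a relative $o(1)$.

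\textbf{Transfer to $\mathcal A$ and to $X_+$.} By the sandwich (C3), $\mathcal A$ differs from $\{S\le-b_M\}$ by a superpolynomially small set plus the strip $\{|S+b_M|\le C\delta_M\}$. The same tilting argument as in Appendix~\ref{app:mda} shows this strip has probability $o(p_M)$ because $b_M\delta_M\to0$. The stopped vector agrees with the unstopped one off a superpolynomially small set, and superpolynomially small sets are negligible relative to $p_M$. It remains to convert $Y_K$ back to $X_+$ through the affine relation referred to as \eqref{F:affine}. That relation should read $X_+=Y_K+o(1)$, up to the deterministic center and the configuration-versus-Plain difference of the energy, which are $o(1)$ on the standardized scale, uniformly outside a small-probability set. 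Because the probability of the exceptional set must be $o(p_M)$ rather than $o(1)$, I would bound the discrepancy deterministically, or with superpolynomial tails via the truncation of Lemma~\ref{F:lem:conc}, rather than in probability.

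\textbf{Transfer to the Plain model.} Follow (C4). Plant $l$ special edges, check that each planted term satisfies the joint relative asymptotic $\Pr(X_+\le x,\mathcal A\mid I_1=\dots=I_l=1)/p_M\to\Phi_2(x,-t;\eta_k)/\Phi(-t)$, and apply upper and lower Bonferroni truncations. For the planted terms, the special edges change $K$ by a bounded amount outside a superpolynomial set and perturb the kernel by $O_l(1/M)$, so Lemma~\ref{F:lem:tilt} applies with the same rates. Divide by $\Pr(\mathcal A)=p_M(1+o(1))$. Pointwise convergence becomes uniform because the limit family $x\mapsto\Phi_2(x,-t_M;\eta_k)/\Phi(-t_M)$ is the law of $-\eta_kt_M+N(0,1-\eta_k^2)+o(1)$, which is tight after centering. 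The equivalent formulation then follows from $\E[Y\mid Y\le -t]=-t+O(1/t)$ together with the fact that the conditional variance of the truncated score tends to $0$.

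\textbf{Main obstacle.} I expect the hardest step to be the first one: getting relative (not additive) precision in two dimensions. It is essential that the cross bracket $V_{EH}$ concentrates at $\eta_k$ to within $v_M$ with $tv_M\to0$, so that the tilted first coordinate drifts by exactly $-\eta_kt$. It is also essential that the integration by parts is done with the rectangle-CDF error $\epsilon_M$ from \eqref{F:rate}, since a density bound is not available there.
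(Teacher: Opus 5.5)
Your proposal is correct and follows the paper's proof essentially step by step. The shared steps are: inverse tilting of the stopped pair $(Y^*,S^*)$ using the rectangle bound of Lemma~\ref{F:lem:tilt} and an integration by parts against an $O(1)$-mass exponential kernel on $\{S^*+t\le 0\}$ (your integrand should read $e^{t(W-t)}$, not $e^{-t(W-t)}$); transfer to $\mathcal A$ via (C3) and the $o(p_M)$ strip estimate; fixed planting plus Bonferroni for the Plain model; and the affine relation \eqref{F:affine}, which is an exact deterministic identity, so your worry about exceptional sets of size $o(p_M)$ does not arise there. The only differences are cosmetic: the paper tilts at $t=b_M$ and switches to $t_M$ at the end through nested Gaussian tail events, and it works with the ratio $\mathcal K_M(x+\eta_kt)/\mathcal K_M(\infty)$, in which the factor $\E e^{-tS^*}e^{-t^2}$ cancels; you instead tilt at $t_M$, absorb $b_M-t_M$ directly, and use the Laplace approximation for the joint probability.
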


\begin{proof}
\emph{Inverse tilting.} The stopped $S^*$ satisfies the scalar Laplace approximation
$\log\E e^{-tS^*}=t^2/2+o(1)$; take $t=b_M$. For any $y$, let
$\mathcal K_M(y)=\E_t[e^{t(S^*+t)}\mathbf1_{\{S^*+t\le0,\ Y^*+\eta_kt\le y\}}]$. A
Lebesgue--Stieltjes integration by parts writes $\mathcal K_M$ as an integral of the tilted joint
CDF, so Lemma~\ref{F:lem:tilt} gives $\sup_y|\mathcal K_M(y)-\mathcal K_G(y)|\le2\epsilon_M$,
where $\mathcal K_G$ corresponds to the target Gaussian. Also
$\mathcal K_G(\infty)=e^{t^2/2}\Phi(-t)\sim(\sqrt{2\pi}t)^{-1}$. The exact inverse-tilting
formula is
\[
\Pr(Y^*\le x,\ S^*\le-t)=\E e^{-tS^*}\,e^{-t^2}\,\mathcal K_M(x+\eta_kt),
\]
and controlling the numerator and denominator simultaneously by the difference between
$\mathcal K_M$ and $\mathcal K_G$ gives a conditional CDF error of at most
$O(\epsilon_M/\mathcal K_G(\infty))=O(t\epsilon_M)=o(1)$:
\[
\sup_x\left|\Pr(Y^*\le x\mid S^*\le-t)-\frac{\Phi_2(x,-t;\eta_k)}{\Phi(-t)}\right|\to0.
\]

\emph{Transfer to the macro event.} By input (C3) and the truncation and stopping bad sets of
this appendix, $\Pr(\mathcal A\triangle\{S^*\le-t\})=o(\Phi(-t))$. This probability bound
controls the joint subprobability error for all $\{Y_K\le x\}$ simultaneously; no energy-weighted
bound is needed, so the conditional CDF transfers. $Y_K$ and $Y^*$ differ by only a
superpolynomially small amount on the good set; the density of the selection Gaussian is uniformly
bounded, and given the score the conditional density of the energy is at most
$[2\pi(1-\eta_k^2)]^{-1/2}$, and so remains after mixing. Replace $t$ by $t_M$: the two
Gaussian-tail conditioning events are nested, so the total variation distance of the conditional
distributions is at most the relative tail difference $O(t_M|t-t_M|)=o(1)$. This gives the result
for $Y_K$ in the configuration model.

\emph{Fixed planting.} Fix $l$ special edges and construct the initial two-dimensional martingale
from $k(d-l)$ ordinary endpoints, then append the special-edge filtration of
Appendix~\ref{app:mda}. Replacing $\mu$ by $k(d-l)/M$ changes only an $O_l(M^{-1})$ error in
\eqref{F:bracketmean}. When the maximum ordinary degree is at most $L$, the $kl$ special
incidences change $K$ by at most $C_l(L+1)$ and $H$ by at most $C_l$; after normalization this is
$O_l(L/\sqrt M)$, which, placed in the perturbation of the energy coordinate and the bandwidth of
the macro event, still tends to zero after multiplying by $t$. Write
$G_M(x)=\Phi_2(x,-t_M;\eta_k)/\Phi(-t_M)$; then for each fixed $l$,
\begin{equation*}
\sup_x\left|\Pr\nolimits_l(Y_K\le x,\ \mathcal A)/p_M-G_M(x)\right|\to0.
\end{equation*}
This already contains the relative precision of the probability denominator, which an ordinary
finite-planting CLT cannot replace.

\emph{Plain transfer.} Let $N^{\rm bad}_M$ be the number of bad edges, with
$dp_M^{\rm bad}\to\Lambda_k$. Specifying $l$ bad edges is exactly the admissible special-edge
model, so for each fixed $l$,
\[
\sup_x\left|\E[\mathbf1_{\{Y_K\le x,\ \mathcal A\}}(N^{\rm bad}_M)_l]/p_M-\Lambda_k^lG_M(x)\right|\to0.
\]
Apply finite-order Bonferroni upper and lower bounds to the nonnegative indicator. First fix the
truncation order and let $M\to\infty$; the finitely many error terms are uniform in $x$. The
difference between the two-part sum and $e^{-\Lambda_k}$, multiplied by $G_M(x)\le1$, and then
the truncation order is increased. This gives
$\sup_x|\Pr(Y_K\le x,\mathcal A,N^{\rm bad}_M=0)/p_M-e^{-\Lambda_k}G_M(x)|\to0$; dividing by
$\Pr(\mathcal A,N^{\rm bad}_M=0)/p_M\to e^{-\Lambda_k}$ gives the conditional CDF in the Plain
model.

\emph{Convert back to $X_+$.} In the Plain model exactly
\begin{equation}
X_+=\frac{\sqrt2\mu\sqrt M}{k(1-k/M)s_M}Y_K+\frac{(k-1)\mu}{k(1-k/M)s_M},\label{F:affine}
\end{equation}
that is, $X_+=(1+O(M^{-1}))Y_K+O(M^{-1/2})$. Since $Y_K+\eta_kt_M$ is tight conditioned on
$\mathcal A$, the transformation error is $o_P(1)$, and the density upper bound above gives
\eqref{F:cdf}. The energy with all-negative signs is the same as with all-positive signs.
\end{proof}

\subsubsection{Conditional moments and uniform integrability}

Convergence in distribution does not give moment convergence after the center is shifted. This
section controls the tail directly under the macro conditioning.

\begin{proposition}[Macro conditional moments]\label{F:prop:moments}
Under the assumptions of Proposition~\ref{F:prop:cdf}, every fixed positive-integer moment and
absolute moment of $W^+_M=X_++\eta_kt_M$ conditioned on $\mathcal A$ converges to the
corresponding moment of $N(0,1-\eta_k^2)$. In particular,
\begin{equation}
\E[X_+\mid \mathcal A]=-\eta_kA(-t_M)+o(1),\qquad\Var(X_+\mid \mathcal A)=1-\eta_k^2+o(1).\label{F:moments}
\end{equation}
For any deterministic sign composition independent of the mapping,
$\E[X_s\mid \mathcal A]=\gamma\bigl[-\eta_kA(-t_M)+o(1)\bigr]$, where the $o(1)$ is the same
all-positive remainder, hence uniform in the composition and not requiring $\rho_M$ to converge.
\end{proposition}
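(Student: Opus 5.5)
Given Proposition~\ref{F:prop:cdf}, the plan is to upgrade conditional convergence in distribution of $W^+_M=X_++\eta_kt_M$ to convergence of moments by proving uniform integrability of $|W^+_M|^p$ under $\mathcal A$ for every fixed $p$. Because $\Pr(\mathcal A)=p_M\to0$, the bound $\E[X^2;X^2>K]/\Pr(F)$ used in the critical window is no longer available. The required control will instead come from the exponential tilting already used in the CDF proof. The sign statement will then follow from the exact projection identity.

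\emph{Step 1: reduce to the stopped configuration-model vector.} On the good set, $X_+$ is an affine function of $Y_K$ by \eqref{F:affine}, and $Y_K$ agrees with the truncated, stopped $Y^*$ up to a superpolynomially small term. The complement of the good set has superpolynomially small probability, while $|X_+|=O(M^{3/2})$ deterministically. Dividing by $p_M\ge M^{-C}$, the contribution of the bad set to $\E[|W^+_M|^p\mathbf 1_{\mathcal A}]/p_M$ is $o(1)$.

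It remains to bound $\E[|Y^*+\eta_kt|^{2p};\mathcal A']/p_M$, where $\mathcal A'$ is either $\{S^*\le -t\}$ or the bandwidth strip $\{|S^*+t|\le C\delta_M\}$ from (C3). For the strip we use Cauchy--Schwarz together with the fact that the strip has probability $o(p_M)$, once the tilted moment bound of Step~2 is in hand.

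\emph{Step 2: moment bounds via tilting.} Use the inverse-tilting identity
\[
\E\bigl[g(Y^*)\mathbf1_{\{S^*\le-t\}}\bigr]=\E e^{-tS^*}\,\E_t\bigl[g(Y^*)e^{tS^*}\mathbf1_{\{S^*\le -t\}}\bigr],
\]
with $e^{tS^*}\le e^{-t^2}$ on the event. Since $\E e^{-tS^*}e^{-t^2}\asymp e^{-t^2/2}\asymp t\,p_M$, this gives
\[
\E\bigl[|Y^*+\eta_kt|^{2p};S^*\le -t\bigr]/p_M\le C\,t\,\E_t\bigl[|Y^*+\eta_kt|^{2p}\bigr].
\]
The factor $t$ is a nuisance. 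I would remove it by keeping the weight $e^{t(S^*+t)}$ and splitting according to $S^*+t\in[-j-1,-j]$: the weight is then $e^{-tj}$, and the tilted probability of each slab is $O(1/t+\epsilon_M)$ by the rectangle bound of Lemma~\ref{F:lem:tilt}. Hölder's inequality applied with a slightly higher moment then gives a bound $O(1)$, provided $\sup_M\E_t|Y^*+\eta_kt|^{q}<\infty$ for every $q$.

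That uniform tilted moment bound follows from the martingale structure of Step~2 in Lemma~\ref{F:lem:tilt}. Under $Q$, the first coordinate equals a centered martingale $T_1$ plus a drift within $C(tv+t^2\beta)=o(1)$ of $-\eta_kt$. The increments of $T_1$ are bounded by $C\beta$, and its bracket is bounded deterministically after the trace stopping. Burkholder's inequality (or Freedman's inequality, giving sub-Gaussian tails) therefore yields all moments uniformly, and the total-variation comparison between $P_t$ and $Q$ is handled by the second-moment bound $\E(dQ/dP)^2\le e^{Ct^2}$ combined with Cauchy--Schwarz.

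\emph{Step 3: transfer to the Plain model and conclude.} The transfer to the Plain model repeats the finite-planting and Bonferroni argument with weights $|W|^p$ truncated at level $K$, for which the same relative precision holds. This gives uniform integrability under $\mathcal A$. Together with \eqref{F:cdf}, uniform integrability implies convergence of all moments of $W^+_M$ to those of the limit law, which is the first coordinate of $N(0,\Sigma)$ conditioned on the score lying below $-t$, recentred. This yields \eqref{F:moments}, using the truncated-normal mean $-A(-t_M)$ and variance $1-A(-t_M)[-t_M+A(-t_M)]\to0$. The latter vanishes as $t_M\to\infty$, consistent with the variance $1-\eta_k^2$.

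For a general sign composition, $\mathcal A$ depends only on the unsigned mapping and is permutation invariant. The exact identity \eqref{eq:projection} (equivalently $\E[X_s\mid\mathcal A]=\gamma\E[X_+\mid\mathcal A]$, as in Appendix~\ref{app:lowload}) then gives the final claim, with the same remainder for every composition.

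I expect the main obstacle to be Step~2, namely removing the factor $t$ without losing uniform integrability. If the slab decomposition proves delicate, the fallback is to use the Gaussian-limit density bound of $S^*+t$ under $P_t$ near $0$, which holds at scale $1/t$ because $t\epsilon_M\to0$.
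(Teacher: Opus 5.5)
\textbf{Genuine gap in Step~2.} Your uniform integrability rests entirely on Step~2, and the mechanism you propose for removing the factor $t$ does not work.

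The unit slabs $\{S^*+t\in[-j-1,-j]\}$ do not have tilted probability $O(1/t+\epsilon_M)$. Under $P_t$ the variable $S^*+t$ is approximately standard normal. The slab $j=0$, where the weight $e^{t(S^*+t)}$ is of order one, therefore has probability of order one, and the factor $t$ survives.

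The weight actually lives at scale $1/t$. With slabs of width $1/t$ the probabilities are $O(1/t)$, but H\"older then gives $t\,(\E_t|W|^{2pr})^{1/r}(C/t)^{1-1/r}\asymp t^{1/r}$. This diverges for every fixed $r$; optimizing $r$ against sub-Gaussian moments only improves it to a power of $\log t$.

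The loss is intrinsic. Bounded moments of $W$ under $P_t$, together with marginal information on $S^*+t$ (your fallback density bound included), cannot exclude $|W|\approx\sqrt{\log t}$ on the band $\{-1/t\le S^*+t\le 0\}$. That band is exactly the part of the space that survives conditioning on $\mathcal A$. So $\sup_M\E[|W^+_M|^{2p}\mid\mathcal A]<\infty$ is not established. Without it, the qualitative convergence \eqref{F:cdf} cannot be upgraded to moment convergence.

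\textbf{What the paper does instead.} The missing ingredient is joint control at scale $1/t$ \emph{with a rate}. The paper never bounds the conditional moments by $O(1)$ directly; it cuts at $R_M=A_q\sqrt{\log M}$.
\begin{itemize}
\item \emph{Above $R_M$:} Freedman's inequality under $Q$ gives $Q(|W^*|>R_M)\le 2M^{-H_q}$, with $H_q$ as large as desired. This uses only the whole-space drift bound $C_1t$, not the good-set drift you quote. The bound absorbs both the factor $t$ and the crude deterministic bound $dP_t/dQ=\prod_jm_j/\E e^{-tS^*}\le M^{D_0}$.
\item \emph{Below $R_M$:} the paper uses the polynomial Kolmogorov rate $e_M=O(M^{-c}(\log M)^b)$ of \eqref{F:dK}, obtained from Lemma~\ref{F:lem:tilt} and inverse tilting. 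Clipped moments have total variation at most $2R_M^q$, so they differ from the Gaussian ones by at most $2R_M^qe_M\to0$.
\end{itemize}
Uniform integrability \eqref{F:UI} then follows a posteriori from convergence at order $q+1$.

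\textbf{Smaller points.}
\begin{itemize}
\item Small total variation between $P_t$ and $Q$ does not by itself transfer unbounded moments. Your comparison should go through the explicit density $dP_t/dQ$, which is deterministically bounded and equal to $1+o(1)$ on the good set.
\item The Plain transfer of the tails is simplest via $\E_{\rm Plain}[f\mid\mathcal A]\le C\,\E_{\rm conf}[f\mid\mathcal A]$ for $f\ge0$.
\end{itemize}
Your treatment of a general sign composition via the exact projection is correct.
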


\begin{proof}
\emph{Quantitative CDF input.} Work first in the configuration model. The polynomial rectangle
error of Lemma~\ref{F:lem:tilt}, the $t$ factor of the inverse tilting, the relative bandwidth
error $O(t(\delta_M+\epsilon_M))$, and the superpolynomially small bad set divided by $p_M$
together give
\begin{equation}
d_K\bigl(\mathcal L(Y_K\mid \mathcal A),\ \mathcal G_t\bigr)\le e_M=O(M^{-c}(\log M)^b),\label{F:dK}
\end{equation}
where $\mathcal G_t$ is the distribution of the first coordinate of a standard bivariate Gaussian
with correlation $\eta_k$, conditioned on the second coordinate being $\le-t$, and $c>0$,
$b<\infty$ are fixed. The Bonferroni bound has not yet been used here; later, transferring to the
Plain model needs only qualitative CDF convergence.

\emph{Tilted tail bound on the whole sample space.} Continue with the product tilting $Q$ of
\eqref{F:drift}. Under $Q$, $T_1=\sum_j(u_j-\E_Q[u_j\mid\mathcal F_{j-1}])$ is a centered
martingale. By \eqref{F:drift}, on the whole sample space
\[
\Bigl|\sum_j\E_Q[u_j\mid\mathcal F_{j-1}]\Bigr|\le t\sum_j|(\mathcal B_j)_{12}|+Ct^2\beta\sum_j\operatorname{tr}\mathcal B_j\le C_1t,
\]
using only $|(\mathcal B_j)_{12}|\le\operatorname{tr}\mathcal B_j/2$, the deterministic trace
bound, and $t\beta\to0$; it does not require the cross bracket to be close to $\eta_k$ on the bad
set. The increments of $T_1$ are at most $2\beta$ and the sum of predicted variances is at most
$C_2$, so the Bernstein--Freedman inequality gives
$Q(|T_1|>x)\le2\exp\{-x^2/[2(C_2+2\beta x/3)]\}$. Let $W^*=Y^*+\eta_kt$; then
$|W^*-T_1|\le C_3t$, so for $x\ge2C_3t$
\begin{equation}
Q(|W^*|>x)\le2\exp\{-cx^2/(1+\beta x)\}.\label{F:Qtail}
\end{equation}

\emph{Estimate unbounded moments without total variation.} The density of the true tilting with
respect to $Q$ is exactly $\prod_jm_j(-t)/\E e^{-tS^*}$, where
$m_j(-t)=\E(e^{-tv_j}\mid\mathcal F_{j-1})$. The deterministic trace bound and $t\beta\to0$ give
$\prod_jm_j\le e^{Ct^2}$; combined with $\E e^{-tS^*}\sim e^{t^2/2}$, this gives the
deterministic bound $dP_t/dQ\le e^{C_4t^2}\le M^{D_0}$ on the whole sample space, with $D_0$
fixed. This crude bound is used together with the martingale tail bound, whose order can be raised
arbitrarily; multiplying the small total variation distance by a growing maximum of the energy
does not yield a vanishing error.

\emph{Truncated tail.} Fix a positive integer $q$. The actual energy satisfies
$|Y_K|\le CM^{3/2}$, and the stopped martingale has at most $O(M)$ steps of size
$O(\log M/\sqrt M)$ each, so $|W^*|\le CM^2$ is a deterministic crude bound. Take
$R_M=A_q\sqrt{\log M}$ with $A_q$ large enough that $R_M\ge2C_3t$, and \eqref{F:Qtail} gives
$Q(|W^*|>R_M)\le2M^{-H_q}$ with $H_q>D_0+2q+2$. Outside $\mathcal B_M$, the event
$\mathcal A$ implies $S^*\le-t+C\delta_M$; for any nonnegative $f(W^*)$, the inverse tilting
gives
\[
\E[f(W^*);\mathcal A\setminus\mathcal B_M]\le\E e^{-tS^*}e^{-t^2}e^{Ct\delta_M}\E_tf(W^*)\le Cp_M\,t\,M^{D_0}\E_Qf(W^*),
\]
where the last step uses $\E e^{-tS^*}e^{-t^2}/p_M=O(t)$. Taking
$f=|W^*|^q\mathbf1_{\{|W^*|>R_M\}}$ gives
$\E[|W^*|^q\mathbf1_{\{|W^*|>R_M\}};\mathcal A\setminus\mathcal B_M]/p_M\le
CtM^{D_0+2q-H_q}=o(1)$. The contribution of the bad set is at most
$CM^{2q}\Pr(\mathcal B_M)/p_M=o(1)$. Hence
\begin{equation}
\E[|W_K|^q\mathbf1_{\{|W_K|>R_M\}}\mid \mathcal A]\to0,\qquad W_K=Y_K+\eta_kt.\label{F:tailcut}
\end{equation}
Since $R_M\to\infty$, \eqref{F:tailcut} is not yet uniform integrability by itself.

\emph{Between a fixed threshold and $R_M$.} Let $Z_0,G$ be independent standard normals; the
shifted-center variable of the selection Gaussian can be written as
$W^G_t=\eta_k(Z_0+t)+\sqrt{1-\eta_k^2}\,G$, conditioned on $Z_0\le-t$. For $t\ge1$ and $x\ge0$,
\[
\Pr(-(Z_0+t)>x\mid Z_0\le-t)=\frac{\Phi(-(t+x))}{\Phi(-t)}\le e^{-tx-x^2/2},
\]
which follows by shifting the integration variable of the numerator by $x$. Hence all moments of
the overshoot tend to zero, and the moments of $W^G_t$ tend to the corresponding moments of
$N(0,1-\eta_k^2)$. For the $q$-th moment use $\mathrm{clip}(w,-R_M,R_M)^q$, and for the absolute
moment use $\min(|w|,R_M)^q$; their total variation is at most $2R_M^q$, and Stieltjes integration
by parts together with \eqref{F:dK} gives that the truncated expectations differ by at most
$2R_M^qe_M=o(1)$. Removing the truncation via \eqref{F:tailcut} and the uniform tail control of
the Gaussian gives the convergence of all moments in the configuration model. Taking the
conclusion for order $q+1$ gives the uniform integrability
\begin{equation}
\sup_{M\text{ sufficiently large}}\E[|W_K|^q\mathbf1_{\{|W_K|>R\}}\mid \mathcal A]\le C_q/R.\label{F:UI}
\end{equation}

\emph{Plain transfer.} The scalar relative Bonferroni already gives
$\Pr_{\rm conf}(\mathcal A,N^{\rm bad}_M=0)/\Pr_{\rm conf}(\mathcal A)\to e^{-\Lambda_k}>0$.
For any nonnegative $f$,
\[
\E_{\rm Plain}[f(Y_K)\mid \mathcal A]=\frac{\E_{\rm conf}[f(Y_K)\mathbf1_{\{N^{\rm bad}_M=0\}}\mid \mathcal A]}{\Pr_{\rm conf}(N^{\rm bad}_M=0\mid \mathcal A)}\le C\,\E_{\rm conf}[f(Y_K)\mid \mathcal A],
\]
so \eqref{F:UI} transfers directly to the Plain model. This inequality transfers only the tail
control; the limiting distribution comes from the Plain CDF transfer of
Proposition~\ref{F:prop:cdf}. Together they give all moment limits in the Plain model. Finally
convert back to $X_+$ via \eqref{F:affine}:
$X_++\eta_kt_M=(1+O(M^{-1}))(Y_K+\eta_kt)+O(M^{-1/2})+\eta_k(t_M-t)+O(t/M)$, where the last
several terms tend to zero.

\emph{Mean and variance.} The first- and second-moment conclusions give
$\E[X_+\mid \mathcal A]=-\eta_kt_M+o(1)$ and
$\Var(X_+\mid \mathcal A)=1-\eta_k^2+o(1)$. The Mills ratio satisfies $A(-t)-t=O(1/t)$, so the
mean can also be written in the form \eqref{F:moments}. This is only a rewriting; the remainder is
still $o(1)$, and it does not identify the coefficient of order $1/t_M$.

\emph{Mean under an arbitrary composition.} The event $\mathcal A$ depends only on the unsigned
mapping and is invariant under permutations of the element labels, so the exact projection
\eqref{eq:projection} gives $\E[X_s\mid \mathcal A]=\gamma\E[X_+\mid \mathcal A]$. This step does
not require a sign CLT under a rare-event conditioning.
\end{proof}

\subsubsection{Conditional CDF and variance under an arbitrary sign composition, including
$|\rho|=1$}

\begin{proposition}[Macro conditional law under an arbitrary composition]\label{F:prop:sign}
Under the assumptions of Proposition~\ref{F:prop:cdf}, let the signs be an arbitrary deterministic
composition, with $\rho_M$ not required to converge. Let
$G_{M,s}(x)=\Phi_2(x,-t_M;\eta_k\rho_M^2)/\Phi(-t_M)$; then
\begin{equation}
\sup_x|\Pr(X_s\le x\mid \mathcal A)-G_{M,s}(x)|\to0,\qquad\Var(X_s\mid \mathcal A)=1-\eta_k^2\rho_M^4+o(1).\label{F:sign}
\end{equation}
\end{proposition}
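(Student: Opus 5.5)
We will deduce the arbitrary-composition law from the all-positive law (Proposition~\ref{F:prop:cdf}, Proposition~\ref{F:prop:moments}) together with the decomposition $X_s=\gamma X_++V$ of Theorem~\ref{thm:sign}. The event $\mathcal A$ is rare, with probability $p_M\to0$, so Corollary~\ref{cor:balanced} cannot be used directly: the $L^1$ error of the quenched CLT divided by $\Pr(\mathcal A)$ need not vanish. We therefore work conditionally on the graph. By permutation invariance of $\mathcal A$, we may generate the unsigned labelled graph $\mathcal G$ first and then distribute the signs uniformly. Then
\begin{equation*}
\E\bigl[e^{iuX_s}\mathbf1_{\mathcal A}\bigr]=\E\bigl[\mathbf1_{\mathcal A}e^{iu\gamma X_+}\,\E(e^{iuV}\mid\mathcal G)\bigr].
\end{equation*}
It suffices to show that $\E(e^{iuV}\mid\mathcal G)=e^{-u^2(1-\rho_M^4)/2}+o(1)$ uniformly on $\mathcal A$ minus a set of probability $o(p_M)$. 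Once this holds, dividing by $\Pr(\mathcal A)$ gives the conditional characteristic function $e^{-u^2(1-\rho_M^4)/2}\,\E[e^{iu\gamma X_+}\mid\mathcal A]+o(1)$. Proposition~\ref{F:prop:cdf} identifies the second factor as $\gamma$ times the selection variable, and the convolution is exactly the law $G_{M,s}$, because $\gamma=\rho_M^2+O(1/d)$. Since $\rho_M$ need not converge, we argue by subsequences along which $\rho_M\to\rho$. The limit is continuous, so the CDF convergence is uniform in $x$.

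The quenched CLT of Appendix~\ref{app:technical} is deterministic given any graph satisfying the self-averaging properties \eqref{eq:selfavg} quantitatively. The graph must satisfy $\Omega_2/M$ and $M^{-1}\sum_i(\ell_i-\bar\ell)^2$ within $o(1)$ of $\mu^2/2$ and $\mu^2$, and $\max_i\ell_i\le\log^2M$. So the key step is to show that this good-graph event fails with probability $o(p_M)$. Here $p_M\ge M^{-C'}$ by \eqref{F:range}, so superpolynomial concentration is enough. The maximum-degree bound is superpolynomial, as in Lemma~\ref{F:lem:conc}. For $\Omega_2$ and the weighted-degree variance we would apply McDiarmid to the truncated versions, where each edge coordinate changes them by $O(\log^2M)$, giving tails $\exp(-cM^{1/2})$ at deviation $M^{3/4}$. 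On good graphs the cumulant bound $|\kappa_r|\le C_rd(1+\Delta)^{2r-1}$, together with the iid-to-fixed-composition flip, gives an explicit rate, uniform over good graphs. The flip step only holds with high probability over the auxiliary signs, which are independent of $\mathcal G$, so this causes no problem. Making this deterministic-per-graph error uniform is the main obstacle; I would track each $o_p$ in that appendix as a sign-randomness probability bound that depends only on the good-graph constants.

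The endpoint $|\rho_M|\to1$ needs separate care, since $V$ then degenerates. There we use $\E[V^2\mid\mathcal G]$ directly: its graph-conditional expectation is $(1-\gamma^2)$ times an expression that concentrates on good graphs, so $\E[V^2\mid\mathcal A]\to0$ whenever $\gamma\to1$. For the variance claim we combine three facts: the exact projection $\E[X_s\mid\mathcal A]=\gamma\E[X_+\mid\mathcal A]$; the moments of $X_+$ from Proposition~\ref{F:prop:moments}; and the identities $\E[V\mid\mathcal G]=0$ and $\E[V X_+\mid\mathcal G]=0$, which are exact from the pairwise sign structure since $V$ is a centered quadratic form in the signs. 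Together these give
\begin{equation*}
\Var(X_s\mid\mathcal A)=\gamma^2\Var(X_+\mid\mathcal A)+\E[\E(V^2\mid\mathcal G)\mid\mathcal A].
\end{equation*}
The last term is $1-\gamma^2+o(1)$ on good graphs, and its bad-graph contribution is controlled by the crude bound $V^2\le CM^4$ times probability $o(p_M M^{-4})$. Summing, $\Var(X_s\mid\mathcal A)=\gamma^2(1-\eta_k^2)+1-\gamma^2+o(1)=1-\eta_k^2\rho_M^4+o(1)$.
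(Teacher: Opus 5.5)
Your proposal is correct and follows the paper's proof step for step. The shared steps are: sign randomization after generating the graph; a good-graph set whose complement is superpolynomially small, hence $o(\Pr(\mathcal A))$; a sign CLT uniform over good graphs; the endpoint $|\rho|=1$ via the graph-by-graph second moment of $V$; and the law of total variance $\Var(X_s\mid\mathcal A)=\gamma^2\Var(X_+\mid\mathcal A)+\E[\Var_s(V\mid\mathcal G)\mid\mathcal A]$.

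Two details need tightening. First, since $t_M\to\infty$, neither $\mathcal L(X_s\mid\mathcal A)$ nor $G_{M,s}$ has a limit, so before appealing to a continuous limit you must recenter both by the same finite-$M$ shift $\eta_k\rho_M^2t_M$ (not $\eta_k\rho^2t_M$), as the paper does through $W^+_M=X_++\eta_kt_M$. Second, the $O(\log^2M)$ bounded-difference property of $\Omega_2$ and $\Omega_\ell$ holds only on the bounded-degree set, so McDiarmid's inequality needs a globally Lipschitz (McShane) extension, which is what the paper uses; a plain truncation does not suffice.
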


The proof of this proposition is given after Lemma~\ref{F:lem:graph}--\ref{F:lem:endpoint}. The
idea is to avoid redoing a signed tilting for each sign composition. First generate an unsigned
Plain hypergraph, then assign the given numbers of positive and negative signs uniformly; this has
the same joint law of $(X_s,\mathcal A)$ as the original model, and $\mathcal A$ selects only the
graph. Hence two facts suffice: the required graph properties still hold with probability tending
to one under $\mathcal A$; and on these deterministic good graphs the sign CLT holds uniformly.
The graph properties used in Theorem~\ref{thm:sign} hold only to $o_P(1)$ accuracy and cannot be
divided directly by the rare $\Pr(\mathcal A)$. Below, the exceptional probabilities are
strengthened to be superpolynomially small.

Use the overlap graph notation of Appendix~\ref{sec:signproof}: $m_{ij}=|S_i\cap S_j|$,
$\Omega=\sum_{i<j}m_{ij}$, $\Omega_2=\sum_{i<j}m_{ij}^2$, $\ell_i=\sum_{j\ne i}m_{ij}$,
$\Omega_\ell=\sum_i\ell_i^2$, and $\bar\ell=2\Omega/d$.

\begin{lemma}[Superpolynomial concentration of the overlap graph statistics]\label{F:lem:graph}
There exists a deterministic set $\mathcal H_M$ of graphs such that
$\Pr(\mathcal G\notin\mathcal H_M)$ is superpolynomially small and all graphs in $\mathcal H_M$
uniformly satisfy
\begin{equation}
\frac\Omega M=\frac{\mu_c^2}2+o(1),\quad\frac{\Omega_2}M=\frac{\mu_c^2}2+o(1),\quad\frac1M\sum_i(\ell_i-\bar\ell)^2=\mu_c^2+o(1),\quad\max_i\ell_i\le kL.\label{F:goodgraph}
\end{equation}
All the $o(1)$ terms can be taken as $O((\log M)^4/\sqrt M+|r_M|/\sqrt M)$. In particular
$\Pr(\mathcal G\notin\mathcal H_M\mid \mathcal A)\to0$, and this exceptional probability
multiplied by any fixed power of $M$ and divided by $\Pr(\mathcal A)$ still tends to zero.
\end{lemma}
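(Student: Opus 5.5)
Each of the four statistics in \eqref{F:goodgraph} is a function of the $d$ independent uniform $k$-subsets $S_1,\dots,S_d$. The plan is to obtain superpolynomial concentration by a bounded-differences argument on a truncated version of each statistic, and to compute the centers directly from hypergeometric moments. Let $L=\lceil\log M\rceil$ and let $H_0$ be the event that every cell has unsigned degree at most $L$. A union bound on binomial tails gives $\Pr(H_0^c)\le M(eC/L)^L$, which is superpolynomially small. On $H_0$ we have $\ell_i=\sum_{v\in S_i}(D_v-1)\le kL$, so the last condition in \eqref{F:goodgraph} holds on $H_0$ without further work.

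The centers come first. By exchangeability and the moments in Appendix~\ref{app:normalization}, $\E\Omega=\binom d2k^2/M$ and $\E\Omega_2=\binom d2\E H_{12}^2=\binom d2(k^2/M+O(M^{-2}))$. Both equal $M\mu^2/2+O(1)$ with $\mu=kd/M=\mu_c+O(|r_M|/\sqrt M)$, which gives the centers $\mu_c^2/2$ and the $|r_M|/\sqrt M$ part of the error. For the third statistic, expand $\sum_i\ell_i^2$ into pair terms $m_{ij}^2$ and star terms $m_{ij}m_{il}$. A star term has conditional mean $(k^2/M)^2$ given $S_i$, so $\E\sum_i\ell_i^2=2\E\Omega_2+d(d-1)(d-2)k^4/M^2$. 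Subtracting $d\bar\ell^2=4\Omega^2/d$, and using the concentration of $\Omega$ proved below, yields $M\mu^2+O(\sqrt M)$. This matches \eqref{eq:selfavg}.

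For concentration, replace each statistic $\Psi$ by its truncation $\Psi^{(L)}$, obtained by capping every degree at $L$; for instance $\Omega^{(L)}=\sum_v\binom{\min(D_v,L)}2$ up to the repeated-cell correction, and $\ell_i^{(L)}=\sum_{v\in S_i}(\min(D_v,L)-1)$. Resampling one subset $S_i$ changes $\Omega^{(L)}$ and $\Omega_2^{(L)}$ by $O(L)$ or $O(L^2)$, and changes $\sum_i(\ell^{(L)}_i)^2$ by $O(L^2)$. The last bound holds because only the $O(kL)$ elements sharing a cell with the old or new $S_i$ change their $\ell_j$, each by $O(1)$, with $\ell_j\le kL$. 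McDiarmid's inequality with $d=O(M)$ coordinates then gives deviations of $(\log M)^2L^2\sqrt M$ with probability at most $2e^{-c(\log M)^4}$. Divided by $M$, this is $O((\log M)^4/\sqrt M)$. On $H_0$ the truncated and true statistics coincide. Their means differ by $O(M^2)\Pr(H_0^c)$, since every statistic is polynomially bounded, and this difference is superpolynomially small. The centering by $\bar\ell$ is handled by the identity $\sum_i(\ell_i-\bar\ell)^2=\sum_i\ell_i^2-4\Omega^2/d$ together with the concentration of $\Omega$.

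Set $\mathcal H_M$ to be $H_0$ intersected with the four concentration events. Its complement is superpolynomially small. Since $\Pr(\mathcal A)=\Phi(z_r)(1+o(1))\ge M^{-C'}$ by Proposition~\ref{prop:mda} under \eqref{F:range}, dividing by $\Pr(\mathcal A)$ and multiplying by any fixed power of $M$ still leaves a quantity tending to zero. In particular $\Pr(\mathcal G\notin\mathcal H_M\mid\mathcal A)\to0$. Note that $\mathcal H_M$ is a deterministic set of graphs, so no conditioning enters the concentration argument itself.

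The main obstacle is the bounded-differences constant for $\sum_i(\ell_i-\bar\ell)^2$. The star terms create a long-range dependence, and one must verify that a single resampling affects only $O(L)$ of the $\ell_j$, each by $O(1)$. If McDiarmid turns out too lossy there, I would fall back on Freedman's inequality for the Doob martingale of exposing $S_1,\dots,S_d$ in order, which gives the same polylogarithmic rate.
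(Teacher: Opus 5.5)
Your proposal is correct and follows essentially the paper's proof: truncation at maximum degree $L=\lceil\log M\rceil$, an $O(L^2)$ bounded-difference constant for $\Omega,\Omega_2,\Omega_\ell$, McDiarmid at scale $\sqrt M(\log M)^4$, exact hypergeometric means with the cancellation in $\Omega_\ell-4\Omega^2/d$, and division by the polynomial lower bound on $\Pr(\mathcal A)$. The only difference is how the Lipschitz bound is made global. You cap degrees explicitly; for $\Omega_2$, which is not a function of the cell degrees, the capping has to be applied to the pair co-degrees $\#\{i:\{v,w\}\subseteq S_i\}$. The paper instead proves the Lipschitz bound on $\{\max_v D_v\le L\}$ along a path of intermediate graphs and then takes a McShane extension.
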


\begin{proof}
Let $L=\lceil\log M\rceil$ and $\mathcal D_L=\{\max_vD_v\le L\}$. Since
$D_v\sim\mathrm{Bin}(d,k/M)$, $\Pr(\mathcal D_L^c)\le M(eC/L)^L$, which is superpolynomially
small.

\emph{Lipschitz bound.} For two graphs in $\mathcal D_L$, replace the edges of differing labels
one by one by the corresponding edges of the other graph. The incidence sets of all intermediate
graphs are contained in the union of the two endpoint graphs, so the vertex degrees do not exceed
$2L$. Replacing one edge in such an intermediate graph affects only the $O(L)$ edge labels that
overlap the old or the new edge, and each $m_{ij}\le k$, so both $\Omega$ and $\Omega_2$ change
by $O(L)$. For the replaced edge $\ell_i=O(L)$, whose square changes by $O(L^2)$; for the
remaining $O(L)$ affected labels, $\ell_j$ changes by $O(1)$, so each square changes by $O(L)$.
Hence the total change of $\Omega_\ell$ is $O(L^2)$. Thus all three are $CL^2$-Lipschitz in the
Hamming distance of edge labels on $\mathcal D_L$. Here the global Lipschitz bound is proved along
a path in the union of two good graphs; it is not assumed that a single-step path stays within
$\mathcal D_L$.

\emph{Extension and concentration.} For each statistic $f$, take the McShane extension
$\widetilde f(\mathcal G)=\inf_{\mathcal G'\in\mathcal D_L}\{f(\mathcal G')+CL^2d_H(\mathcal G,\mathcal G')\}$
and truncate it to the deterministic nonnegative polynomial range of $f$. The extension equals
$f$ on the good set and remains $CL^2$-Lipschitz; the bad set has superpolynomially small
probability, and both $f$ and the truncated extension are at most $O(M^3)$, so
$|\E f-\E\widetilde f|$ is superpolynomially small. Applying McDiarmid's inequality to independent
edges,
\[
\Pr\bigl(|f-\E f|>C\sqrt M(\log M)^4\bigr)\le\Pr(\mathcal D_L^c)+2e^{-c(\log M)^4},\qquad f=\Omega,\Omega_2,\Omega_\ell.
\]

\emph{Deterministic means.} For the overlap $H$ of two independent uniform $k$-subsets, write
$h_1=\E H=k^2/M$ and $h_2=\E H^2=k^2/M+k^2(k-1)^2/[M(M-1)]$. Exactly,
\[
\E\Omega=\tbinom d2h_1,\qquad\E\Omega_2=\tbinom d2h_2,\qquad\E\Omega_\ell=d\{(d-1)h_2+(d-1)(d-2)h_1^2\}.
\]
The last identity holds because, given $S_i$, the $m_{ij}$ are independent and their conditional
first and second moments do not depend on $S_i$. Taking the good set as the intersection of
$\mathcal D_L$ with the three deviation threshold events gives \eqref{F:goodgraph}; the third
expression is computed from $\Omega_\ell-4\Omega^2/d$, in which the deterministic constants of the
two terms cancel. By input (C4), $\Pr(\mathcal A)\sim p_M$ is no smaller than a fixed negative
power divided by a polylogarithm, so the conditional exceptional probability tends to zero.
\end{proof}

\begin{lemma}[Uniform sign CLT on the good graphs]\label{F:lem:unifsign}
Make the exact decomposition $X_s=\gamma X_++V$ with $\E_s[V\mid\mathcal G]=0$. If
$\rho_M\to\rho$ and $|\rho|<1$, then for each fixed $u$,
\begin{equation}
\sup_{\mathcal G\in\mathcal H_M}\Bigl|\E_s[e^{iuV}\mid\mathcal G]-e^{-(1-\rho^4)u^2/2}\Bigr|\to0.\label{F:unifsign}
\end{equation}
\end{lemma}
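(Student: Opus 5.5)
The plan is to redo the argument of Appendix~\ref{app:technical} (``From iid signs to a fixed total sign count'') graph by graph. The point is that every estimate there is controlled only by the four quantities in \eqref{F:goodgraph}, so the constants become uniform over $\mathcal H_M$.

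Fix $\mathcal G\in\mathcal H_M$. First, the exact decomposition. Write $\widehat d-d=\frac{2}{k(1-k/M)}\sum_{i<j}s_is_j(m_{ij}-k^2/M)$. Averaging over the uniform assignment of the given composition gives $\E_s[s_is_j]=\gamma$ for every pair, which yields $\E_s[X_s\mid\mathcal G]=\gamma X_+$. Hence
\[
V=\frac{2}{k(1-k/M)s_M}\sum_{i<j}(s_is_j-\gamma)(m_{ij}-k^2/M).
\]
The constant $-k^2/M$ contributes $\sum_{i<j}(s_is_j-\gamma)=0$ exactly, because $\sum_{i<j}s_is_j$ is fixed by the composition. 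Therefore $V$ is an affine function of $Y_s=\sum_{i<j}m_{ij}s_is_j$ with deterministic coefficients, and it is centered exactly.

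Next, the coupling. Construct the fixed-composition signs from iid signs $t_i$ with mean $\rho_M$, followed by the batch flip of a uniform $q$-subset $U$ of the surplus class $J$. The flip identity \eqref{eq:batchflip}, the Efron--Stein bound $\Var T_\epsilon\le C\Omega_\ell$, and the sampling-without-replacement variance bounds involve the graph only through $\Omega$, $\Omega_\ell$ and $\bar\ell$. On $\mathcal H_M$ these are $O(M)$, $O(M)$ and $O(1)$ uniformly. The tail events $q>K\sqrt d$ and $N<cd$ depend only on the iid signs and not on $\mathcal G$. So the projection
\[
Y_s-\rho_M^2\Omega=\rho_M\sum_i(\ell_i-\bar\ell)\xi_i+\sum_{i<j}m_{ij}\xi_i\xi_j+o_p(\sqrt M)
\]
holds with the $o_p$ uniform in $\mathcal G\in\mathcal H_M$. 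Since $|e^{iuy}-e^{iuy'}|\le \min(2,|u||y-y'|)$, a uniform-in-probability error on the $\sqrt M$ scale perturbs the conditional characteristic function by a uniformly small amount. The difference between $\gamma$ and $\rho_M^2$ multiplies $\Omega=O(M)$ and contributes $O(1)$, which is negligible after division by $s_M\asymp\sqrt M$.

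For the main terms, the characteristic function of a sum of a linear and a quadratic form in iid centered bounded $\xi_i$ is handled by cumulants. The dependency graph has maximum degree $\max_i\ell_i\le kL$, so the bound $|\kappa_r|\le C_rd(1+kL)^{2r-1}=o(M^{r/2})$ holds for $r\ge3$ with $C_r$ independent of $\mathcal G$. The variance equals
\[
\rho_M^2(1-\rho_M^2)\sum_i(\ell_i-\bar\ell)^2+(1-\rho_M^2)^2\Omega_2,
\]
and \eqref{F:goodgraph} makes it equal to $M\mu_c^2(1-\rho^4)/2+o(M)$ uniformly. Dividing by the normalization $\frac{k(1-k/M)s_M}{2}\sim\mu_c\sqrt{M/2}$, the scaled variance tends to $1-\rho^4$ uniformly in $\mathcal G$.

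The step I expect to be the main obstacle is turning the method of moments into a statement uniform over graphs. I would make it explicitly quantitative. All cumulants of the normalized main term converge to those of $N(0,1-\rho^4)$ with error bounds depending only on $M$ and $r$. Since the normal law is determined by its moments, uniform convergence of every fixed moment gives uniform convergence of the characteristic function at fixed $u$. To see this, argue by contradiction: pick $\mathcal G_M\in\mathcal H_M$ along a subsequence where the supremum is attained up to $\epsilon$; that sequence of graphs satisfies the same moment limits, so its characteristic functions converge to $e^{-(1-\rho^4)u^2/2}$, a contradiction. The same sup-over-a-sequence argument absorbs the coupling error, which completes \eqref{F:unifsign}.
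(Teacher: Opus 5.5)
Your proposal is correct and follows essentially the paper's route. The paper's proof likewise notes that the batch-flip projection, the Efron--Stein and sampling-without-replacement bounds, and the dependency-graph cumulant bound use the graph only through the deterministic quantities in \eqref{F:goodgraph}; hence the sign CLT holds along every deterministic sequence in $\mathcal H_M$, and uniformity follows by extracting a violating sequence of good graphs and deriving a contradiction. Your explicit check that $V$ is an exactly centered affine function of $Y_s$ (via $\sum_{i<j}(s_is_j-\gamma)=0$) fills in a detail the paper leaves implicit.
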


\begin{proof}
The sign arguments of Appendix~\ref{sec:signproof} and Appendix~\ref{app:technical} are arguments
for deterministic graph sequences, using only \eqref{F:goodgraph}, $\Omega=O(M)$,
$\Omega_\ell=O(M)$, and a fixed distance of the sign proportions from the endpoints. Concretely,
the within-group variance bound for a whole-batch flip uses only $\Omega_\ell=O(M)$; the variance
of a flipped subset and the expectation of its internal edges use only $\Omega_\ell$ and $\Omega$.
The main term has $O(M)$ local terms, the maximum degree of the dependency graph and the magnitude
of the terms are both $O(\log M)$, and the normalized connected cumulant of fixed order $r\ge3$ is
$O_r(M^{1-r/2}\,\mathrm{polylog}\,M)$; the variance of the main term is determined by
\eqref{F:goodgraph}. Hence for any sequence $\mathcal G\in\mathcal H_M$,
$\mathcal L_s(V\mid\mathcal G)\Rightarrow N(0,1-\rho^4)$. If \eqref{F:unifsign} failed, one could
extract along a subsequence a deterministic graph violating it; this sequence of graphs still
satisfies all the deterministic bounds, contradicting the conclusion above.
\end{proof}

\begin{lemma}[Graph-by-graph conditional variance and the endpoints]\label{F:lem:endpoint}
Uniformly over all feasible compositions, on $\mathcal H_M$
\begin{equation}
\Var_s(V\mid\mathcal G)=1-\rho_M^4+o(1).\label{F:pergraphvar}
\end{equation}
\end{lemma}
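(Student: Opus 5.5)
The plan is to compute the conditional variance \emph{exactly}, graph by graph, as a quadratic form in the sign products, and then read off the limit from the deterministic bounds \eqref{F:goodgraph}. This avoids any CLT and so handles every composition, including $|\rho_M|=1$, at once. Since $\E_s[s_is_j]=\gamma$ for $i\ne j$ under the uniform composition, the exact projection already gives $\E_s[X_s\mid\mathcal G]=\gamma X_+$. Hence $\Var_s(V\mid\mathcal G)=\Var_s(X_s\mid\mathcal G)$. Write $X_s=\kappa_M\sum_{i<j}s_is_jW_{ij}$ with $\kappa_M=2/(k(1-k/M)s_M)$ and $W_{ij}=m_{ij}-k^2/M$.

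\textbf{Variance expansion.} Expanding $\Var_s(\sum_{i<j}s_is_jW_{ij})$ over pairs of pairs gives three classes. If the two pairs coincide, the covariance is $1-\gamma^2$. If they share exactly one index, it is $\gamma-\gamma^2$, since $s_i^2=1$. If they are disjoint, it is $\delta_4-\gamma^2$, where $\delta_4=\E_s[s_1s_2s_3s_4]$ is the explicit exchangeable fourth mixed moment of the composition $(d_+,d_-)$, a ratio of polynomials in $(d,b)$ over $(d)_4$. With $w_i=\sum_{j\ne i}W_{ij}$, $\Sigma_1=\sum_{i<j}W_{ij}$ and $\Sigma_2=\sum_{i<j}W_{ij}^2$, the sum over ordered share-one configurations equals $\sum_iw_i^2-2\Sigma_2$. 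The sum over disjoint pairs is then $\Sigma_1^2-\Sigma_2-(\sum_iw_i^2-2\Sigma_2)$. Substituting these gives the exact identity
\[
\Var_s(X_s\mid\mathcal G)=\kappa_M^2\Bigl[(1-2\gamma+\delta_4)\Sigma_2+(\gamma-\delta_4)\textstyle\sum_iw_i^2+(\delta_4-\gamma^2)\Sigma_1^2\Bigr].
\]

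\textbf{Limits on $\mathcal H_M$.} The next step is to evaluate each piece on $\mathcal H_M$, using only \eqref{F:goodgraph} and the elementary relations between the $W$- and $m$-statistics. These are $\Sigma_2=\Omega_2-2(k^2/M)\Omega+\binom d2k^4/M^2$, $\Sigma_1=\Omega-\binom d2k^2/M$, and $w_i=\ell_i-(d-1)k^2/M$. The last relation gives $\sum_iw_i^2=\sum_i(\ell_i-\bar\ell)^2+d(\bar\ell-(d-1)k^2/M)^2$. From \eqref{F:goodgraph} we get $\Sigma_2/M=\mu_c^2/2+o(1)$ and $M^{-1}\sum_iw_i^2=\mu_c^2+o(1)$. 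The $o(1)$ rates are explicit, so $\Sigma_1=O(\sqrt M(\log M)^4+|r_M|\sqrt M)$ and hence $\Sigma_1^2/M$ is polylogarithmic. A direct computation from the composition gives $\gamma=\rho_M^2+O(1/d)$ and $\delta_4=\rho_M^4+O(1/d)$, uniformly in the composition. It also gives $\delta_4-\gamma^2=O(1/d)$, and this third term is therefore $O(\mathrm{polylog}\,M/M)\to0$ after multiplying by $\kappa_M^2\asymp M^{-1}$. The remaining two coefficients are $1-2\rho^2+\rho^4=(1-\rho_M^2)^2$ and $\rho_M^2-\rho_M^4=\rho_M^2(1-\rho_M^2)$, each up to $O(1/d)$. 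Using $\kappa_M^2M\mu_c^2\to2$, the first two terms combine to
\[
(1-\rho_M^2)^2+2\rho_M^2(1-\rho_M^2)=1-\rho_M^4 .
\]
This is consistent with the total-variance formula of Section~\ref{sec:signproof}.

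\textbf{Uniformity.} All coefficients are bounded by $2$ in absolute value. Every error term is a product of an $O(1/d)$ coefficient error, or an $o(1)$ from \eqref{F:goodgraph}, with graph quantities bounded uniformly on $\mathcal H_M$. The bound is therefore uniform over graphs in $\mathcal H_M$ and over all feasible $(d_+,d_-)$, and the endpoints $|\rho_M|=1$ are included since the identity is exact. I expect the main obstacle to be the bookkeeping for the disjoint-pair term. Its coefficient is only $O(1/d)$, so it vanishes only because $\Sigma_1^2$ is polylogarithmic times $M$ rather than of order $M^2$. This requires the explicit $\sqrt M$-scale rate for $\Omega$ in Lemma~\ref{F:lem:graph}, together with an exact $(d)_4$ formula for $\delta_4-\gamma^2$, rather than the crude bound $|\delta_4-\gamma^2|\le2$.
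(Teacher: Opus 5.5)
Your proposal is correct and follows the paper's proof: both write an exact graph-by-graph variance identity with coefficients $1-2\gamma+a_4$, $\gamma-a_4$, $a_4-\gamma^2$ (your $\delta_4$ is the paper's $a_4$), evaluate it on $\mathcal H_M$, and normalize by $\kappa_M^2\sim 2/(\mu_c^2M)$. The one difference is that you center the weights at $k^2/M$, which is legitimate because $\sum_{i<j}s_is_j$ is deterministic under a fixed composition. Centering makes the disjoint-pair term negligible using only $a_4-\gamma^2=O(1/d)$ and the $\sqrt M$-scale rate for $\Omega$. The paper instead keeps the uncentered $m_{ij}$, where $\Omega^2=\Theta(M^2)$. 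It therefore needs the first-order expansion $a_4-\gamma^2=-4\rho_M^2(1-\rho_M^2)/d+O(d^{-2})$, so that this term combines with $\Omega_\ell$ into $\sum_i(\ell_i-\bar\ell)^2=\Omega_\ell-4\Omega^2/d$; in exchange it requires only $\Omega,\Omega_2,\Omega_\ell=O(M)$.
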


\begin{proof}
The unconditional $\E V^2$ cannot be computed first and then divided by the rare $\Pr(\mathcal
A)$, so the graph-by-graph exact variance is used instead. Let $a_4$ be the mean of the sign
products of four distinct labels. The finite identity
\[
\Var_s\Bigl(\sum_{i<j}m_{ij}s_is_j\Bigm|\mathcal G\Bigr)=\Omega_2(1-2\gamma+a_4)+\Omega_\ell(\gamma-a_4)+\Omega^2(a_4-\gamma^2).
\]
Setting $v=1-\rho_M^2$ and expanding this finite rational expression uniformly over all feasible
compositions gives
\[
v^2\Omega_2+\rho_M^2v(\Omega_\ell-4\Omega^2/d)+O(\Omega_2/d+\Omega_\ell/d+\Omega^2/d^2).
\]
On $\mathcal H_M$ we have $\Omega,\Omega_2,\Omega_\ell=O(M)$, so the remainder is $O(1)$;
multiplying by the normalization factor $4/[k^2(1-k/M)^2s_M^2]\sim2/(\mu_c^2M)$ gives
\eqref{F:pergraphvar}. This also gives the crude upper bound $C(v+M^{-1})$. When
$|\rho_M|\to1$, \eqref{F:pergraphvar} and the graph-by-graph Chebyshev inequality give $V\to0$,
uniformly over the good graphs; this covers arbitrary rates of approach to the endpoints.
\end{proof}

\begin{proof}[Proof of Proposition~\ref{F:prop:sign}]
Let $W^+_M=X_++\eta_kt_M$. Proposition~\ref{F:prop:cdf} gives
$W^+_M\mid \mathcal A\Rightarrow N(0,1-\eta_k^2)$. For any subsequence with $\rho_M\to\rho$ and
$|\rho|<1$, condition first on the graph: on the good graphs the inner error tends to zero
uniformly by \eqref{F:unifsign}, on the bad graphs the absolute value of the characteristic
function is at most one, and by Lemma~\ref{F:lem:graph} the probability of a bad graph under
$\mathcal A$ tends to zero. Hence
\[
\E[e^{iaW^+_M+ibV}\mid \mathcal A]-\E[e^{iaW^+_M}\mid \mathcal A]\,e^{-(1-\rho^4)b^2/2}\to0,
\]
that is, $W^+_M$ and $V$ are asymptotically independent under $\mathcal A$. For $|\rho|=1$ use
instead the conditional probability convergence of Lemma~\ref{F:lem:endpoint}.

Keep the actual shifted center:
\[
X_s+\eta_k\rho_M^2t_M=\gamma W^+_M+V+\eta_k(\rho_M^2-\gamma)t_M,
\]
where the last term is $O(t_M/d)\to0$. Here $\rho_M^2t_M$ is not replaced by $\rho^2t_M$ in the
absence of a rate. Therefore
$\mathcal L(X_s+\eta_k\rho_M^2t_M\mid \mathcal A)\Rightarrow N(0,1-\eta_k^2\rho^4)$. The
selection benchmark $G_{M,s}$ has the same limit under the same shifted center: it can be
represented as $c(Z+t_M)+\sqrt{1-c^2}\,G$, conditioned on $Z\le-t_M$, where $c=\eta_k\rho_M^2$,
and the Gaussian overshoot tends to zero. The two distributions tend to the same continuous
distribution, so their CDF distance tends to zero. When $\rho_M$ does not converge, take a
subsequence on which the CDF distance has a positive lower bound, then by compactness of $[-1,1]$
extract a sub-subsequence on which $\rho_M$ converges, contradicting the conclusion above. This
proves the first display of \eqref{F:sign}.

The variance follows from the law of total variance:
\[
\Var(X_s\mid \mathcal A)=\gamma^2\Var(X_+\mid \mathcal A)+\E[\Var_s(V\mid\mathcal G)\mid \mathcal A].
\]
On the good graphs use \eqref{F:pergraphvar}; on the bad graphs use the deterministic polynomial
upper bound of the energy and the superpolynomial accuracy of Lemma~\ref{F:lem:graph}. Substituting
$\Var(X_+\mid \mathcal A)\to1-\eta_k^2$ from Proposition~\ref{F:prop:moments} and noting
$\gamma^2-\rho_M^4=O(1/d)$ gives the second display of \eqref{F:sign}. The conditional mean still
uses the exact projection and cannot be replaced by CDF convergence.
\end{proof}

\begin{remark}[Balancing rate]
Fixing $\rho_M\to0$ in the critical window is enough to recover the standard normal. In the
moderate deviation regime, the macro-layer center $-\eta_k\rho_M^2t_M$ grows with $t_M$: if
$\rho_M^2t_M\to0$ (which an exactly balanced composition satisfies), then
$X_s\mid \mathcal A\Rightarrow N(0,1)$; if $\rho_M^2t_M\to b\in(0,\infty)$, then the limit is
$N(-\eta_kb,1)$. This concerns the macro layer. When the weight of the repeated-edge layer in $F$
tends to one, the failure-conditioned law still recovers normality.
\end{remark}

\subsubsection{The two-mechanism mixture}\label{app:two-mechanism}

The failure-conditioned distribution is asymptotically a two-layer weighted mixture, for two reasons: a
failure that is neither a repeated edge nor a macro core has probability
$O(M^{-b_k})=o(\lambda_M)$; and the probability that a macro failure and a repeated-edge failure
occur simultaneously is also $o(\lambda_M)$. Concretely, Lemma~\ref{F:lem:cutoff} gives the exact
set decomposition $F=\mathcal A\cup D\cup R$ together with the sizes of the parts. The planted
transfer of Appendix~\ref{app:lowload} gives
$\sup_x|\Pr(X_s\le x\mid D)-\Phi(x)|\to0$, which requires only that $d/M$ tends to a finite
positive number and holds equally as $\alpha_M\to\alpha_c$. Hence
\[
\Pr(X_s\le x,F)=\Pr(\mathcal A)\Pr(X_s\le x\mid \mathcal A)+\lambda_M\Phi(x)+o(\lambda_M),
\]
with the error uniform in $x$. Substituting input (C4) and Proposition~\ref{F:prop:sign} gives
the mixture display of Proposition~\ref{prop:mixture}.

\subsection{Load-axis coverage}\label{app:axis}
This subsection proves Proposition~\ref{prop:axis}. The proof of coverage is assembled by load
interval, with each interval relying on a different result: the critical window uses
Theorem~\ref{thm:upper}; the moderate deviation regime uses Proposition~\ref{prop:mixture}; the
fixed positive subcritical regime uses Proposition~\ref{prop:lowload}; when $d/M\to0$ and
$d/\sqrt M\to\infty$, it uses the small-core counting with the $d$-dependence retained, together
with the CLT after planting a repeated edge; the moving subcritical regime
($|r_M|/\sqrt{\log M}\to\infty$) uses a sandwich via the monotonicity of the failure event under
edge-addition coupling; and the overload side, as well as $d/M\to\infty$, uses the unconditional
CLT along the load axis.

To distinguish it from the mixture rule of Section~\ref{sec:rules}, this appendix denotes the upper
bound with threshold $T_n$ in Proposition~\ref{prop:axis} by $U_{\rm asym}$, and the
corresponding rule as the asymptotic mixture rule. The two differ in three respects: the asymptotic
mixture rule uses the unshifted coordinate $z_n$, uses $\lambda_n$ directly in the weight, and adds
no chi-square correction.

The mechanism of the counterexample can be seen from Proposition~\ref{prop:lowload}. On the sparse
side the failure almost always comes from a repeated edge; each repeated edge shifts
$\widehat d$ by $\pm2$, while $s_M$ is of order $d/\sqrt M$. When $d/\sqrt M\to\infty$ this shift
vanishes on the standardized scale; when $d=o(\sqrt M)$ it instead tends to infinity. Under
balanced signs the same-sign and opposite-sign repeated edges are each about half. The
opposite-sign half pushes the estimate too low, so the true value $d$ and all larger candidates
are rejected, and the probability that the upper bound misses the true value therefore tends to
$1/2$. For fixed $d=4$ with two positive and two negative signs, four of the six pairs are of
opposite sign, so this probability tends to $2/3$. Here the failure probability tends to zero, so
the positive lower bound required by Corollary~\ref{cor:balanced} fails, and balanced signs cannot
guarantee selection invariance.

In the sparse-side counterexample, $c_n\equiv0$, so the asymptotic mixture rule degenerates to the
unconditional normal threshold. In this case the failure-conditioned graph tends to a repeated
edge planted on a random pair, and the energy equals exactly $(d+2\sigma)/(1-k/M)$, where
$\sigma$ is the sign product of the repeated pair and satisfies
$\Pr(\sigma=-1)=d/[2(d-1)]$. Thus $U_{\rm asym}=d+2\sigma$. The acceptance set is nonempty, so
the fallback for an empty set also does not rescue the coverage.

The coverage of the asymptotic mixture rule depends only on whether the true candidate is rejected.
Let $\epsilon_M=\sup_x|\Pr(X\le x\mid F)-H_d(x)|$. Since $H_d$ is continuous, taking left limits
gives, for every real $q$, $|\Pr(X<q\mid F)-H_d(q)|\le\epsilon_M$; take $q=H_d^{-1}(u)$. So it
suffices to prove $\epsilon_M\to0$. By a subsequence argument it suffices to treat the following
cases separately.

Inside the critical window, $\lambda_M\to0$ and $H_d$ returns to the selection law. The moderate
deviation regime is handled by Proposition~\ref{prop:mixture}. The fixed subcritical regime has
$w_d\to0$ and is handled by Proposition~\ref{prop:lowload}.

When $r_M\to-\infty$, $|r_M|/\sqrt{\log M}\to\infty$, and $d/M\to\alpha_c$, take
$a=g_k^2C^2/2>k-2$ and $d'_M=\lfloor\alpha_cM-C\sqrt{M\log M}\rfloor$, so that eventually
$d_M\le d'_M$. Under edge-addition coupling the edge set of the core is monotone under edge
addition, so by Proposition~\ref{prop:mda},
$\Pr_{d_M}(\mathcal A)\le\Pr_{d'_M}(\mathcal A)=M^{-a+o(1)}=o(\lambda_M)$. Thus
$\Pr(F\setminus D)=o(\lambda_M)$, the conditional law is determined by the repeated-edge layer,
and $H_d\to\Phi$.

On the overload side and when $d/M\to\infty$, $\Pr(F)\to1$, and the conclusion follows from the
unconditional CLT along the load axis. This CLT holds for every sequence with
$d/\sqrt M\to\infty$. The reason is that the edge-revealing martingale of Appendix~\ref{app:lowload}
satisfies $\Var\mathcal V_d=O(d^4/M^3)$, while $s_M^4\approx4d^4/M^2$, so the variance of the
normalized bracket is $O(1/M)$; the sum of fourth moments satisfies
\begin{equation*}
\frac{\sum_j\E\Delta_j^4}{s_M^4}\le C_k\Bigl(\frac M{d^2}+\frac1d\Bigr)\to0,
\end{equation*}
so the martingale CLT gives $X\Rightarrow N(0,1)$. This conclusion does not require $d/M$ to be
bounded, and it is uniform in the signs. When a repeated edge is planted on a random pair, the
deterministic term $2\sigma$ and the cross term, divided by $s_M$, have squared errors
$O(M/d^2)$ and $O(1/d)$ respectively, which also tend to zero.

\paragraph{The $d/M\to0$ conditional transfer.} Here the dependence on $d$ in the small-core
counting of Appendix~\ref{app:lowload} is retained. In this regime every possible core size
satisfies $e\le d=o(M)$, and
\[
B_e\le(C_kd/M)^e(e/M)^{\lceil(k-2)e/2\rceil}.
\]
Retaining the rounding for $e=3$ and summing the upper bound over $e\ge4$ gives
\[
\Pr(F\setminus D)\le C_k\left(\frac{d^3}{M^{b_k+3}}+\frac{d^4}{M^{2k}}\right),
\qquad b_k=\lceil3(k-2)/2\rceil\ge k-1.
\]
By $\lambda_M\asymp_k d^2/M^k$ and the second-factorial-moment bound of repeated pairs, the
random-pair repeated-planting distribution $Q$ satisfies
\[
\bigl\|P(\cdot\mid F)-Q\bigr\|_{\rm TV}
\le C_k\left(\lambda_M+dM^{k-b_k-3}+\frac{d^2}{M^k}\right)\longrightarrow0.
\]
Hence when $d/\sqrt M\to\infty$, the planted CLT above gives $X\mid F\Rightarrow N(0,1)$. On the
other hand, $z_M(d)=-(g_k\alpha_c+o(1))\sqrt M$, while $\lambda_M\ge1/\binom Mk$, so $w_d\to0$
and $H_d\to\Phi$ uniformly. This completes the branch with load tending to zero. For a
subsequence, first take $d/M$ converging in $[0,\infty]$; when the limit is $\alpha_c$, then take
the limit of $r_M$ and of $|r_M|/\sqrt{\log M}$. This covers all load sequences.

Now construct the sparse-side counterexample. When $b=0$, each candidate has $c_n=0$, and the
asymptotic mixture rule degenerates exactly to the unconditional normal threshold
$T_n=n+s(n)\Phi^{-1}(u)$. When $d=o(\sqrt M)$, the failure-conditioned graph tends to a repeated
edge planted on a random pair of labels; the probability that the remaining base edges are pairwise
disjoint is $1-O(d^2/M)$, in which case $\widehat d=(d+2\sigma)/(1-k/M)$. For $M$ sufficiently
large, $T_n$ is strictly increasing in $n\ge2$, and
$T_{d+2\sigma}<\widehat d<T_{d+2\sigma+2}$, so $U_{\rm asym}=d+2\sigma$, hence
$\Pr(U_{\rm asym}<d\mid F)\to\Pr(\sigma=-1)=d/[2(d-1)]$.
\subsection{The second sign change of the mean}\label{app:wm}
Finally consider the conditional mean. In the fixed subcritical regime $c_F$ is positive
(Proposition~\ref{prop:lowload}); in the critical window $c_F$ is negative for $k\ge5$
(Section~\ref{sec:flip}). So for $k\ge5$, as the load decreases, $c_F$ must change sign once
more. The weighted formula of this section uses the all-same-sign measurement: set all $s_i=+1$
and write $Y=\widehat d-d=s_MX_{\rm one}$.

\begin{proposition}[Weighted mean and the sign-change layer of the mean for $k\ge5$]\label{prop:wm}
In the moderate deviation range of Section~\ref{sec:loadaxis}, with the cutoff of
Lemma~\ref{F:lem:cutoff},
\begin{equation*}
\E[Y;F]=2\lambda_M-s_M\eta_k\phi(t_M)+o\bigl(\lambda_M+s_M\phi(t_M)\bigr).
\end{equation*}
Fix $k\ge5$ and let
\begin{equation*}
t_*^2=(2k-3)\log M-2\log\frac{\alpha_ck!\sqrt\pi}{\eta_k}.
\end{equation*}
For any fixed $v\in\mathbb R$, take integers $d_M$ with $t_M^2-t_*^2\to v$; then
$c_F\binom d2\to1-e^{-v/2}$. Hence between the two layers $t^2=t_*^2\pm\delta$, $c_F$ changes
sign at least once.
\end{proposition}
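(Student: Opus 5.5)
The weighted mean formula follows from the set decomposition of Lemma~\ref{F:lem:cutoff}, $F=\mathcal A\cup D\cup R$. We split
\[
\E[Y;F]=\E[Y;\mathcal A]+\E[Y;D]-\E[Y;\mathcal A\cap D]+\E[Y;R],
\]
and treat the four terms in turn.

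\emph{Macro term.} By Proposition~\ref{F:prop:moments} with all signs positive, $\E[X_+\mid\mathcal A]=-\eta_kA(-t_M)+o(1)$. Proposition~\ref{prop:mda} gives $\Pr(\mathcal A)=\Phi(-t_M)(1+o(1))$. Their product is $\E[Y;\mathcal A]=-s_M\eta_k\phi(t_M)+o(s_M\Phi(-t_M))$, and since $\Phi(-t_M)\le\phi(t_M)/t_M$ the error is $o(s_M\phi(t_M))$. The $o(1)$ in Proposition~\ref{F:prop:moments} is additive while $A(-t_M)\sim t_M$, so this rate is adequate: the error is even $o(s_M\phi(t_M)/t_M)$.

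\emph{Repeated-edge term.} We reuse the low-load argument of Appendix~\ref{app:lowload}. The planted pair gives $\E[YN]=2\lambda_M$ exactly. We then replace $\mathbf1_D$ by $N$, which costs $\E[|Y|(N-1);N\ge1]\le C\sqrt M\lambda_M^2=o(\lambda_M)$.

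\emph{Remainders.} On $R$, the first-moment bound $\E[|Y|\mid A_{I,V}]\le C(\sqrt M+e^2)$ together with \eqref{eq:smallcore} gives $O(\sqrt M M^{-b_k})$. This is $o(\lambda_M)$ because $b_k-(k-2)\ge1$, so $\sqrt M M^{-b_k}\le M^{-(k-2)-1/2}$ while $\lambda_M\asymp M^{2-k}$.

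The intersection term $\E[|Y|;\mathcal A\cap D]$ is the delicate one, and I expect it to be the main obstacle. Its probability is $o(\lambda_M)$ by Lemma~\ref{F:lem:cutoff}, but $Y$ is unbounded. I would bound it by H\"older under the planted measure $Q_M$:
\[
\E[|Y|N\mathbf1_{\mathcal A}]=\lambda_M\E_{Q_M}[|Y|\mathbf1_{\mathcal A}]\le\lambda_M\bigl(\E_{Q_M}Y^2\bigr)^{1/2}Q_M(\mathcal A)^{1/2}.
\]
Under $Q_M$, $Y$ equals the unconditional energy of the remaining edges plus $2$ plus an $O(1)$ cross term, so $\E_{Q_M}Y^2=O(s_M^2)=O(M)$. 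This yields $\lambda_M\sqrt M\,Q_M(\mathcal A)^{1/2}$. Because $\sqrt M$ is too large to absorb against $\lambda_M$, I would instead compare with $s_M\phi(t_M)$. Along the planted coupling, $Q_M(\mathcal A)$ is of order $\Phi(-t_M)$ by the monotone sandwich used in Lemma~\ref{F:lem:cutoff}. The cross term is then $\lambda_M\sqrt{M\Phi(-t_M)}\le\sqrt{\lambda_M\cdot\lambda_M M\Phi(-t_M)}$, and by AM--GM this is $o(\lambda_M+s_M\phi(t_M))$ provided $\lambda_M\ll\sqrt M\phi(t_M)/\sqrt{\Phi(-t_M)}$ or the reverse. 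If the crude H\"older step fails in the balanced regime, I would redo it with the conditional tail control \eqref{F:UI} under planting, following the fixed-planting step of Proposition~\ref{F:prop:cdf}. That gives $\E_{Q_M}[|Y|\mid\mathcal A]=O(s_Mt_M)$, hence a bound $\lambda_Ms_Mt_M\Phi(-t_M)=o(s_M\phi(t_M))$.

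\emph{Sign-change layer.} We have $c_F\binom d2=\tfrac12\E[Y\mid F]$ by \eqref{eq:projection} with $\rho=1$. Dividing the weighted formula by $2\Pr(F)$, where $\Pr(F)\sim\lambda_M+\Phi(-t_M)$, gives
\[
c_F\binom d2\approx\frac{\lambda_M-\tfrac12 s_M\eta_k\phi(t_M)}{\lambda_M+\Phi(-t_M)}.
\]
We insert $\lambda_M\sim\alpha_c^2k!M^{2-k}/2$ and $s_M\sim\sqrt2\alpha_c\sqrt M$. The two numerator terms balance when
\[
\phi(t)=\frac{\alpha_ck!}{\sqrt2\,\eta_k}M^{3/2-k},
\]
that is, $t^2=(2k-3)\log M-2\log\bigl(\alpha_ck!\sqrt\pi/\eta_k\bigr)=t_*^2$. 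At $t_M^2=t_*^2+v$ the ratio becomes $e^{-v/2}$. The factor $\Phi(-t_M)\approx\phi(t_M)/t_M$ is $o(\lambda_M)$ there, because $\phi(t_M)\asymp M^{3/2-k}$ is only a $\sqrt M$ factor above $\lambda_M$ and $\Phi(-t_M)$ carries an extra $1/t_M$. I still need to check that this makes the denominator $\sim\lambda_M$, or else the limit must be adjusted. Granting it, the limit is $1-e^{-v/2}$. Finally, $t_*\asymp\sqrt{\log M}$ lies inside the moderate deviation range, and the limit $1-e^{-v/2}$ takes both signs as $v$ varies. Continuity in $d$ by unit steps then forces a sign change between the two layers $t^2=t_*^2\pm\delta$.
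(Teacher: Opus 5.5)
Your decomposition matches the paper's. Inclusion--exclusion over $F=\mathcal A\cup D\cup R$ together with $\mathbf 1_D=N-(N-1)\mathbf 1_D$ is equivalent to the identity \eqref{F:wm-id}. Your treatment of the macro, repeated-edge and $R$ terms, and the Cauchy--Schwarz bound under the random-pair planted measure $Q_M$, are also what the paper does. Two steps, however, are not established as you wrote them.

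\emph{The rate for $Q_M(\mathcal A)$.} The intersection term needs the rate $Q_M(\mathcal A)=O(p_M)$. The monotone sandwich of Lemma~\ref{F:lem:cutoff} cannot supply it. That sandwich compares with fixed-window planted limits and gives only $\limsup_M Q_M(\mathcal A)\le\Phi(g_kr_0)$ for each fixed $r_0$, i.e.\ $Q_M(\mathcal A)\to0$ with no rate. This leaves a bound of $\lambda_M\sqrt M\,o(1)$, which is useless on the sign-change layer: there $s_M\phi(t_M)\asymp\lambda_M$, and you need $M\,Q_M(\mathcal A)\to0$.

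The correct input is the fixed-planting version of (C4). Treat the two identical $k$-subsets as special edges, use the relative moderate-deviation asymptotic under planting, and transfer to the Plain model by the relative Bonferroni bound. Your fallback through \eqref{F:UI} is no shortcut: it needs the same rate, plus a planted version of Proposition~\ref{F:prop:moments}.

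With the rate in hand, apply AM--GM to the target rather than to the factors of your bound. Using $\Phi(-t)\le\phi(t)/t$ and $\lambda_M+s_Mt_Mp_M\ge2\sqrt{\lambda_Ms_Mt_Mp_M}$, you get
\[
\frac{\lambda_M\sqrt{Mp_M}}{\lambda_M+s_M\phi(t_M)}\le\frac12\sqrt{\frac{M\lambda_M}{s_Mt_M}}=O\bigl(M^{(5/2-k)/2}\bigr)\to0
\]
uniformly over the range, with no case split. The case condition you state does not by itself give the bound. In the case $\lambda_M\ll\sqrt M\phi(t_M)/\sqrt{\Phi(-t_M)}$ it only bounds the ratio to $s_M\phi(t_M)$ by $o(\sqrt M)$.

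\emph{The final step.} Your order comparison is inverted. On the layer, $s_M\phi(t_M)\asymp\lambda_M$ with $s_M\asymp\sqrt M$, so $\phi(t_M)\asymp M^{3/2-k}$ lies a factor $\sqrt M$ \emph{below} $\lambda_M\asymp M^{2-k}$, not above it. Hence $p_M\le\phi(t_M)/t_M=O\bigl(\lambda_M/(\sqrt M\,t_M)\bigr)=o(\lambda_M)$. Combined with Lemma~\ref{F:lem:cutoff}, (C4) and $\Pr(D)=\lambda_M+O(\lambda_M^2)$, this gives $\Pr(F)=\lambda_M(1+o(1))$. The remainder $o(\lambda_M+s_M\phi(t_M))/\Pr(F)$ is then $o(1)$. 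This is the check you left open. With the direction you wrote, the extra factor $1/t_M$ would not have been enough.

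A smaller slip: the macro-term error is $o(s_M\phi(t_M))$, not $o(s_M\Phi(-t_M))$. The relative $o(1)$ in $\Pr(\mathcal A)$ multiplies the main term $s_M\eta_k\phi(t_M)$. Since $o(s_M\phi(t_M))$ is all you need, this does not affect the argument.
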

The first display of the proposition is proved at the end of this subsection, and the second part
follows from the first; the meaning of the two conclusions is explained first. For any
deterministic sign composition, $\E[\widehat d-d;F]$ is exactly the all-same-sign expectation
multiplied by the sign factor $\gamma=(d\rho_M^2-1)/(d-1)$, so both the main term and the
remainder are projected by this factor. The coefficient $c_F$ is extracted from the all-same-sign
measurement, so the above conclusion about the sign-change layer is unaffected.

Why is the sign-change layer deeper than the probability-equal-weight layer? The equal-weight point
compares the failure probabilities of the two mechanisms, namely $\Phi(-t)\approx\lambda_M$,
which by the Mills ratio gives $t^2\approx(2k-4)\log M-\log\log M$. The sign change of the mean
compares instead the contributions of the two mechanisms to the mean, namely
$s_M\phi(t)\approx\lambda_M$. The factor $A(-t)=\phi(t)/\Phi(-t)$ in the macro conditional mean
cancels the Mills factor exactly, so the sign-change layer lies at
$t^2\approx(2k-3)\log M$, deeper than the equal-weight layer, and without a $\log\log M$ term. On
the sign-change layer the weight of the macro failure is only $O(1/\sqrt{M\log M})$; but the
energy shift that a macro failure produces on the original scale is $O(\sqrt{M\log M})$, and the
product is still enough to cancel the $+2$ of the repeated-edge layer. Thus the mechanism that
dominates the failure probability is not the mechanism that dominates the conditional mean.
Figure~\ref{fig:loadaxis}(a) plots the position of the two layers for $k=5$ at finite size.

The two kinds of sign change must also be distinguished. The sign change of Section~\ref{sec:flip}
fixes the load and varies $k$; the sign change here fixes $k\ge5$ and moves along the load. They
are different phenomena.

The $o(1)$ error of the mixture CDF cannot be multiplied by a growing macro center, so the
conditional mean must be controlled separately on the unnormalized first moment.

Now prove the first display of Proposition~\ref{prop:wm}. Let $N$ be the number of repeated edge
pairs and $D=\{N\ge1\}$.

\begin{proof}
The exact density of the random-pair planting gives $\E[YN]=2\lambda_M$. Let
$R=F\setminus(\mathcal A\cup D)$; there is the graph-by-graph identity
\begin{equation}
Y\mathbf1_F-Y\mathbf1_{\mathcal A}-YN=Y\mathbf1_R-YN\mathbf1_{\mathcal A}-Y(N-1)\mathbf1_{D\setminus \mathcal A}.\label{F:wm-id}
\end{equation}
The absolute first moments of the three terms on the right are estimated one by one below.

\emph{Third-kind failures.} The moment estimate over covering events in the mean expansion of
Appendix~\ref{app:lowload} uses only $d\le AM$, so it holds uniformly on the present moving load:
$\E[|Y|;R]\le C\sqrt M\,M^{-b_k}=o(\lambda_M)$, since $b_k-(k-2)\ge1$. Here the core on $R$ does
not exceed $\varepsilon_0M$ edges, and no fixed subcritical gap is needed.

\emph{Multiplicity of repeated pairs.} The joint conditioning on any two prescribed repeated pairs
involves at most four special labels, and the internal, cross, and external decomposition of the
energy still gives $\E|Y|\le C\sqrt M$; also $\E[N(N-1)]\le\lambda_M^2$, so
$\E[|Y|(N-1);D]\le C\sqrt M\lambda_M^2=o(\lambda_M)$.

\emph{Intersection of the macro layer and the repeated layer.} Let $Q$ be the repeated-planting
distribution of a random pair, whose density is exactly $N/\lambda_M$. Then
\[
\E[|Y|N;\mathcal A]=\lambda_M\E_Q[|Y|;\mathcal A]\le\lambda_M\sqrt{\E_QY^2}\sqrt{Q(\mathcal A)}\le C\lambda_M\sqrt{Mp_M}.
\]
$\E_QY^2=O(M)$ comes from the energy decomposition after repeated planting and the external
unconditional variance of Appendix~\ref{app:lowload}. $Q(\mathcal A)\sim p_M$ comes from the
fixed-planting version of input (C4): take the two identical valid $k$-subsets as finite special
edges, then apply a relative Bonferroni transfer to the remaining ordinary edges. Let
$b=s_Mt_M/\lambda_M$; by $\sqrt{p}/(1+bp)\le1/(2\sqrt b)$,
\[
\frac{\lambda_M\sqrt{Mp_M}}{\lambda_M+s_Mp_Mt_M}\le\frac12\sqrt{\frac{M\lambda_M}{s_Mt_M}}=O\Bigl(M^{(5/2-k)/2}/\sqrt{t_M}\Bigr)=o(1).
\]
When $t_M\to\infty$, $p_MA(-t_M)=\phi(t_M)\approx p_Mt_M$, so the intersection term is
$o(\lambda_M+s_M\phi(t_M))$, and no refined weighted local response is needed.

\emph{Macro main term.} By input (C4) and Proposition~\ref{F:prop:moments},
$\E[Y;\mathcal A]=s_M\Pr(\mathcal A)[-\eta_kA(-t_M)+o(1)]=-s_M\eta_k\phi(t_M)+o(s_M\phi(t_M))$.
The probability needs only relative $o(1)$ accuracy, since the target remainder is
$o(\lambda_M+s_M\phi(t_M))$.

Substituting the three estimates and the main term into \eqref{F:wm-id} gives the first display of
Proposition~\ref{prop:wm}.
\end{proof}

Combining $\Pr(F)=(p_M+\lambda_M)(1+o(1))$ with $c_F\binom d2=\E[Y\mid F]/2$ gives
\begin{equation*}
c_F\binom d2=\frac{\lambda_M-s_M\eta_k\phi(t_M)/2}{p_M+\lambda_M}+o\Bigl(\frac{\lambda_M+s_M\phi(t_M)}{p_M+\lambda_M}\Bigr).
\end{equation*}
For $k\ge5$, $t_M^2-t_*^2\to v$ gives
$s_M\eta_k\phi(t_M)/(2\lambda_M)\to e^{-v/2}$ and $p_M/\lambda_M\to0$, hence
$c_F\binom d2\to1-e^{-v/2}$.

\section{Supplementary experiments}\label{app:exp}
\subsection{Size convergence and mean reversal}\label{sec:convergence}
Figure~\ref{fig:convergence} shows the convergence diagnostics. At $k=5$ and
$z_{\rm sh}\approx-3.5$, the conditional mean over all failure samples goes from about $+0.047$ at $M=256$ to
about $-0.473$ at $M=65536$. At the largest size, the prediction in shifted coordinates is about $-0.475$, and
the standard error of the simulated mean is about $0.015$. In other words, the negative offset at large size
matches the prediction; at small size the offset has the opposite sign, and this reversal disappears as $M$ grows.

The second panel shows the residuals of the macroscopic core conditional layer, referenced against the selection
law in empirical coordinates. Positive residuals appear at both values of $k$ and shrink as the size grows; this
is consistent with the roughly $M^{-1/2}$ contraction direction of the moderate deviation regime in
Appendix~\ref{app:exp-md}.
\begin{figure}[!htbp]
\centering\includegraphics[width=\linewidth]{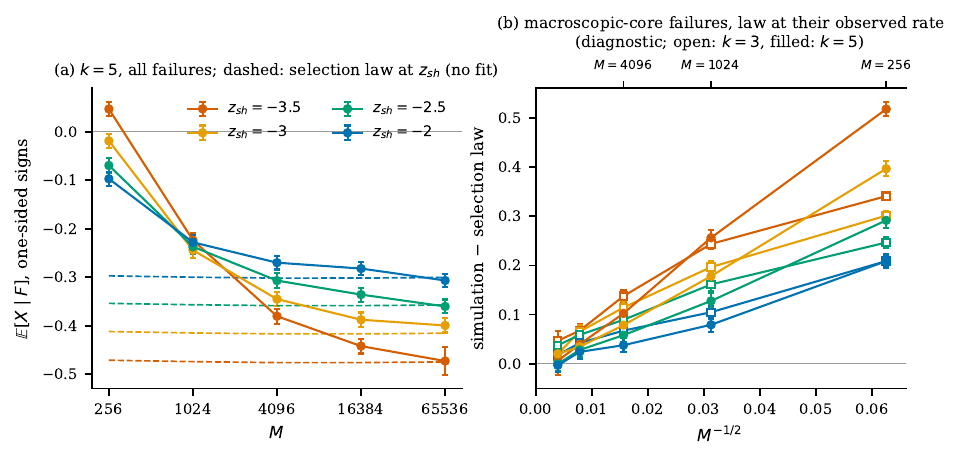}
\caption{Size convergence. (a) The conditional mean over all failure samples at $k=5$ and the analytic prediction of
the selection law; (b) residuals of the macroscopic core layer.}
\label{fig:convergence}
\end{figure}

\subsection{Range where the deep-left-tail approximation fails}
Figure~\ref{fig:leftrisk} shows the risk of the deep left tail. At $k=5$, $M=65536$, and $z_{\rm sh}\approx-3.5$, the
corrected nominal $1\%$ threshold produced 52 lower-tail observations among 4549 failure samples, an actual rate of
$1.14\%$, with a $95\%$ Wilson interval of $[0.87\%,1.50\%]$. The corresponding $k=3$ configuration gives 273 among
9021 failures, namely $3.03\%$, with interval $[2.69\%,3.40\%]$; if only the macroscopic core layer is considered,
the rate is about $0.71\%$.

This shows that at this finite size, small-core failures still affect the measurement distribution over all failure
samples. On the other hand, even when the macroscopic core dominates, calibration can still be off: at $k=5$ and
$M=256$ the deep-left-tail points are almost entirely macroscopic core failures, yet the actual rate of the nominal
$1\%$ threshold is only about $0.31\%$. The accuracy of the deep left tail is determined by two factors together: the
error of the conditional distribution itself at finite size, and the contamination from small-core failures.
\begin{figure}[!htbp]
\centering\includegraphics[width=\linewidth]{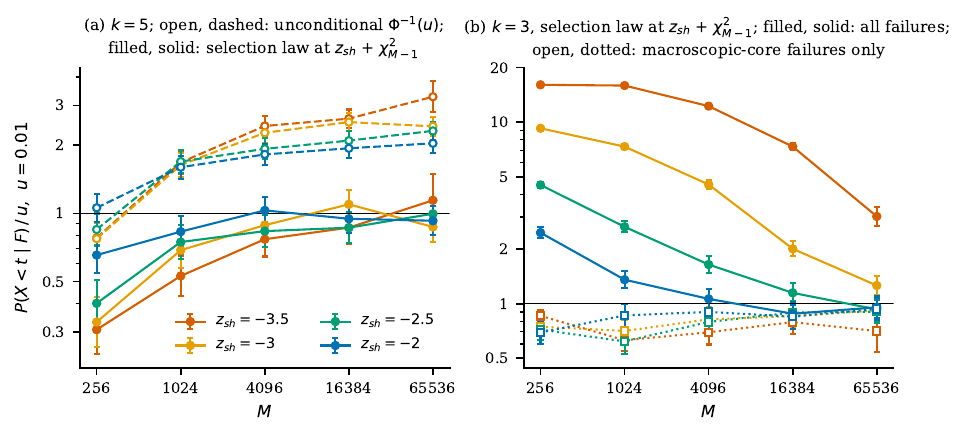}
\caption{deep-left-tail risk. (a) $k=5$, the unconditional threshold and the selection-law threshold; (b) $k=3$, all
failure samples and the macroscopic core layer.}
\label{fig:leftrisk}
\end{figure}
\subsection{The moderate-deviation conditional law}\label{app:exp-md}
\paragraph{The moderate-deviation conditional law.} In the moderate deviation regime, macroscopic failure is already a
rare event, so we use splitting to estimate its probability and conditional distribution. The GPU splitting
sampler used below was compared with direct sampling under criteria fixed in advance: over 10 configurations
with two independent pilots each, covering failure probabilities from $4.9\times10^{-7}$ to $3\times10^{-3}$ and
$M\le65536$, the largest of the 20 deviations is $|z|=2.11$. A CPU implementation of the same splitting scheme
passed an analogous comparison: across 16 configurations, all 80 comparisons satisfy
$|z|\le3.5$, the maximum $|z|$ is $2.66$, and the fraction with $|z|>2.5$ is $3.75\%$.

GPU splitting ($k=5$, all-positive signs, macroscopic event core$>0.1d$, 27 configurations) gives
the difference $\Delta\mu_{\rm sh}$ between the macroscopic-layer conditional mean and the analytic prediction of the
selection law. This difference contracts as $M^{b}$, and the fitted exponents at the three depths
$z_0=-3.5,-4.5,-5.5$ are $-0.540$, $-0.553$, and $-0.525$, with all $99\%$ intervals lying within the pre-fixed range
$[-0.75,-0.25]$. Taking $z_0=-3.5$ as an example, the difference falls from $0.157$ at $M=1024$ to $0.012$ at
$M=65536$. The measured conditional variance lies between $0.98$--$1.04$, while the selection law gives $0.984$; for
$M\ge16384$ the two differ by at most $0.015$. The ratio of the macroscopic probability to the shifted-coordinate
prediction lies between $0.76$--$1.32$, and the largest deviation of the macroscopic weight is $0.067$, all satisfying
the pre-fixed criteria.

In this $k=5$ splitting experiment, the rejection rate of the corrected selection threshold inside the macroscopic
layer is judged by a pre-fixed range $[0.7u,1.3u]$ for its point estimate, which differs from the interval test of
Section~\ref{sec:num-loadaxis}. It is the only criterion that deviates: 20 of the 78 test units fall outside it, all on the conservative
side (ratios $0.21$--$0.69$), concentrated on the near-one-sided compositions with $M\le4096$; for $M\ge32768$ no test
unit falls outside the range. This shows that at finite size the direction of the calibration error of the macroscopic
layer is conservative.

Finally, consider the rejection rate under failure at the true $d$. The mixture rule, the envelope rule, and the
corrected selection threshold have no rejected test unit, with maximum miss$/u$ equal to $1.043$, $1.024$, and
$1.043$. As a contrast, the unconditional threshold reaches up to $3.29$ times; the asymptotic mixture rule reaches up
to $1.72$ times in the crossover region at $k=5$, and does not shrink as $M$ grows.
\subsection{Other candidate rules in the coverage tests}\label{app:exp-e2e-other}
The coverage tests of Section~\ref{sec:e2e} also include two candidate rules, both of which under-cover and are
therefore not used as implementation rules. The first is the asymptotic mixture rule of Appendix~\ref{app:axis},
namely the asymptotic form of Proposition~\ref{prop:axis}, without threshold shift or chi-square correction; its
under-coverage reaches up to 1.2 times the nominal value. The second sets the threshold by $\widehat d$ and switches
between the unconditional normal and corrected selection thresholds. With a cutoff of $-6$ on the theorem coordinate
$z_0$ evaluated at $n=\widehat d$, it uses the unconditional normal threshold below the cutoff and the corrected selection
threshold otherwise; its under-coverage reaches up to 3.24 times at $k=3$. The switching
rules tested reach up to 3.14 times at $k\ge5$. As a contrast, the unconditional threshold reaches up to 3.1 times at
$k\ge5$.

\subsection{Finite-size location of the mean-sign-reversal layer}\label{app:exp-wm}
\paragraph{The mean-sign-reversal layer.} Figure~\ref{fig:loadaxis} examines Proposition~\ref{prop:wm} at finite
size. Panel (a) takes $k=5$ and all-same signs, and draws two layer lines, both in shifted coordinates: one is the
equal-probability layer $\Phi(z_{\rm sh})=\lambda_M$, and the other is the mean-sign-reversal layer
$s_M\eta_k\phi(z_{\rm sh})=2\lambda_M$. The reversal layer is always deeper; at $M=2^{16}$ the two layers sit at
$t=7.66$ and $t=8.29$. Between the two layers, failures mostly come from repeated edges, but the sign of the
conditional mean is still determined by the macroscopic core.

The dots in panel (b) are assembled from two parts: the macroscopic-layer probability and mean given by splitting,
and the exact mean $2/s_M$ of the repeated-edge layer. The two configurations at $M=4096$ cross the reversal layer.
For these two configurations, the repeated-edge contribution is estimated from simulations with a pair of identical edges
inserted, and a joint repeated bootstrap is applied to the two layers, with the following results: at $t=6.67$, $c_F\binom d2=-5.96$
with $95\%$ interval $[-6.17,-5.75]$; at $t=7.17$, $+0.48$ with interval $[+0.39,+0.57]$; the theoretical
sign-reversal point is at $t=7.15$. Small-core failures that are neither the macroscopic core nor repeated pairs can
be controlled by finite combinatorial counting, and their effect at these two points is at most $0.0022$. For
$M\ge16384$, the deepest configuration still lies before the reversal layer, with negative mean. The finite-size
offset of the macroscopic-layer mean relative to the selection law contracts roughly as $M^{-1/2}$, and after
multiplying by $s_M$ it is an $O(1)$ quantity on the original scale, so the dots in (b) lie slightly above the
theoretical curve overall.
\begin{figure}[!htbp]
\centering\includegraphics[width=\linewidth]{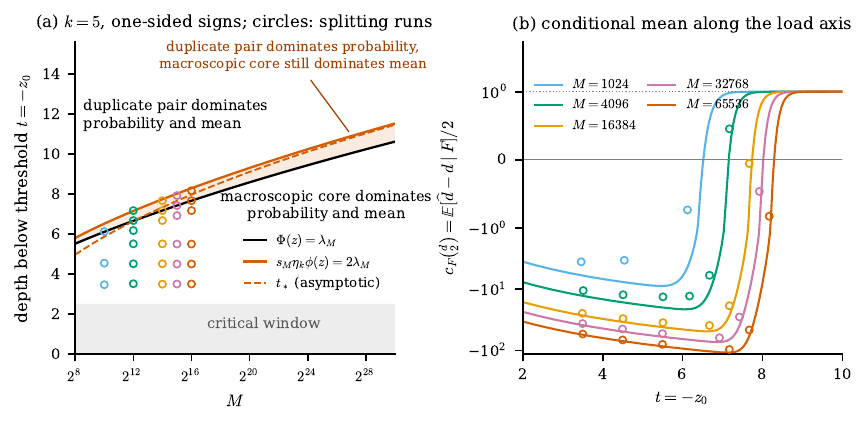}
\caption{Load axis for $k=5$ and all-same signs. (a) The equal-probability layer (black) and the mean-sign-reversal
layer (orange), the dashed line is $t_*$, the dots are splitting configurations; (b) conditional mean on the original
scale, the curve is the theoretical value, the dots are the layer-wise reconstruction.}
\label{fig:loadaxis}
\end{figure}
\subsection{Supplementary checks of the analytic constants}\label{app:exp-constants}
\emph{Window width.} The measured slope $\Delta z/\Delta r$ over failure probabilities from $0.1$ to $0.9$ exceeds the
analytic value $ka_ce^{-x_c}/\sigma_k$ by $3.6\%$ ($k=5$, $M=4096$), $2.3\%$ ($k=5$, $M=16384$), and about $2\%$
($k=3$, $M=262144$), respectively, and the gap shrinks as $M$ grows. This is an independent check of $\sigma_k^2$ and
its dynamic part $\tau_k^2$.

\emph{Dynamic variance.} At $k=3$ and $M=16384$, at the moment at which the score is taken, the measured variance of
the dynamic innovation is $0.00816$, and the partial integral $\tau^2(\varepsilon_M)$ is also $0.00816$; the share held
by the initial state is consistent with $\sigma_0^2=0.01632$.

\emph{Selection strength.} At $k=6$, $M=16384$, and failure probability $0.49$, the measured slope of the Kolmogorov
distance between failure and success samples with respect to $\rho^2$ is $0.1449$, and the first-order prediction is
$0.1397$.

\subsection{Cross-check of an independent implementation}\label{app:exp-indep}
We additionally wrote a C-language sampler that shares no code with the sampler above; at the same time we recomputed
all the constants by direct numerical integration from the definition, and the results agree to numerical precision
with the closed forms. Table~\ref{tab:indep} lists the results for $M=16384$
and $r=0$; each cell is run in eight chunks of 50{,}000 trials with fixed seeds. The theoretical value takes $c=\eta_k(d\rho^2-1)/(d-1)$, and $z$ takes the coordinate corresponding to the
measured failure probability. One can see that the means for $k=3$ and $k=5$ have opposite signs; the offset at
$\rho=0.6$ is about $0.36$ times that at $\rho=1$; and at $\rho=0$ the standard normal is recovered.
\begin{table}[!htbp]
\centering
\caption{Checks of the independent sampler ($M=16384$, $r=0$, 400 thousand trials per cell).}
\label{tab:indep}
\begin{tabular}{lrrrrrr}
\toprule
Configuration & $p_F$ & $\E[X\mid F]$ & theory & $\Var(X\mid F)$ & theory & KS (selection law)\\
\midrule
$k=3$, $\rho=1$ & 0.646 & $+0.2056$ & $+0.2025$ & 0.934 & 0.932 & 0.0022\\
$k=3$, $\rho=0.6$ & 0.646 & $+0.0741$ & $+0.0730$ & 0.986 & 0.991 & 0.0032\\
$k=3$, $\rho=0$ & 0.646 & $+0.0027$ & $0$ & 0.998 & 1.000 & 0.0013\\
$k=5$, $\rho=1$ & 0.657 & $-0.0737$ & $-0.0710$ & 0.991 & 0.991 & 0.0041\\
$k=5$, $\rho=0.6$ & 0.659 & $-0.0266$ & $-0.0254$ & 0.997 & 0.999 & 0.0032\\
$k=5$, $\rho=0$ & 0.659 & $-0.0012$ & $0$ & 1.002 & 1.000 & 0.0040\\
\bottomrule
\end{tabular}
\end{table}
The standard error of the mean is about $0.002$. The remaining deviation is of the same order as the threshold center
shift and the finite-size bias of the window width in Section~\ref{sec:threshold}, and shrinks as $M$ grows.

\bibliographystyle{plain}
\bibliography{references}
\end{document}